\def \A {\mathbf{A}}
\def \a {\mathbf{a}}
\def \B {\mathbf{B}}
\def \D {\mathbf{D}}
\def \d {\mathbf{d}}
\def \e {\mathbf{e}}
\def \I {\mathbf{I}}
\def \m {\mathbf{m}}
\def \R {\mathbf{R}}
\def \s {\mathbf{s}}
\def \S {\mathbf{S}}
\def \T {\mathbf{T}}
\def \U {\mathbf{U}}
\def \u {\mathbf{u}}
\def \v {\mathbf{v}}
\def \V {\mathbf{V}}
\def \W {\mathbf{W}}
\def \x {\mathbf{x}}
\def \X {\mathbf{X}}
\def \Y {\mathbf{Y}}
\def \y {\mathbf{y}}
\def \Dcal {\mathcal{D}}
\def \Ical {\mathcal{I}}
\def \Kcal {\mathcal{K}}
\def \Lcal {\mathcal{L}}
\def \Ncal {\mathcal{N}}
\def \Ocal {\mathcal{O}}
\def \Rcal {\mathcal{R}}
\def \Cbb {\mathbb{C}}
\def \Ebb {\mathbb{E}}
\def \Nbb {\mathbb{N}}
\def \Pbb {\mathbb{P}}
\def \Rbb {\mathbb{R}}
\def \Vbb {\mathbb{V}}
\def \drm {\mathrm{d}}
\def \erm {\mathrm{e}}
\def \irm {\mathrm{i}}
\def \thetabs {\boldsymbol{\theta}}
\def \Gammabs {\boldsymbol{\Gamma}}
\def \Pibs {\boldsymbol{\Pi}}
\def \Sigmabs {\boldsymbol{\Sigma}}
\def \Tr {\mathrm{tr}\,}
\def \diag{\mathrm{diag}}
\DeclareMathOperator{\sinc}{sinc}
\DeclareMathOperator*{\argmin}{argmin}
\DeclareMathOperator*{\argmax}{argmax}
\renewcommand{\Re}{\mathrm{Re}}
\newtheorem{theorem}{Theorem}
\newtheorem{lemma}{Lemma}
\newtheorem{remark}{Remark}
\newtheorem{assum}{Assumption}
\begin{document}
%
\title{Performance analysis of an improved MUSIC DoA estimator}
%
%
%

\author
{
	Pascal~Vallet,~\IEEEmembership{Member,~IEEE,}
	Xavier Mestre,~\IEEEmembership{Senior Member,~IEEE,}
    and~Philippe~Loubaton,~\IEEEmembership{Fellow,~IEEE}
	\thanks
	{
		P. Vallet is with Laboratoire de l'Int\'egration du Mat\'eriau au Syst\`eme (CNRS, Univ. Bordeaux, Bordeaux INP), 
		351, Cours de la Lib\'eration 33405 Talence (France), pascal.vallet@bordeaux-inp.fr
	}
	\thanks
	{
		X. Mestre is with Centre Tecnol\`{o}gic de Telecomunicacions de Catalunya (CTTC), 	Av. Carl Friedrich Gauss 08860 Castelldefels, Barcelona (Spain), 
		xavier.mestre@cttc.cat
	}
	\thanks
	{
		P. Loubaton is with Laboratoire d'Informatique Gaspard Monge (CNRS, Universit\'e Paris-Est/MLV), 5 Bd. Descartes 77454 Marne-la-Vall\'ee (France),
		loubaton@univ-mlv.fr
	}
    \thanks
    {
    	This work was partially supported by the Catalan Government under grant 2014 SGR 1567, the French Projects ANR-12-MONU-0003 DIONISOS and GDR ISIS 
    	Jeunes Chercheurs.
    	The material of this paper was partly presented in the conference papers \cite{mestre2011asymptotic} \cite{vallet2012clt}.
		
    }
}

\maketitle

\begin{abstract}
This paper adresses the statistical performance of subspace DoA estimation using a sensor array, in the asymptotic regime where the number of samples and sensors both converge to infinity at the same rate.  
Improved subspace DoA estimators were derived (termed as G-MUSIC) in previous works, and were shown to be consistent and asymptotically Gaussian distributed in the case where the number of sources and their DoA remain fixed. 
In this case, which models widely spaced DoA scenarios, it is proved in the present paper that the traditional MUSIC method also provides DoA consistent estimates having the same asymptotic variances as the G-MUSIC estimates. 
The case of DoA that are spaced of the order of a beamwidth, which models closely spaced sources, is also considered. 
It is shown that G-MUSIC estimates are still able to consistently separate the sources, while this is no longer the case for the MUSIC ones. 
The asymptotic variances of G-MUSIC estimates are also evaluated.   
\end{abstract}

\begin{IEEEkeywords}
	Subspace DoA estimation, large sensor arrays, random matrix theory
\end{IEEEkeywords}

\section{Introduction}
\label{section:introduction}

\IEEEPARstart{T}{he} problem of estimating the directions of arrival (DoA) of source signals with an array of sensors is fundamental in statistical signal processing, 
and several methods have been developed and characterized in terms of performance, during the past 40 years.
Among the most popular high resolution methods, subspace algorithms such as MUSIC \cite{schmidt1986multiple} are widely used. 
It is well known (see e.g. \cite{thomas1995probability}) that subspace methods suffer the so-called ``threshold effect", which involves a severe degradation when either 
the Signal to Noise Ratio (SNR) and/or the sample size are not large enough. In contrast, the threshold breakdown is less significant for Maximum Likelihood (ML) techniques, and occurs for a much lower SNR and/or sample size. However, due to their reduced complexity since they involve a one-dimensional search over the set of possible DoA, subspace methods are usually prefered over ML which requires a multi-dimensional search.

The study of the statistical performance of MUSIC algorithm has received a lot of attention, see e.g. \cite{stoica1989music}, and its behaviour has been mainly 
characterized in the situation where the number of available samples $N$ of the observed signal is much larger than the number of sensors $M$ of the array. However, there may exist some situations where this hypothesis is not realistic, for example when the number of sensors $M$ is large and the 
signals have short-time duration or short time stationarity. In this case, $M$ and $N$ are of the same order of magnitude, 
and the standard statistical analysis of MUSIC is irrelevant. This is mainly because the sample correlation matrix of the observations, on which MUSIC mainly relies, does not properly estimate the true covariance matrix. In this 
context, the standard estimate of the MUSIC angular ``pseudo-spectrum" does not appear to be consistent. 
To model this more stringent scenario, it was proposed in \cite{mestre2008modified} to consider a new asymptotic regime in which both $M,N$ converges to infinity at the same rate, 
that is 
\begin{center}
	$M,N \to \infty$ such that $\frac{M}{N} \to c > 0$.
\end{center}
Based on results from random matrix theory, giving a precise description of the behaviour of the eigenvalues and eigenvectors of large random matrices, 
an improved MUSIC DoA technique, termed as ``G-MUSIC", was derived in \cite{mestre2008modified} in the unconditional model case, that is, 
by assuming that the source signals are Gaussian and temporally white. 
This method was based on a novel estimator of the ``pseudo-spectrum" function. 
Other related works concerning the unconditional case include \cite{abramovich-mestre-2008} as well as 
\cite{krichtman-nadler-2009} where the source number detection is addressed. 
Later, \cite{vallet2012improved} addressed the more general conditional model case, i.e. the source signals 
are modelled as non observable deterministic signals. Using an approach similar to \cite{mestre2008modified}, a different estimator of the pseudo-spectrum was proposed. 
More recently, the work of \cite{vinogradova2013statistical} extends the improved subspace estimation of \cite{vallet2012improved} to the situation where the noise may be correlated in time. We also mention the recent series of works \cite{couillet2014robust} \cite{couillet2015robust} \cite{couillet2015random} on robust subspace estimation, in the context of impulsive noise.

Experimentally, it can be observed that in certain scenarios, 
MUSIC and G-MUSIC show quite similar performance, while in other contexts G-MUSIC outperforms MUSIC. In this paper which is focused on the conditional case, we explain this behaviour and provide a complete description 
of the statistical performance of MUSIC and G-MUSIC. Roughly speaking, we prove that 
if the DoAs are widely spaced compared to $\frac{1}{M}$, MUSIC and G-MUSIC have a similar behaviour, while MUSIC fails when the DoAs are closely spaced. More precisely, we establish the following results.   
\begin{itemize}
	\item When the number of sources $K$ and the corresponding DoA remain fixed as $M,N \to \infty$ (a regime which models widely spaced sources), we show that,
	while the pseudo-spectrum estimate of MUSIC is inconsistent, its minimization w.r.t. the DoA provides $N$-consistent
\footnote
{
	An estimator $\hat{\theta}_N$ of a (possibly depending on $N,M$) DoA $\theta_N$ is
	defined as $N$-consistent if almost surely, $N\left(\hat{\theta}_N-\theta_N\right) \to 0$ as $M,N \to \infty$.
}
 estimates. Moreover, in the case of asymptotically uncorrelated source signals, the MUSIC
 DoA estimates share the same asymptotic MSE as G-MUSIC. 
	\item For two sources with an angular spacing of the order of a beamwidth, that is $\Ocal(M^{-1})$ as $M,N \to \infty$, we show that G-MUSIC remains $N$-consistent while MUSIC is not $N$-consistent anymore, which means that MUSIC is no longer able to asymptotically separate the DoA. 
\end{itemize}

		\subsection{Problem formulation and previous works}

Let us consider the situation where $K$ narrow-band and far-field source signals are impinging on a uniform linear array of $M$ sensors, with $K < M$.
The received signal at the output of the array is usually modeled as a complex $M$-variate time series $(\y_n)_{n \geq 1}$ given by
\begin{align}
	\y_n = \A \s_n + \v_n,
	\notag
\end{align}
where 
\begin{itemize}
	\item $\A = [\a(\theta_1),\ldots,\a(\theta_K)]$ is the $M \times K$ matrix of steering vectors $\a(\theta_1),\ldots,\a(\theta_K)$, with 
	$\theta_1,\ldots,\theta_K$ the source signals DoA, and $\a(\theta)= \frac{1}{\sqrt{M}}[1,\ldots,\erm^{\irm (M-1) \theta}]^T$ ;
	\item $\s_n \in \Cbb^K$ contains the source signals received at time $n$, considered as unknown deterministic ; 
	\item $(\v_n)_{n \geq 1}$ is a temporally and spatially white circularly symmetric complex Gaussian noise with spatial covariance $\Ebb[\v_n\v_n^*]=\sigma^2 \I$.
\end{itemize}
By assuming that $N$ observations $\y_1,\ldots,\y_N$ are collected in the $M \times N$ matrix
\begin{align}
	\Y_N = [\y_1,\ldots,\y_N] = \A\S_N + \V_N,
	\label{eq:model_signal_N}
\end{align}
with $\S_N=[\s_1,\ldots,\s_N]$ and $\V_N = [\v_1,\ldots,\v_N]$, the DoA estimation problem thus consists in estimating the $K$ DoA $\theta_1,\ldots,\theta_K$ 
from the matrix of samples $\Y_N$. 

Subspace methods are based on the observation that the source contributions $\A \s_1,\ldots,\A\s_N$ are confined in the so-called signal subspace of dimension $K$, defined as 
$\mathrm{span}\left\{\a(\theta_1),\ldots,\a(\theta_K)\right\}$ .
By assuming that the signal sample covariance $N^{-1} \S_N \S_N^*$ is full rank, $\theta_1,\ldots,\theta_K$ are the unique zeros of the pseudo-spectrum
\begin{align}
	\eta(\theta) = \a(\theta)^* \Pibs \a(\theta),
\end{align}
where $\Pibs$ is the orthogonal projection matrix onto the noise subspace, defined as the orthogonal complement of the signal subspace, 
and which coincides in that case with the kernel of $N^{-1} \A \S_N\S_N^* \A^*$ of dimension $M-K$.

Since $\Pibs$ is not available in practice, it must be estimated from the observation matrix $\Y_N$.
This estimation is traditionnaly performed by using the so-called sample correlation matrix of the observations (SCM)
\begin{align}
	\frac{\Y_N\Y_N^*}{N} = \frac{1}{N} \sum_{n=1}^N \y_n\y_n^*,
	\notag
\end{align}
and $\Pibs$ is directly estimated by considering its sample estimate $\hat{\Pibs}_N$, i.e. the corresponding orthogonal projection matrix onto the eigenspace 
associated with the $M-K$ smallest eigenvalues of $\frac{\Y_N\Y_N^*}{N}$. 
The MUSIC method thus consists in estimating the DoA $\theta_1,\ldots,\theta_K$ as the $K$ most significant minima of the estimated pseudo-spectrum
\begin{align}
	\hat{\eta}_N^{(t)}(\theta) = \a(\theta)^* \hat{\Pibs}_N \a(\theta),
	\notag
\end{align}
where the superscript $^{(t)}$ refers to ``traditional estimate".

The SCM is known to be an accurate estimator of the true covariance matrix when the number of available samples $N$ is much larger than the observation 
dimension $M$. Indeed, in the asymptotic regime where $M$ is constant and $N$ converges to infinity, under some technical conditions, 
the law of large numbers ensures that
\begin{align}
	\left\|\frac{\Y_N\Y_N^*}{N} - \left(\A\frac{\S_N\S_N^*}{N}\A^* + \sigma^2 \I\right)\right\| \to 0,
\label{eq:consistency-YY*}
\end{align}
almost surely (a.s.) as $N \to \infty$, where $\|.\|$ stands for the spectral norm. This implies that
\begin{align}
	\left\|\hat{\Pibs}_N - \Pibs\right\| \xrightarrow[N \to \infty]{a.s.} 0 
	\label{eq:scm_consistency_trad}
\end{align}
i.e. the sample projection matrix $\hat{\Pibs}_N$ is a consistent estimator of $\Pibs$. 
Moreover, \eqref{eq:scm_consistency_trad} directly implies the uniform consistency of the traditional pseudo-spectrum estimate
\begin{align}
	\sup_{\theta \in [-\pi,\pi]} \left|\hat{\eta}_N^{(t)}(\theta)-\eta(\theta)\right| \xrightarrow[N \to \infty]{a.s.} 0.
\label{eq:consistency-trad-pseudo-spectrum}
\end{align}
The $K$ MUSIC DoA estimates, defined formally, for $k=1,\ldots,K$, by 
\begin{align}
	\hat{\theta}^{(t)}_{k,N} = \argmin_{\theta \in \Ical_k} \hat{\eta}_N^{(t)}(\theta),
	\notag
\end{align}
where $\Ical_k$ is a compact interval containing $\theta_k$ and such that $\Ical_k \cap \Ical_l = \emptyset$ for $k \neq l$,
are therefore consistent, i.e. 
\begin{align}
	\hat{\theta}_{k,N}^{(t)} \xrightarrow[N \to \infty]{a.s.} \theta_k.
	\notag
\end{align}
Several accurate approximations of the MSE on the MUSIC DoA estimates have been obtained (see e.g. \cite{stoica1989music} and the references therein).

In the situation where $M,N$ are of the same order of magnitude, \eqref{eq:consistency-YY*}, 
and therefore \eqref{eq:scm_consistency_trad} as well as \eqref{eq:consistency-trad-pseudo-spectrum}, are no longer true.
To analyze this situation, \cite{mestre2008modified} proposed to consider the non standard asymptotic regime in which 
\begin{align}
	M,N \to \infty \text{ such that } \frac{M}{N} \to c > 0.
	\label{eq:asymptotic_regime}
\end{align}

In \cite{vallet2012improved}, an estimator $\hat{\eta}_N(\theta)$ of the pseudo-spectrum $\eta(\theta)$ was derived. Under an extra assumption, called the separation condition, it was proved to be consistent in the new asymptotic regime \eqref{eq:asymptotic_regime}, that is
\begin{align}
	\hat{\eta}_N(\theta) - \eta(\theta) \xrightarrow[]{} 0,
	\notag
\end{align}
almost surely, when
\footnote
{
	Note that in that case $\eta(\theta)$ depends on $M$ (and thus implicitely on $N$). 
	In the next sections, a subscript $N$ will be added to make clear this dependence.
}
$M,N \to \infty$ such that $\frac{M}{N} \to c > 0$. In the case where the number of sources $K$ remains 
fixed when $M$ and $N$ increase, the separation condition was shown to hold if the eigenvalues of $\A\frac{\S_N\S_N^*}{N}\A^*$ are above the threshold 
$\sigma^{2} \sqrt{c}$ \cite[Section III-C]{vallet2012improved}.  
Note that a similar estimator was previously derived in \cite{mestre2008modified} in the unconditional source signal case.
A stronger result of uniform convergence over $\theta$ was proved in \cite{hachem2012large}, that is
\begin{align}
	\sup_{\theta \in [-\pi,\pi]} \left|\hat{\eta}_N(\theta)-\eta(\theta)\right| \to 0,
	\notag
\end{align}
almost surely. 
When $K$ and the DoA $(\theta_k)_{k=1, \ldots, K}$ remain fixed, the G-MUSIC DoA estimates, defined for $k=1,\ldots,K$ by $\hat{\theta}_{k,N}= \argmin_{\theta \in \Ical_k}\hat{\eta}_N(\theta)$, were also shown to be $N$-consistent, that is
\begin{align}
	N \left(\hat{\theta}_{k,N} - \theta_k\right) \to 0
	\notag
\end{align}
almost surely, when $M,N \to \infty$ such that $\frac{M}{N} \to c > 0$. 
More recently, \cite{hachem2012subspace} also proposed a second-order analysis of the G-MUSIC DoA estimates (in the conditional case), 
in terms of a Central Limit Theorem (CLT) in the latter asymptotic regime.

The work in \cite{hachem2012subspace} assumes that the source signals are spatially uncorrelated asymptotically, that is $N^{-1}\S_N\S_N^*$ converges to a positive 
diagonal matrix as $N \to \infty$, and both \cite{hachem2012large} and \cite{hachem2012subspace} that the source DoA $\theta_1,\ldots,\theta_K$ are fixed with respect to $M,N$. 
This latter assumption is suitable for practical scenarios in which the source DoA are widely spaced. 
However, for scenarios in which the source DoA are closely spaced, e.g. with an angular separation of the order $\Ocal\left(M^{-1}\right)$), 
the analysis of G-MUSIC provided in \cite{hachem2012large} and \cite{hachem2012subspace} are not relevant anymore.

In this paper, we address a theoretical comparison between the performance of MUSIC and G-MUSIC in the two following scenarios.

In a first scenario, in which the number of sources $K$ and the corresponding DoA $\theta_1,\ldots,\theta_K$ are considered fixed with respect to $M,N$ (referred to as ``widely spaced DoA") and where it is known that G-MUSIC is $N$-consistent, we prove that, while the traditional MUSIC pseudo-spectrum estimate $\hat{\eta}^{(t)}_N(\theta)$ is inconsistent, the  MUSIC algorithm is $N$-consistent and that the two methods exhibit the same asymptotic Gaussian distributions.
We remark that the analysis provided for this scenario allows spatial correlation between the different source signals. 

In a second scenario, we consider $K=2$ spatially uncorrelated source signals with DoA $\theta_{1}$ and $\theta_{2}$ depending on $M,N$ such that their angular separation $\theta_1 - \theta_2 = \Ocal\left(M^{-1}\right)$, when $M,N$ converge to infinity at the rate. 
We show in this context that the G-MUSIC DoA estimates remain $N$-consistent while MUSIC looses its $N$-consistency. 
We also provide in this scenario the asymptotic distribution for the G-MUSIC DoA estimates.

To obtain the asymptotic distribution of G-MUSIC under the two previous scenarios, we rely on a Central Limit Theorem (CLT) which extends the results obtained in \cite{hachem2012subspace} using a different approach, and which allows situations involving spatial correlations between sources and closely spaced DoA. A CLT for the traditional MUSIC DoA estimates is also given in the first scenario using the same technique.
The proofs of these results need the use of large random matrix theory technics, and appear to be quite long and technical. Therefore, we choose to not include them in the present paper. However, the derivations are available on-line at \cite{arxiv_clt}.

		\subsection{Organization and notations}

\textit{Organization of the paper:}
In section \ref{section:review_rmt}, we review some basic random matrix theory results, concerning the asymptotic behaviour of the eigenvalues of the SCM in the case where the number of sources $K$ remains fixed when $M$ and $N$ increase. We then make use of these results to introduce the estimator of any bilinear form of the noise subspace projector $\Pibs$, derived in \cite{vallet2012improved}. 
We also give a Central Limit Theorem (CLT) for this estimator, which will be used in the subsequent sections to derive the asymptotic distribution of the G-MUSIC DoA estimates.
In section \ref{section:consistency}, we prove that MUSIC and G-MUSIC are both $N$-consistent in the scenario where the source DoA are widely spaced. 
However, in a closely spaced DoA scenario, we prove that MUSIC is not $N$-consistent, while G-MUSIC is still $N$-consistent.
Finally, we provide in section \ref{section:clt} an analysis of G-MUSIC and MUSIC DoA estimates in terms of Asymptotic Gaussianity. In particular, it is shown that MUSIC and G-MUSIC exhibit exactly the same asymptotic MSE in the widely spaced DoA scenario 
and for asymptotically uncorrelated source signals.
Some numerical experiments are provided which confirm the accuracy of the predicted performance of both methods.

\textit{Notations:}
For a complex matrix $\A$, we denote by $\A^T, \A^*$ its transpose and its conjugate transpose, and by $\Tr(\A)$ and $\|\A\|$ its trace and spectral norm. 
The identity matrix will be $\I$ and $\e_n$ will refer to a vector having all its components equal to $0$ except the $n$-th equals to $1$.
The notation $\mathrm{span}\{\x_1,\ldots,\x_n\}$ will refer to the vector space generated by $\x_1,\ldots,\x_n$.
The real normal distribution with mean $m$ and variance $\sigma^2$ is denoted $\Ncal_{\Rbb}(\alpha,\sigma^2)$ and the multivariate normal distribution in $\Rbb^k$, with mean $\m$ and covariance $\Gammabs$ is denoted in the same way $\Ncal_{\Rbb^k}(\m,\Gammabs)$. 
A complex random variable $Z = X +\irm Y$ follows the distribution $\Ncal_{\Cbb}(\alpha+\irm\beta,\sigma^2)$ if $X$ and $Y$ are independent with respective distributions 
$\Ncal_{\Rbb}(\alpha, \frac{\sigma^2}{2})$ and $\Ncal_{\Rbb}(\beta, \frac{\sigma^2}{2})$. 
The expectation and variance of a complex random variable $Z$ will be denoted $\Ebb[Z]$ and $\Vbb[Z]$.
For a sequence of random variables $(X_n)_{n \in \Nbb}$ and a random variable $X$, we write
\begin{align}
	X_n \xrightarrow[n\to\infty]{a.s.} X \text{ and } X_n \xrightarrow[n\to\infty]{\Dcal} X
	\notag
\end{align}
when $X_n$ converges respectively with probability one and in distribution to $X$. Finally, $X_n = o_{\Pbb}(1)$ will stand for the convergence of $X_n$ to $0$ in probability, and
$X_n = \Ocal_{\Pbb}(1)$ will stand for tightness (boundedness in probability).

\section{Asymptotic behaviour of the sample eigenvalues and eigenvectors}
\label{section:review_rmt}

In this section, we present some basic results from random matrix theory describing the behaviour of the eigenvalues of the SCM $\frac{\Y_N\Y_N^*}{N}$, in the asymptotic regime where $M,N$ converge to infinity such that $\frac{M}{N} \to c > 0$. 
These results are required to properly introduce the improved subspace estimator of \cite{vallet2012improved}.
To that end, we will work with the following more general model, referred to as ``Information plus Noise" in the literature.

We consider $M,N,K \in \Nbb^*$ such that $K < M$ and $M=M(N)$, is a function of $N$ satisfying 
\footnote
{
	The condition $\sqrt{N}\left(c_N-c\right) \to 0$ is purely technical and is in fact only needed for 
	the validity of Theorems \ref{theorem:clt_subspace}, \ref{theorem:clt_MUSIC} and \ref{theorem:clt_GMUSIC} below.
	}
\begin{equation}
	\label{eq:condition-cN}
	c_N = \frac{M}{N} = c + o\left(\frac{1}{\sqrt{N}}\right)
\end{equation}	
as $N \to \infty$. 
Thus, in the remainder, the notation $N \to \infty$ will refer to the double asymptotic regime $M,N \to \infty$, $M/N \to c > 0$.
We also assume that $K$ is fixed with respect to $N$ (for the general case where $K$ may possibly go to infinity with $N$, see \cite{vallet2012improved}).
We consider the sequence of random matrices $\left(\Sigmabs_N\right)_{N \geq 1}$ of size $M \times N$ where
\footnote
{
	Of course, we retrieve the usual array processing model \eqref{eq:model_signal_N} by
	setting $\Sigmabs_N=N^{-1/2}\Y_N$, $\B_N=N^{-1/2}\A\S_N$ and $\W_N = N^{-1/2}\V_N$.
}
\begin{align}
	\Sigmabs_N = \B_N + \W_N,
	\label{eq:INmodel}
\end{align}
with
\begin{itemize}
 \item $\B_N$ a rank $K$ deterministic matrix satisfying $\sup_{N} \|\B_N\| < \infty$,
 \item $\W_N$ having i.i.d. $\Ncal_{\Cbb}\left(0,\frac{\sigma^2}{N}\right)$ entries .
\end{itemize}
We denote by $\lambda_{1,N} \geq \ldots \geq \lambda_{K,N}$ the non zero eigenvalues of $\B_N\B_N^*$ and by $\u_{1,N},\ldots,\u_{K,N}$ the respective unit norm eigenvectors. $(\u_{k,N})_{k=K+1, \ldots, M}$ are unit norm mutually 
orthogonal vectors of the kernel of $\B_N\B_N^*$. 
Equivalently, $\hat{\lambda}_{1,N}\geq\ldots\geq\hat{\lambda}_{M,N}$ are the eigenvalues of the matrix $\Sigmabs_N\Sigmabs_N^*$ and $\hat{\u}_{1,N},\ldots,\hat{\u}_{M,N}$ the respective unit norm eigenvectors.

	\subsection{The asymptotic spectral distribution of the SCM}
		
Let $\hat{\mu}_N$ be the empirical spectral measure of the matrix $\Sigmabs_N\Sigmabs_N^*$, defined as the random probability measure
\begin{align}
	\hat{\mu}_N = \frac{1}{M} \sum_{k=1}^M \delta_{\hat{\lambda}_{k,N}},
	\notag
\end{align}	
with $\delta_x$ the Dirac measure at point $x$.
The distribution $\hat{\mu}_N$ can be alternatively characterized through its Stieltjes transform defined as
\begin{align}
	\hat{m}_N(z) = \int_{\Rbb} \frac{\drm \hat{\mu}_N(\lambda)}{\lambda - z} = \frac{1}{M} \Tr \left(\Sigmabs_N\Sigmabs_N^* - z \I\right)^{-1}
	\notag
\end{align}
where $\left(\Sigmabs_N\Sigmabs_N^* - z \I\right)^{-1}$ is the resolvent of the matrix $\Sigmabs_N\Sigmabs_N^*$.

It is well-known from \cite{marchenko1967distribution} that for all $z \in \Cbb \backslash \Rbb$,
\begin{align}
	\hat{m}_N(z) \xrightarrow[N \to \infty]{a.s.} m(z),
	\label{eq:conv_stieltjes}
\end{align}
where 
\begin{align}
	m(z) = \int_{\Rbb} \frac{\drm \mu(\lambda)}{\lambda - z}
	\notag
\end{align}
is the Stieltjes of a deterministic probability measure called the Marchenko-Pastur distribution, whose support coincides with the compact interval 
$[\sigma^2 (1-\sqrt{c})^2, \sigma^2 (1+\sqrt{c})^2]$, and which is defined by
\begin{align}
	&\drm \mu(x) = \notag\\
	&\qquad\left(1 - \frac{1}{c}\right)^{+} \delta_{0}
	+
	\frac{\sqrt{\left(x - x^-\right)\left(x^+ - x\right)}}{2 \sigma^2 c \pi x} \mathbb{1}_{[x^-,x^+]}(x) \drm x.
	\notag
\end{align}
with $x^- = \sigma^2 (1-\sqrt{c})^2$ and $x^+ = \sigma^2 (1+\sqrt{c})^2$.

Moreover, $m(z)$ satisfies the following fundamental equation 
\begin{align}
	m(z) = \frac{1}{-z\left(1+\sigma^2 c m(z)\right) + \sigma^2 (1-c)}.
	\label{eq:fundamental_equation_m}
\end{align}
An equivalent statement of \eqref{eq:conv_stieltjes} is given with the following convergence  in distribution
\begin{align}
	\hat{\mu}_N \xrightarrow[N \to \infty]{\Dcal} \mu
	\notag
\end{align}
which holds almost surely, that is, the empirical eigenvalue distribution of $\Sigmabs_N\Sigmabs_N^*$ has the same asymptotic behaviour as the Marchenko-Pastur distribution.
Practically, the eigenvalue histogram of $\Sigmabs_N\Sigmabs_N^*$ matches the density of the Marchenko-Pastur distribution, for $M,N$ large enough, as shown in Figure \ref{fig:MP}, where
we have chosen $M=1000$, $N=2000$, $\sigma^2=1$ and $K=2$ with $\lambda_{1,N}=5$ and $\lambda_{2,N}=10$.

\begin{figure}[!h]
	\centering
	\includegraphics[scale=0.4]{./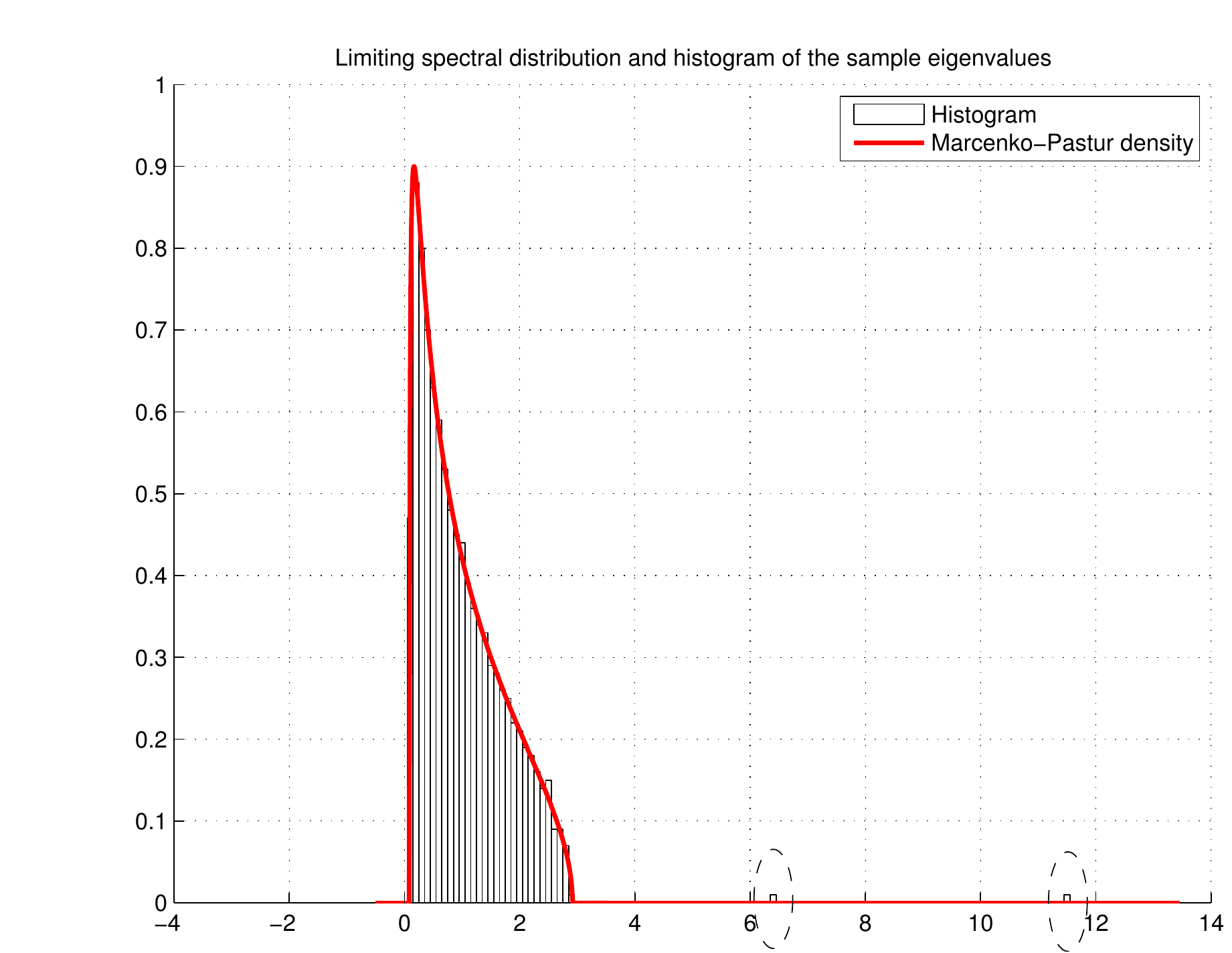}
	\caption{Marchenko-Pastur distribution and eigenvalue histogram of $\Sigmabs_N\Sigmabs_N^*$}
	\label{fig:MP}
\end{figure}

\begin{remark}
	The Marchenko-Pastur distribution was originally obtained as the limit distribution of the empirical eigenvalue distribution of the noise part $\W_N\W_N^*$.
	Nevertheless, the assumption that the rank $K$ of the deterministic perturbation $\B_N$ is independent of $N$ implies that the Marchenko-Pastur limit still holds for 
	$\Sigmabs_N\Sigmabs_N^*$. This fact is well known, and can be easily seen by expressing $\hat{m}_N(z)$ in terms of the Stieltjes transform of the spectral distribution of $\W_N\W_N^*$.
	Finite rank perturbations of $\W_N$ are often referred to as ``spiked models" in the random matrix literature \cite{benaych2012singular}.
\end{remark}

	\subsection{Asymptotic behaviour of the sample eigenvalues}
	
As also noticed in Figure \ref{fig:MP}, the non zero eigenvalues $\lambda_{1,N}$ and $\lambda_{2,N}$ of $\B_N\B_N^*$ generate two outliers $\hat{\lambda}_{1,N}$, $\hat{\lambda}_{2,N}$
in the spectrum of $\Sigmabs_N\Sigmabs_N^*$, in the sense that $\hat{\lambda}_{1,N}$, $\hat{\lambda}_{2,N}$ are outside the support $[x^-,x^+]$ of the Marchenko-Pastur distributions,
while all the remaining eigenvalues $\hat{\lambda}_{3,N},\ldots,\hat{\lambda}_{M,N}$ concentrate around $[x^-,x^+]$.

In fact, under an additional condition on the non zero eigenvalues $\lambda_{1,N},\ldots,\lambda_{K,N}$, it is possible to characterize the behaviour of the $K$ largest sample eigenvalues $\hat{\lambda}_{1,N},\ldots,\hat{\lambda}_{K,N}$.
The following assumption, usually referred to as \textit{subspace separation condition}, ensures that the $K$ non zero eigenvalues of $\B_N\B_N^*$ are sufficiently separated from 
the $M-K$ zero eigenvalues.

\begin{assum}
	\label{assumption:spiked}
	For $k=1,\ldots,K$, $\lambda_{k,N} \to \lambda_k$ as $N \to \infty$, where 
	\begin{align}
		\lambda_{1} > \ldots > \lambda_{K} > \sigma^2 \sqrt{c}.
		\notag
	\end{align}
\end{assum}

We note that forthcoming results remain valid if some $(\lambda_{k})_{k=1, \ldots, K}$ 
coincide. We assume that $\lambda_{k} \neq \lambda_{l}$ for $k \neq l$ in order to simplify the presentation. 
Under the previous assumption, an accurate description of the behaviour of the eigenvalues of $\Sigmabs_N\Sigmabs_N^*$ can be obtained.

\begin{theorem}
	\label{theorem:spiked_eig}
	Under Assumption \ref{assumption:spiked}, for $k=1,\ldots,K$,
	\begin{align}
		\hat{\lambda}_{k,N} \xrightarrow[N\to\infty]{a.s.} \phi(\lambda_k) = \frac{(\lambda_k + \sigma^2)(\lambda_k + \sigma^2 c)}{\lambda_k}.
		\notag
	\end{align}
	with $\phi(\lambda_k) > x^+$. Moreover, for all $\epsilon > 0$,
	\begin{align}
		\hat{\lambda}_{K+1,N},\ldots,\hat{\lambda}_{M,N} \in \left(x^- - \epsilon, x^+ + \epsilon\right),
		\notag
	\end{align}
	almost surely for $N$ large enough.
\end{theorem}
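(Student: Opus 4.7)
The plan is to split the argument into two steps: a \emph{bulk} step, which bounds the number of eigenvalues that can escape a neighbourhood of $[x^-, x^+]$, and an \emph{outlier} step, based on a low-rank master equation, which pins down the exact limits of the top $K$ eigenvalues.

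\textbf{Bulk step.} Since $\B_N$ has rank $K$, the matrix $\Sigmabs_N\Sigmabs_N^* - \W_N\W_N^* = \B_N\B_N^* + \B_N\W_N^* + \W_N\B_N^*$ has rank at most $2K$. The min-max principle for finite-rank Hermitian perturbations then gives, for every admissible $i$,
\[
  \lambda_{i+2K}(\W_N\W_N^*) \leq \hat{\lambda}_{i,N} \leq \lambda_{i-2K}(\W_N\W_N^*).
\]
Combined with the Bai--Yin theorem for the pure Wishart matrix, which ensures $\lambda_1(\W_N\W_N^*) \to x^+$ and $\lambda_M(\W_N\W_N^*) \to x^-$ almost surely, this already confines the indices $2K+1,\dots,M-2K$ of $\Sigmabs_N\Sigmabs_N^*$ inside $(x^- - \epsilon, x^+ + \epsilon)$. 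The finitely many remaining edge indices $K+1,\dots,2K$ (upper) and $M-2K+1,\dots,M$ (lower) will be disposed of at the end, once the outlier analysis has shown that the limit master equation admits exactly $K$ roots, all above $x^+$.

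\textbf{Outlier step.} Using the thin SVD $\B_N = \U_{K,N}\Lambdabs_N^{1/2}\V_{K,N}^*$ with $\Lambdabs_N = \diag(\lambda_{1,N},\dots,\lambda_{K,N})$, one writes
\[
  \Sigmabs_N\Sigmabs_N^* - z\I = (\W_N\W_N^* - z\I) + [\U_{K,N},\,\W_N\V_{K,N}]\,\Thetabs_N\,[\U_{K,N},\,\W_N\V_{K,N}]^*, \quad \Thetabs_N = \begin{pmatrix}\Lambdabs_N & \Lambdabs_N^{1/2} \\ \Lambdabs_N^{1/2} & 0\end{pmatrix}.
\]
By Sylvester's determinant identity, any $z \notin \spec(\W_N\W_N^*)$ is an eigenvalue of $\Sigmabs_N\Sigmabs_N^*$ iff $\det(\I_{2K} + \Thetabs_N\H_N(z)) = 0$, where $\H_N(z)$ is the $2K\times 2K$ matrix of bilinear forms of the Wishart resolvent $\Q_N(z) := (\W_N\W_N^* - z\I)^{-1}$ on the columns of $\U_{K,N}$ and $\W_N\V_{K,N}$. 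Standard almost-sure concentration of bilinear forms of the Gaussian Wishart resolvent, uniform on compact subsets of $\Cbb\setminus[x^-,x^+]$, yields
\[
  \U_{K,N}^*\Q_N(z)\U_{K,N} \to m(z)\I_K,\quad \V_{K,N}^*\W_N^*\Q_N(z)\W_N\V_{K,N} \to (1+z\tilde m(z))\I_K,\quad \U_{K,N}^*\Q_N(z)\W_N\V_{K,N} \to 0,
\]
with $\tilde m(z) = cm(z) - (1-c)/z$. A Schur-complement reduction factorises the limit master equation as $\det(\I_K - z m(z)\tilde m(z)\diag(\lambda_1,\dots,\lambda_K)) = 0$; substituting the Marchenko--Pastur fixed-point relation \eqref{eq:fundamental_equation_m}, each scalar root condition $z m(z)\tilde m(z) = 1/\lambda_k$ reduces to $z = \phi(\lambda_k)$. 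A Rouché argument on the $2K\times 2K$ analytic determinant then transfers the convergence of the master equation into the almost sure convergence $\hat{\lambda}_{k,N} \to \phi(\lambda_k)$ for $k=1,\dots,K$, and simultaneously rules out additional outliers, closing the edge indices left open in the bulk step.

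\textbf{Main obstacle.} The most delicate ingredient is the almost sure convergence of the ``mixed'' bilinear forms $\U_{K,N}^*\Q_N(z)\W_N\V_{K,N}$, which couple the resolvent with the noise matrix and are not covered by standard Hanson--Wright-type quadratic concentration. The natural route is a Gaussian integration-by-parts (Stein) identity to show that the expectation vanishes---the sign-flip $\W_N\mapsto -\W_N$ already makes this plausible---together with a martingale-difference variance bound on the columns of $\W_N$, upgraded to almost sure convergence by Borel--Cantelli.
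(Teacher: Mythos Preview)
The paper does not give its own proof of Theorem~\ref{theorem:spiked_eig}: immediately after the statement it simply records that the result ``is a consequence of the general results proved in \cite{benaych2012singular} (see also \cite{loubaton2011almost} for a different, but less general, proof).'' Your sketch is correct and is precisely the low-rank determinant/master-equation argument of \cite{benaych2012singular}, so there is nothing to compare against beyond noting that you have reconstructed the very proof the paper defers to. One minor remark on your ``main obstacle'': in the present Gaussian setting the mixed block $\U_{K,N}^*\Q_N(z)\W_N\V_{K,N}$ is less delicate than you suggest, since bi-unitary invariance of $\W_N$ lets you take $\U_{K,N}$ and $\V_{K,N}$ to be canonical, after which a single Sherman--Morrison extraction of the relevant column of $\W_N$ reduces the term to a bilinear form in an independent Gaussian vector, handled by the same concentration as the diagonal blocks.
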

Theorem \ref{theorem:spiked_eig} is a consequence of the general results proved in \cite{benaych2012singular} (see also \cite{loubaton2011almost} for a different, but less general, proof).
Rephrased in another way, under the separation condition, the $K$ largest eigenvalues of $\Sigmabs_N\Sigmabs_N^*$ escape from the support of the Marchenko-Pastur distribution while the smallest $M-K$ eigenvalues are concentrated in a neighborhood of $[x^-,x^+]$.

\begin{remark}
	Theorem \ref{theorem:spiked_eig} in conjunction with Assumption \ref{assumption:spiked} have a nice interpretation (see e.g. \cite{nadakuditi-edelman-2008} and \cite{bianchi-et-al-2011} in the conditional case). 
	Indeed, we notice that the separation condition can be interpreted as a detectability threshold on the SNR condition, if we define the SNR to be the ratio 
	$\frac{\lambda_K}{\sigma^2}$.
	Therefore, Theorem \ref{theorem:spiked_eig} ensures that the $K$ ``signal sample eigenvalues" $\hat{\lambda}_{1,N},\ldots,\hat{\lambda}_{K,N}$ will be detectable in the sense that
	they will split from the $M-K$ ``noise sample eigenvalues" as $N \to \infty$, as long as the SNR is above $\sqrt{c}$.
\end{remark}

	\subsection{Estimation of the signal subspace}

In this section, we introduce a consistent estimator of any bilinear form of the noise subspace orthogonal projection matrix, which was derived in \cite{hachem2012subspace} (see also \cite{vallet2012improved}).
Let us introduce the function 
\begin{align}
	w(z) = z \left(1+\sigma^2 c m(z)\right)^2 - \sigma^2 (1-c)\left(1+\sigma^2 c m(z)\right).
	\notag
\end{align}
From the fixed point equation \eqref{eq:fundamental_equation_m}, straightforward algebra leads to the new equation
\begin{align}
	\phi\left(w(z)\right) = z,
\end{align}
and one can see easily that the function $\lambda \mapsto \phi(\lambda)$ is a one to one correspondence from $\left(\sigma^2 \sqrt{c},+\infty\right)$ onto $\left(\sigma^2 (1+\sqrt{c})^2,+\infty\right)$ with inverse function $x \mapsto w(x)$ defined on the
interval $\left(\sigma^2 (1+\sqrt{c})^2,+\infty\right)$ (see \cite{arxiv_clt}).

The following fundamental result was proved in \cite{benaych2012singular} (see also \cite{hachem2012subspace}).
\begin{theorem}
	\label{theorem:spiked_eigv}
	Under Assumption \ref{assumption:spiked}, for all deterministic sequences of unit norm vectors $(\d_{1,N})$, $(\d_{2,N})$, we have for $k=1,\ldots,K$
	\begin{align}
		&\d_{1,N}^* \hat{\u}_{k,N} \hat{\u}_{k,N}^* \d_{2,N} = 
		\notag \\
		&\qquad\qquad h\left(\phi(\lambda_k)\right) \d_{1,N}^* \u_{k,N} \u_{k,N}^* \d_{2,N} + o(1) \quad a.s.,
		\notag
	\end{align}
	where 
	\begin{align}
		h(z)= \frac{w(z)^2 - \sigma^4 c}{w(z)\left(w(z)+\sigma^2 c\right)}.
		\notag
	\end{align}
\end{theorem}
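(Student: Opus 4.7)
The plan is to represent the rank-one projector $\hat{\u}_{k,N}\hat{\u}_{k,N}^*$ as a Cauchy contour integral of the resolvent, identify a deterministic equivalent for the resolvent bilinear form in a fixed neighborhood of $\phi(\lambda_k)$, and then isolate the desired limit by a residue computation.

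First, I would exploit the spectral separation provided by Theorem \ref{theorem:spiked_eig}. Under Assumption \ref{assumption:spiked}, the values $\phi(\lambda_1)>\ldots>\phi(\lambda_K)$ are pairwise distinct and strictly larger than $x^+$, while the remaining eigenvalues of $\Sigmabs_N\Sigmabs_N^*$ concentrate inside $[x^-,x^+]$. For each $k$ I can therefore pick a fixed positively oriented small circle $\Ccal_k\subset\Cbb$ enclosing $\phi(\lambda_k)$ and disjoint from both $[x^-,x^+]$ and the remaining $\phi(\lambda_j)$. For $N$ large enough, $\Ccal_k$ almost surely encloses only $\hat{\lambda}_{k,N}$ among the eigenvalues of $\Sigmabs_N\Sigmabs_N^*$, and Cauchy's formula gives
\begin{align}
\d_{1,N}^*\hat{\u}_{k,N}\hat{\u}_{k,N}^*\d_{2,N} = -\frac{1}{2\pi\irm}\oint_{\Ccal_k}\d_{1,N}^*\Q_N(z)\d_{2,N}\,\drm z,\notag
\end{align}
where $\Q_N(z):=(\Sigmabs_N\Sigmabs_N^*-z\I)^{-1}$. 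The problem thus reduces to finding an a.s. limit of $\d_{1,N}^*\Q_N(z)\d_{2,N}$, uniformly in $z\in\Ccal_k$.

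Second, I would exploit the rank-$K$ structure of $\B_N$. Writing its SVD as $\B_N=\U_K\Lambdabs_K^{1/2}\V_K^*$ with $\U_K:=[\u_{1,N},\ldots,\u_{K,N}]$ and $\V_K$ a suitable $N\times K$ isometry, the matrix $\Sigmabs_N\Sigmabs_N^*-\W_N\W_N^*$ has rank at most $2K$, so the Sherman--Morrison--Woodbury identity yields
\begin{align}
\Q_N(z) = \tilde{\Q}_N(z) - \tilde{\Q}_N(z)\,\X_N\,\M_N(z)^{-1}\,\X_N^*\,\tilde{\Q}_N(z),\notag
\end{align}
with $\tilde{\Q}_N(z):=(\W_N\W_N^*-z\I)^{-1}$, $\X_N:=[\U_K,\W_N\V_K]$ and $\M_N(z)$ an explicit $2K\times 2K$ matrix built from $\Lambdabs_K$ and blocks of $\X_N^*\tilde{\Q}_N(z)\X_N$. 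Everything is now expressed through bilinear forms of the pure noise resolvent $\tilde{\Q}_N(z)$, to which classical deterministic equivalents apply: for any bounded deterministic sequences,
\begin{align}
\d_{1,N}^*\tilde{\Q}_N(z)\d_{2,N} - m(z)\,\d_{1,N}^*\d_{2,N} \xrightarrow[N\to\infty]{a.s.} 0,\notag
\end{align}
uniformly on $\Ccal_k$, with analogous explicit equivalents (via Gaussian integration by parts or Silverstein--Bai arguments) for the cross terms $\d_{1,N}^*\tilde{\Q}_N(z)\W_N\v_{j,N}$ and the quadratic forms $\v_{i,N}^*\W_N^*\tilde{\Q}_N(z)\W_N\v_{j,N}$, all expressible through $m(z)$, $c$ and $\sigma^2$.

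Substituting these equivalents into the Woodbury expression, the bilinear form $\d_{1,N}^*\Q_N(z)\d_{2,N}$ becomes a.s. asymptotically equal to
\begin{align}
m(z)\,\d_{1,N}^*\d_{2,N} + \sum_{j=1}^K f(z,\lambda_j)\,\d_{1,N}^*\u_{j,N}\u_{j,N}^*\d_{2,N},\notag
\end{align}
where $f(z,\lambda_j)$ is a rational function of $m(z)$ having a simple pole exactly at $z=\phi(\lambda_j)$. Integrating along $\Ccal_k$ annihilates the holomorphic contribution $m(z)\,\d_{1,N}^*\d_{2,N}$ and all summands with $j\neq k$, leaving the residue of $f(z,\lambda_k)$ at $\phi(\lambda_k)$ multiplying $\d_{1,N}^*\u_{k,N}\u_{k,N}^*\d_{2,N}$. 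Using the inversion $\phi(w(z))=z$ together with \eqref{eq:fundamental_equation_m}, this residue collapses to $h(\phi(\lambda_k))$, and uniform a.s. convergence on the compact contour legitimates exchanging limit and integral. The hard part is precisely this last algebraic step: the matrix $\M_N(z)^{-1}$ a priori generates lengthy rational expressions in $m(z)$, and showing that the residue at $\phi(\lambda_k)$ collapses to the compact closed form $h(z)=(w(z)^2-\sigma^4 c)/(w(z)(w(z)+\sigma^2 c))$ demands careful manipulation of the relations between $m$, $w$ and $\phi$.
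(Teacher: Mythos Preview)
The paper does not supply its own proof of this theorem: immediately before the statement it writes ``The following fundamental result was proved in \cite{benaych2012singular} (see also \cite{hachem2012subspace}),'' and no argument is given. So there is no in-paper proof to compare against; your proposal is an actual proof sketch where the paper offers only a citation.

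Your approach is sound and is essentially the one used in the cited references. The contour-integral representation of the spike projector via Theorem~\ref{theorem:spiked_eig}, the rank-$2K$ Woodbury reduction of $\Q_N(z)$ to the pure-noise resolvent $\tilde{\Q}_N(z)$, and the use of deterministic equivalents for bilinear/quadratic forms of $\tilde{\Q}_N(z)$ and $\tilde{\Q}_N(z)\W_N$ uniformly on a compact contour away from $[x^-,x^+]$ are exactly the ingredients of the Hachem et al.\ derivation. One point worth flagging: the cross terms $\d_{1,N}^*\tilde{\Q}_N(z)\W_N\v_{j,N}$ involve the random matrix $\W_N$ on both sides, so they are not covered by the scalar equivalent $\d_{1,N}^*\tilde{\Q}_N(z)\d_{2,N}\to m(z)\d_{1,N}^*\d_{2,N}$ you quote; you need the companion equivalents for $\tilde{\Q}_N(z)\W_N$ and $\W_N^*\tilde{\Q}_N(z)\W_N$ (expressible through $m(z)$ and the co-resolvent Stieltjes transform), which are standard but should be stated explicitly. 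With those in hand, the residue computation indeed collapses to $h(\phi(\lambda_k))$ via $w(\phi(\lambda_k))=\lambda_k$.
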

Since the function $\phi$ is the inverse of the function $w$, we obtain an explicit expression for $h\left(\phi(\lambda_k)\right)$ :
\begin{align}
	h\left(\phi(\lambda_k)\right) = \frac{\lambda_k^2 - \sigma^4 c}{\lambda_k\left(\lambda_k+\sigma^2 c\right)}.
	\notag
\end{align}
Define the following bilinear form of the noise subspace orthogonal projection matrix:
\begin{align}
	\eta_N = \d_{1,N}^* \Pibs_N \d_{2,N},
	\label{eq:eta}
\end{align}
as well as its traditional estimate
\begin{align}
	\hat{\eta}_N^{(t)} = \d_{1,N}^* \hat{\Pibs}_N \d_{2,N}.
	\label{eq:eta_hat_t}
\end{align}
Then Theorem \ref{theorem:spiked_eigv} shows in particular that
\begin{align}
	\hat{\eta}_N^{(t)} = \d_{1,N}^* \left(\I - \sum_{k=1}^K h\left(\phi(\lambda_k)\right) \u_{k,N}\u_{k,N}^*\right) \d_{2,N} + o(1),
\end{align}
a.s., which implies that the traditional subspace estimate is not consistent.

Moreover, Theorem \ref{theorem:spiked_eig} in conjunction with Theorem \ref{theorem:spiked_eigv} directly provides a consistent estimator of \eqref{eq:eta}.
Indeed, under Assumption \ref{assumption:spiked},
\begin{align}
	\hat{\eta}_N - \eta_N \xrightarrow[N\to\infty]{a.s.} 0,
	\label{eq:consistency_subspace_estimator}
\end{align}
where
\begin{align}
	\hat{\eta}_N = \d_{1,N}^* \left(\I - \sum_{k=1}^K \frac{1}{h\left(\hat{\lambda}_{k,N} \right)} \hat{\u}_{k,N} \hat{\u}_{k,N}^*\right) \d_{2,N}.
	\label{eq:noise_subspace_estimator}
\end{align}

\begin{remark}
	It should be noticed that the estimator given in \eqref{eq:noise_subspace_estimator} provides in particular a consistent estimator of any $(i,j)-th$ 
	entry of $\Pibs_N$, by choosing $\d_{1,N}=\e_i$ and $\d_{2,N}=\e_j$.
	However, \eqref{eq:consistency_subspace_estimator} does not imply that we have a norm-consistent estimator of $\Pibs_N$, in the sense that 
	\begin{align}
		\left\|\Pibs_N - \left(\I - \sum_{k=1}^K \frac{1}{h\left(\hat{\lambda}_{k,N} \right)} \hat{\u}_{k,N} \hat{\u}_{k,N}^*\right)\right\|
		\notag
	\end{align}
	does not necessarily converge to $0$ as $N \to \infty$. 
\end{remark}

%
%

\begin{figure*}[!t]
\normalsize
\begin{align}
\chi_{k,\ell}^{(t)}
	&= 
	\lambda_k \lambda_{\ell} \left(\lambda_k \lambda_{\ell} + \sigma^2(\lambda_k + \lambda_{\ell}) + \sigma^4\right)
	\left((1+c) (\lambda_k\lambda_{\ell}+\sigma^4 c) + 2 \sigma^2 c (\lambda_k + \lambda_{\ell})\right)
	\notag\\
	&\qquad
	- c \left(\lambda_k\lambda_{\ell}-\sigma^4 c\right)\left(\lambda_k\lambda_{\ell} + \sigma^2 (\lambda_k+\lambda_{\ell}) + \sigma^4 c\right)^2.
	\notag
\end{align}
\hrulefill
\vspace*{4pt}
\end{figure*}

%
%

A result concerning the asymptotic Gaussianity of the estimator $\hat{\eta}_N$ can be also derived. 
Let $\vartheta_{k,\ell}$ be defined under Assumption \ref{assumption:spiked} by
\begin{align}
	\vartheta_{k,\ell}=
	\frac
	{\sigma^4 c \left(\lambda_{k} \lambda_{\ell} + (\lambda_{k}+\lambda_{\ell})\sigma^2 + \sigma^4\right)\left(\lambda_{k} \lambda_{\ell} + \sigma^4 c\right)}
	{4\left(\lambda_{k}^2 - \sigma^4 c\right)\left(\lambda_{\ell}^2 - \sigma^4 c\right)\left(\lambda_{k}\lambda_{\ell} - \sigma^4 c\right)}
	\notag
\end{align}
for $1 \leq k,\ell \leq K$, and by
\begin{align}
	\vartheta_{k,\ell}= \frac{\sigma^2\left(\lambda_{k} + \sigma^2\right)}{4 \left(\lambda_{k}^2 - \sigma^4 c\right)}
	\notag
\end{align}
for $k \leq K, \ell \geq K+1$, with $\vartheta_{k,\ell}= \vartheta_{\ell,k}$, and set $\vartheta_{k,\ell} = 0$ for $k,\ell \geq K+1$.
Define finally
\begin{align}
	\gamma_N = \sum_{k,\ell=1}^M \vartheta_{k,\ell} \left|\u_{k,N}^* \left(\d_{1,N}\d_{2,N}^* + \d_{2,N}\d_{1,N}^*\right) \u_{\ell,N}\right|^2.
	\label{def:Gamma}
\end{align}
We then have the following result.
\begin{theorem}
	\label{theorem:clt_subspace}
	Under Assumption \ref{assumption:spiked}, if $\liminf_N \gamma_N > 0$, then
	\begin{align}
		\sqrt{N} \frac{\Re\left(\hat{\eta}_N - \eta_N\right)}{\sqrt{\gamma_N}} \xrightarrow[N \to \infty]{\Dcal} \Ncal_{\Rbb}\left(0, 1\right).
		\label{eq:conv_distrib_eta}
	\end{align}
\end{theorem}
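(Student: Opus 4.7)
The plan is to reduce the statement to a process-level CLT for the bilinear form $\d_{1,N}^*\Q_N(z)\d_{2,N}$ of the resolvent $\Q_N(z):=(\Sigmabs_N\Sigmabs_N^*-z\I)^{-1}$, centered at its deterministic equivalent, and then to recover the target via a contour integral. By Theorem \ref{theorem:spiked_eig} I can fix a closed rectifiable contour $\Gamma\subset\Cbb$ enclosing $\phi(\lambda_1),\ldots,\phi(\lambda_K)$ and staying uniformly bounded away from $[x^-,x^+]$; then, almost surely for $N$ large enough, $\Gamma$ encloses exactly the outlier sample eigenvalues $\hat{\lambda}_{1,N},\ldots,\hat{\lambda}_{K,N}$ and no other eigenvalue of $\Sigmabs_N\Sigmabs_N^*$. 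Applying Cauchy's residue theorem to the spectral decomposition of $\Q_N$, and similarly to the deterministic equivalent $\T_N(z)$ of $\Q_N(z)$ -- known from the Information-plus-Noise random matrix literature to be rational in $m(z)$ and to carry simple poles at $z=\phi(\lambda_k)$ with residues reproducing $\u_{k,N}\u_{k,N}^*$ up to the factor $h(\phi(\lambda_k))$ (which is precisely the mechanism behind Theorem \ref{theorem:spiked_eigv}) -- gives
\begin{align}
	\sqrt{N}(\hat{\eta}_N-\eta_N)
	= -\frac{1}{2\pi\irm}\oint_\Gamma \frac{\Delta_N(z)}{h(z)}\,\drm z,
	\quad
	\Delta_N(z):=\sqrt{N}\,\d_{1,N}^*\bigl(\Q_N(z)-\T_N(z)\bigr)\d_{2,N}. \notag
\end{align}

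The bulk of the work is then to prove that $\Delta_N$ converges, as a random function on $\Gamma$, to a centered complex Gaussian process. Exploiting the Gaussianity of $\W_N$, I would combine the integration-by-parts formula with the Nash--Poincar\'e inequality: the former produces an approximate fixed-point equation for $\Ebb[\d_{1,N}^*\Q_N(z)\d_{2,N}]$ and, when applied twice to a product of two resolvent bilinear forms, identifies an explicit covariance kernel $\Kcal(z_1,z_2)$ that is rational in $z_1,z_2,m(z_1),m(z_2)$ and depends on the eigenvectors $(\u_{k,N})_{k=1,\ldots,M}$ only through bilinear forms against $\d_{1,N}$ and $\d_{2,N}$; the latter yields uniform $\Ocal(1/N)$ variance bounds that, by Cauchy's formula, give tightness of $\Delta_N$ on $\Gamma$ and also let me control the bias $\Ebb[\Delta_N(z)]$ as $o(1)$ uniformly on $\Gamma$. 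Finite-dimensional Gaussian convergence then follows from a Lindeberg-type martingale CLT along the filtration generated by the columns of $\W_N$, after which the continuous mapping theorem yields the asymptotic Gaussianity of $\sqrt{N}\Re(\hat{\eta}_N-\eta_N)$.

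The limiting variance is half the real part of the double residue
\begin{align}
	\frac{1}{(2\pi\irm)^2}\oint_\Gamma\oint_\Gamma \frac{\Kcal(z_1,z_2)}{h(z_1)h(z_2)}\,\drm z_1\drm z_2 \notag
\end{align}
plus an analogous pseudo-covariance contribution. Evaluating this by residue calculus at $z_j=\phi(\lambda_k)$, using $\phi(w(z))=z$ and the explicit form $h(\phi(\lambda))=(\lambda^2-\sigma^4c)/(\lambda(\lambda+\sigma^2c))$, should produce the signal--signal coefficients $\vartheta_{k,\ell}$ for $k,\ell\le K$, while the signal--noise terms $\vartheta_{k,\ell}$ for $\ell\ge K+1$ come from the portion of $\Kcal(z_1,z_2)$ that carries the full $M\times M$ spectral structure of the noise part and which is picked up by deforming the corresponding integration contour onto the bulk $[x^-,x^+]$. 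The main obstacle is exactly this variance computation: the Gaussian tools only describe $\Kcal(z_1,z_2)$ implicitly via a system of equations in $m(z)$, so matching the resulting double contour integral with the closed-form two-case expression for $\vartheta_{k,\ell}$ stated before \eqref{def:Gamma} is a long and delicate algebraic manipulation. The restriction in \eqref{eq:conv_distrib_eta} to the real part is natural, since $\Delta_N$ converges to a non-circular complex Gaussian whose real and imaginary parts will in general have different variances.
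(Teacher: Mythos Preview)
The paper does not actually prove Theorem~\ref{theorem:clt_subspace}; it states explicitly that the proof ``requires the use of technical tools from random matrix theory, is not included in the paper and is available in \cite{arxiv_clt}''. The only visible trace of the argument is in Appendix~\ref{appendix:connections-mestre}, where it is quoted from \cite{arxiv_clt} that
\[
\hat{\eta}_N=\frac{1}{2\pi\irm}\int_{\partial\Rcal}\d_{1,N}^*\bigl(\Sigmabs_N\Sigmabs_N^*-z\I\bigr)^{-1}\d_{2,N}\,\frac{w'(z)}{1+\sigma^2cm(z)}\,\drm z+o_{\Pbb}\bigl(N^{-1/2}\bigr),
\]
with $\partial\Rcal$ a fixed contour enclosing the Marchenko--Pastur bulk $[x^-,x^+]$ and excluding all the outliers.

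Your proposal is the same strategy: write $\hat{\eta}_N-\eta_N$ as a contour integral of a resolvent bilinear form against an analytic weight, prove a process CLT for $\sqrt{N}\,\d_{1,N}^*(\Q_N(z)-\T_N(z))\d_{2,N}$ on the contour via Gaussian integration by parts and Nash--Poincar\'e variance bounds, and then extract the constants $\vartheta_{k,\ell}$ by residue calculus. The one substantive difference is that you place the contour around the \emph{outliers} with weight $1/h(z)$, whereas \cite{arxiv_clt} places it around the \emph{bulk} with weight $w'(z)/(1+\sigma^2cm(z))$. Since the only singularities of the integrand off the real axis are the poles of $\Q_N$ and the cut of $m$ on $[x^-,x^+]$, the two contours are exchanged by deformation and the two weights are linked through the identity $g(z)=w'(z)/(1+\sigma^2cm(z))$ used in Appendix~\ref{appendix:connections-mestre}; so the difference is cosmetic. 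Your honest flagging of the variance identification as ``the main obstacle'' is accurate: that double residue computation is where most of the length of \cite{arxiv_clt} lies.
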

The proof of Theorem \ref{theorem:clt_subspace}, which requires the use of technical tools from random matrix theory, is not included in the paper and is available in \cite{arxiv_clt}.

To conclude this section, we also provide a result on the asymptotic Gaussianity of the classical subspace estimator \eqref{eq:eta_hat_t}, which will prove to be useful to study the behaviour of MUSIC in the next section.
In the same way as \eqref{def:Gamma}, we define
\begin{align}
	\vartheta^{(t)}_{k,\ell} 
	= \frac{\sigma^4 c}{4} \frac{\chi_{k,\ell}^{(t)}}{\lambda_k \lambda_{\ell} (\lambda_k+\sigma^2 c)^2 (\lambda_{\ell} + \sigma^2 c)^2 (\lambda_k \lambda_{\ell} - \sigma^4 c)}
	\notag
\end{align}
for $1 \leq k,\ell \leq K$, where $\chi^{(t)}(k,\ell)$ is given at the top of the page (note that $\chi^{(t)}_{k,l} > 0$),
and by
\begin{align}
	\vartheta^{(t)}_{k,\ell}= \frac{\sigma^2\left(\lambda_{k} + \sigma^2\right)\left(\lambda_k^2 - \sigma^4 c\right)}{4 \lambda_k^2 \left(\lambda_{k} + \sigma^2 c\right)^2}.
	\notag
\end{align}
for $k \leq K, \ell \geq K+1$, with $\vartheta^{(t)}_{k,\ell}= \vartheta^{(t)}_{\ell,k}$, and set $\vartheta^{(t)}_{k,\ell} = 0$ for $k,\ell \geq K+1$.
Define finally
\begin{align}
	\gamma^{(t)}_N = \sum_{k,\ell=1}^M \vartheta^{(t)}_{k,\ell} \left|\u_{k,N}^* \left(\d_{1,N}\d_{2,N}^* + \d_{2,N}\d_{1,N}^*\right) \u_{\ell,N}\right|^2.
	\label{def:Gamma_t}
\end{align}
Then the following result holds.
\begin{theorem}
	\label{theorem:clt_subspace_trad}
	Under Assumption \ref{assumption:spiked}, if $\liminf_N \gamma_N^{(t)} > 0$, then
	\begin{align}
		\sqrt{N} \frac{\Re\left(\hat{\eta}^{(t)}_N - \eta^{(t)}_N\right)}{\sqrt{\gamma^{(t)}_N}} \xrightarrow[N \to \infty]{\Dcal} \Ncal_{\Rbb}\left(0, 1\right),
		\label{eq:conv_distrib_eta_t}
	\end{align}
	where 
	\begin{align}
		\eta^{(t)}_N = 
		\d_{1,N}^* 
		\left(
			\I - \sum_{k=1}^K \frac{\lambda_{k,N}^2 - \sigma^2 c_N}{\lambda_{k,N}\left(\lambda_{k,N}+\sigma^2 c_N\right)} \u_{k,N}\u_{k,N}^*
		\right)
		\d_{2,N}.
		\notag
	\end{align}
\end{theorem}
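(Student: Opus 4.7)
My plan is to mirror the strategy used for Theorem~\ref{theorem:clt_subspace}: represent the rank-$K$ signal projector via a contour integral of the resolvent $Q_N(z) = (\Sigmabs_N\Sigmabs_N^* - z\I)^{-1}$, and reduce the statement to a uniform-in-$z$ CLT for the bilinear form $\d_{1,N}^* Q_N(z) \d_{2,N}$. Concretely, by Assumption~\ref{assumption:spiked} and Theorem~\ref{theorem:spiked_eig}, the sample eigenvalues $\hat\lambda_{1,N},\ldots,\hat\lambda_{K,N}$ lie a.s.\ for $N$ large in an arbitrarily small neighborhood of $\{\phi(\lambda_k)\}_{k=1}^K \subset (x^+,\infty)$, while all remaining eigenvalues concentrate inside a neighborhood of $[x^-,x^+]$. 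Fix a simple positively oriented contour $\Gamma$ enclosing $\phi(\lambda_1),\ldots,\phi(\lambda_K)$ and disjoint from $[x^-,x^+]$; then
\begin{align}
\hat\eta_N^{(t)} = \d_{1,N}^*\d_{2,N} + \frac{1}{2\pi\irm}\oint_\Gamma \d_{1,N}^* Q_N(z)\d_{2,N}\,\drm z
\notag
\end{align}
holds a.s.\ for $N$ large.

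The second step is to establish a uniform expansion
\begin{align}
\d_{1,N}^* Q_N(z)\d_{2,N} = q_N(z) + \frac{1}{\sqrt{N}} X_N(z), \quad z \in \Gamma,
\notag
\end{align}
where $q_N(z)$ is the deterministic equivalent in the information-plus-noise spiked model (known explicitly in terms of $m(z)$, $w(z)$, and a rank-$K$ correction on the $\u_{k,N}$), and $X_N(z)$ is a centered tight random field converging in distribution to a complex Gaussian process. The tools are the Gaussian integration-by-parts formula and the Nash--Poincar\'e inequality applied to the entries of $\W_N$, together with a bias estimate $\Ebb[\d_{1,N}^* Q_N(z)\d_{2,N}] - q_N(z) = o(N^{-1/2})$ uniform on $\Gamma$. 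Computing the residues of $q_N(z)$ at the $\phi(\lambda_k)$ using $\phi \circ w = \mathrm{id}$ and the formula $h(\phi(\lambda_k)) = (\lambda_k^2 - \sigma^4 c)/(\lambda_k(\lambda_k+\sigma^2 c))$ recalled after Theorem~\ref{theorem:spiked_eigv} produces exactly the stated $\eta_N^{(t)}$. Exchanging the contour integral and the limit, which is justified by uniform moment bounds and tightness of $X_N$ on $\Gamma$, gives
\begin{align}
\sqrt{N}\bigl(\hat\eta_N^{(t)} - \eta_N^{(t)}\bigr) = \frac{1}{2\pi\irm}\oint_\Gamma X_N(z)\,\drm z + o_{\Pbb}(1).
\notag
\end{align}

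The final step is the variance computation. The asymptotic covariance kernel $\kappa(z_1,z_2) = \lim_N \Ebb[X_N(z_1)\overline{X_N(z_2)}]$ (and its unconjugated counterpart) is rational in $m(z_1), m(z_2)$ and in the matrix elements $\u_{k,N}^*(\d_{1,N}\d_{2,N}^* + \d_{2,N}\d_{1,N}^*)\u_{\ell,N}$. Expanding on the full basis $\{\u_{k,N}\}_{k=1}^M$ and applying Cauchy's residue theorem to the double contour integral over $\Gamma \times \Gamma$ (plus conjugate contributions, to recover the variance of the real part) produces the claimed coefficients $\vartheta_{k,\ell}^{(t)}$. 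I expect the main obstacle to be the residue bookkeeping: for $1 \le k,\ell \le K$ the poles at $(\phi(\lambda_k),\phi(\lambda_\ell))$ yield, after simplification using $\phi'$ and the identity $\phi(\lambda) = (\lambda+\sigma^2)(\lambda+\sigma^2 c)/\lambda$, the rational expression $\chi_{k,\ell}^{(t)}$ displayed at the top of the page, the diagonal case $k = \ell$ requiring a second-order residue; for $k \le K, \ell \ge K+1$, one of the two contour integrals is replaced by an integral against $\drm\mu$, evaluated in closed form at $\phi(\lambda_k)$ to give the stated mixed coefficient. Gaussianity of the limit then follows from the CLT for the process $X_N$ via the Cram\'er--Wold device, and the assumption $\liminf_N \gamma_N^{(t)} > 0$ ensures the normalization is non-degenerate.
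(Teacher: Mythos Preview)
The paper does not prove this theorem in the body; it defers entirely to the companion report \cite{arxiv_clt}. From the hints visible in Appendix~\ref{appendix:connections-mestre} (contour-integral representations of both $\hat\eta_{N,u}$ and $\hat\eta_N$ via the resolvent, with appropriate weight functions) and from the parallel statement of Theorem~\ref{theorem:clt_subspace}, your proposal follows essentially the same route: represent the projector as a contour integral of $Q_N(z)$ around the isolated signal eigenvalues, establish a uniform-in-$z$ CLT for $\d_{1,N}^* Q_N(z)\d_{2,N}$ via Gaussian integration by parts and the Nash--Poincar\'e inequality, and extract the coefficients $\vartheta_{k,\ell}^{(t)}$ by residue calculus. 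This is the standard and correct approach for this class of spiked-model CLTs, and nothing in the paper suggests the companion report proceeds differently.
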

The proof of Theorem \ref{theorem:clt_subspace_trad} is given in \cite{arxiv_clt}.

	\subsection{Connections with others improved subspace estimators}
	\label{section:connection}	
	
Estimator \eqref{eq:noise_subspace_estimator} is valid under the hypothesis that the number of sources $K$ remains fixed when $N \rightarrow +\infty$. 
We recall that, under the hypothesis that the source signals are deterministic, or equivalently in the conditional case, 
\cite{vallet2012improved} proposed a consistent estimator of $\eta_N$, say $\hat{\eta}_{N,c}$, valid whatever $K$ is, and that it was proved in \cite{vallet2012improved} that 
\begin{align}
		\label{eq:cv-spike-conditional}
		\hat{\eta}_{N,c} - \hat{\eta}_N \rightarrow 0 \; \text{a.s.}  
\end{align}
It is even established in \cite[Remark 3.]{arxiv_clt} that 
\begin{equation}
	\label{eq:cv-spike-conditional-error-rate}
	\hat{\eta}_{N,c} - \hat{\eta}_N = o_{\Pbb}\left(\frac{1}{\sqrt{N}}\right)
\end{equation}
Therefore, if $K$ is fixed, the original subspace estimator derived in \cite{vallet2012improved} appears to be equivalent to the estimator \eqref{eq:noise_subspace_estimator}.

If the $K$ dimensional source signal $(\s_n)$ is assumed to be i.i.d. complex Gaussian, or equivalently in the unconditional case, \cite{mestre2008modified} proposed another consistent estimator, denoted $\hat{\eta}_{N,u}$, also valid whatever $K$ is in the unconditional case. 
When $K$ is fixed, and when $(\s_n)$ is deterministic, that is, in the conditional case, it is shown in the Appendix \ref{appendix:connections-mestre} that 
\begin{align}
	\hat{\eta}_{N,u} - \hat{\eta}_N = o_{\Pbb}\left(\frac{1}{\sqrt{N}}\right)
	\label{eq:cv-spike-unconditional}
\end{align}
Therefore, if $K$ is fixed, the subspace estimator of \cite{mestre2008modified}, in principle valid in the unconditional case, behaves as $\hat{\eta}_N$, or equivalently as the estimator $\hat{\eta}_{N,c}$ derived in \cite{vallet2012improved} in the 
conditional case. In conclusion, if $K$ is fixed, in the conditional case, the estimators $\hat{\eta}_{N,u}$, $\hat{\eta}_{N,c}$ and $\hat{\eta}_N$ are all equivalent. 
In section \ref{section:simu}, simulations are provided to illustrate that $\hat{\eta}_N$ and $\hat{\eta}_{N,u}(\theta)$ present the same performance, in the context of DoA estimation.

\section{Analysis of the consistency of G-MUSIC and MUSIC}
\label{section:consistency}

From now on, we use the results of section \ref{section:review_rmt} for $\Sigmabs_N= N^{-1/2}\Y_N$, $\B_N=N^{-1/2}\A(\thetabs)\S_N$, $\W_N=N^{-1/2}\V_N$, $\d_{1,N} = \d_{2,N} = \a(\theta)$ and assume that Assumption \ref{assumption:spiked} holds. 
Based on the subspace estimator \eqref{eq:noise_subspace_estimator},  \cite{hachem2012subspace} proposed the improved pseudo-spectrum estimator
\begin{align}
	\hat{\eta}_N(\theta) = 1 -  \sum_{k=1}^K \frac{1}{h\left(\hat{\lambda}_{k,N} \right)} \left|\a(\theta)^*\hat{\u}_{k,N}\right|^2,
	\label{eq:ps_estimator}
\end{align}

\begin{remark}
	The pseudo-spectrum estimator \eqref{eq:ps_estimator} can be viewed as a weighted version of the traditional pseudo-spectrum estimator
	\begin{align}
		\hat{\eta}_N^{(t)}(\theta) = 1 - \sum_{k=1}^K \left|\a(\theta)^*\hat{\u}_{k,N}\right|^2.
		\notag
	\end{align}
	Therefore, there is no additional computational cost by using this improved pseudo-spectrum estimator (which gives the G-MUSIC method described below), 
	since it also relies on an eigenvalues/eigenvectors decomposition of the SCM $\frac{1}{N} \Y_N\Y_N^*$.
	Moreover, in the traditional asymptotic regime where $\frac{M}{N} \to 0$, by setting $c=0$, we remark that $h(z)=1$ and 
	thus the improved pseudo-spectrum estimator reduces to the traditional one.
\end{remark}
From \eqref{eq:consistency_subspace_estimator}, we have directly that $\hat{\eta}_N(\theta)-\eta_N(\theta) \to 0$ a.s. as $N \to \infty$, for all $\theta$.
In Hachem et al. \cite{hachem2012large}, this convergence was also proved to be uniform, that is 
\begin{align}
	\sup_{\theta \in [-\pi,\pi]} \left|\hat{\eta}_N(\theta)-\eta_N(\theta)\right| \xrightarrow[N \to \infty]{a.s.} 0,
	\label{eq:uniform_consistency}
\end{align}
The resulting DoA estimation method, termed as G-MUSIC, consists in estimating $\theta_1,\ldots,\theta_K$ as the $K$ most significant minima of $\theta \mapsto \hat{\eta}_N(\theta)$.
%

Concerning the traditional pseudo-spectrum estimator 
$\hat{\eta}_N^{(t)}(\theta)$, Theorem \ref{theorem:spiked_eigv} directly
implies that for all $\theta$,
\begin{align}
	\hat{\eta}_N^{(t)}(\theta) - \eta_N^{(t)}(\theta) 
	\xrightarrow[N\to\infty]{a.s.} 0,
	\notag
\end{align}
where 
\begin{align}
	\eta_N^{(t)}(\theta) = 
	1 - \sum_{k=1}^K 
	\frac
	{
		\lambda_k^2 - \sigma^4 c
	}
	{\lambda_k\left(\lambda_k+\sigma^2 c\right)}
	\left|\a(\theta)^*\u_{k,N}\right|^2.
	\label{eq:music_asymp_eq}
\end{align} 
 

	\subsection{$N$-consistency for widely spaced DoA}
	\label{section:consistency_wide}
	
In this section, we consider a widely spaced DoA scenario. In practice, such a situation occurs e.g. when the DoA have an angular separation
much larger than a beamwidth $\frac{2 \pi}{M}$. Mathematically speaking, we will therefore consider that the DoA $\theta_1,\ldots,\theta_K$ are fixed with 
respect to $N$.
In that case, $\A^* \A \to \I$ and the separation condition (Assumption \ref{assumption:spiked}) holds if and only if the eigenvalues of $\frac{\S_N\S_N^*}{N}$
converge to $\lambda_1 > \ldots > \lambda_K > \sigma^2 \sqrt{c}$.
To summarize, we make the following assumption.
\begin{assum}
	\label{assumption:fixed_DoA}
	$K$, $\theta_1,\ldots,\theta_K$ are independent of $N$, and the eigenvalues of $N^{-1}\S_N\S_N^*$ converge to 
	\begin{align}
		\lambda_1 > \ldots > \lambda_K > \sigma^2 \sqrt{c}.
		\notag
	\end{align}
\end{assum}
Note that Assumption \ref{assumption:fixed_DoA} allows in particular spatial correlation between sources, since $\frac{1}{N} \S_N\S_N^*$ may converge to a positive definite matrix, which is not necessarily constrained to be diagonal. 

To study the consistency of G-MUSIC and MUSIC, we need to define ``properly" the corresponding estimators, to avoid identifiability issues. 
As it is usually done in the theory of M-estimation, we consider $\Ical_1,\ldots,\Ical_K \subset [-\pi,\pi]$ $K$ compact disjoint intervals 
such that $\theta_k \in \mathrm{Int}\left(\Ical_k\right)$ ($\mathrm{Int}$ denotes the interior of a set), and formally define the G-MUSIC and MUSIC DoA estimators as
\footnote
{
	Note that the G-MUSIC cost function 
	can be negative due to the presence of the weighting factor 
	$h(\hat{\lambda}_{k,N})^{-1}$ in \eqref{eq:ps_estimator}. 
}
\begin{align}
	\hat{\theta}_{k,N} = \argmin_{\theta \in \Ical_k} \hat{\eta}_N(\theta)
	\quad\text{and}\quad
	\hat{\theta}^{(t)}_{k,N} = \argmin_{\theta \in \Ical_k} \hat{\eta}_N^{(t)}(\theta).
	\label{def:Doa_estimates_K_fixed}
\end{align}

We have the following result, whose proof  is deferred to Appendix \ref{appendix:proof_consistency_low}.
\begin{theorem}
	\label{theorem:consistency_MUSIC_GMUSIC_low}
	Under Assumption \ref{assumption:fixed_DoA}, for $k=1,\ldots,K$,
	\begin{align}
		\hat{\theta}_{k,N} = \theta_k + o\left(\frac{1}{N}\right)
		\quad\text{and}\quad
		\hat{\theta}^{(t)}_{k,N} = \theta_k + o\left(\frac{1}{N}\right),
		\notag
	\end{align}
	with probability one.
\end{theorem}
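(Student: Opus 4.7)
The proof follows the classical M-estimator strategy: first upgrade pointwise consistency to rate estimates via a Taylor expansion of the first-order optimality condition at $\theta_k$. For G-MUSIC, the uniform consistency \eqref{eq:uniform_consistency} combined with the fact that $\eta_N(\theta) = \|\Pibs_N \a(\theta)\|^2$ vanishes only at $\theta_k$ on $\Ical_k$ gives $\hat{\theta}_{k,N} \to \theta_k$ a.s.; for MUSIC, applying Theorem \ref{theorem:spiked_eigv} uniformly in $\theta$ (via an equicontinuity argument on $\Ical_k$) shows that $\hat{\eta}_N^{(t)}(\theta) - \eta_N^{(t)}(\theta) \to 0$ uniformly, and one checks that $\eta_N^{(t)}$ from \eqref{eq:music_asymp_eq} has a unique minimum in $\Int(\Ical_k)$ at $\theta_k$ (in the widely-spaced DoA regime) so $\hat{\theta}_{k,N}^{(t)} \to \theta_k$ a.s. as well. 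The first-order conditions $\hat{\eta}_N'(\hat{\theta}_{k,N}) = 0$ and $\hat{\eta}_N^{(t)}{}'(\hat{\theta}_{k,N}^{(t)})=0$ then yield, via the mean value theorem,
\begin{equation}
\hat{\theta}_{k,N} - \theta_k = -\frac{\hat{\eta}_N'(\theta_k)}{\hat{\eta}_N''(\tilde{\theta}_N)}, \qquad \hat{\theta}_{k,N}^{(t)} - \theta_k = -\frac{\hat{\eta}_N^{(t)}{}'(\theta_k)}{\hat{\eta}_N^{(t)}{}''(\tilde{\theta}_N^{(t)})},
\end{equation}
so the $o(1/N)$ rate reduces to showing that each numerator is $o(M)$ a.s. and each denominator is $\asymp M^2$ a.s., recalling $\|\a'(\theta_k)\|^2 \sim M^2/3$ and $M \asymp N$.

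For the G-MUSIC numerator, write $\hat{Q}_N := \I - \sum_{l=1}^K h(\hat{\lambda}_{l,N})^{-1} \hat{\u}_{l,N}\hat{\u}_{l,N}^*$ so that $\hat{\eta}_N'(\theta_k) = 2\Re(\a'(\theta_k)^*\hat{Q}_N \a(\theta_k))$. Since $\Pibs_N \a(\theta_k) = 0$, this equals $2\Re(\a'(\theta_k)^* (\hat{Q}_N - \Pibs_N) \a(\theta_k))$; applying \eqref{eq:consistency_subspace_estimator} to the normalized vectors $\a'(\theta_k)/\|\a'(\theta_k)\|$ and $\a(\theta_k)$ and multiplying back by $\|\a'(\theta_k)\| = O(M)$ gives $\hat{\eta}_N'(\theta_k) = o(M)$ a.s. For MUSIC, the analogous computation and Theorem \ref{theorem:spiked_eigv} yield
\begin{equation}
\hat{\eta}_N^{(t)}{}'(\theta_k) = 2\Re\Bigl(\a'(\theta_k)^* \bigl(\I - \sum_{l=1}^K h(\phi(\lambda_l))\u_{l,N}\u_{l,N}^*\bigr)\a(\theta_k)\Bigr) + o(M) \ \text{a.s.}
\end{equation}
The crucial algebraic fact is that $\|\a(\theta)\|^2 \equiv 1$ implies $\Re(\a'(\theta)^*\a(\theta)) = 0$ for all $\theta$. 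Expanding $\a(\theta_k) = \sum_l (\u_{l,N}^*\a(\theta_k))\u_{l,N}$ and using the asymptotic orthogonality $\A^*\A \to \I$ (so the $\u_{l,N}$'s are asymptotically linear combinations of the $\a(\theta_m)$'s with bounded coefficients), the leading $O(M)$ contribution of each $\a'(\theta_k)^* \u_{l,N}\u_{l,N}^* \a(\theta_k)$ is purely imaginary, so its real part is $O(1)$; the weighted sum is therefore $o(M)$ and $\hat{\eta}_N^{(t)}{}'(\theta_k) = o(M)$ a.s.

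For the denominator, $\hat{\eta}_N''(\theta) = 2\Re(\a''(\theta)^*\hat{Q}_N \a(\theta)) + 2\a'(\theta)^*\hat{Q}_N \a'(\theta)$. The first term is $o(M^2)$ a.s.\ (the limit $\a''(\theta_k)^*\Pibs_N\a(\theta_k)=0$, and the error from \eqref{eq:consistency_subspace_estimator} is $o(\|\a''(\theta_k)\|) = o(M^2)$). The second term tends a.s.\ to $2\a'(\theta_k)^*\Pibs_N\a'(\theta_k) = 2\|\Pibs_N\a'(\theta_k)\|^2$ up to $o(M^2)$; decomposing $\a'(\theta_k)$ along the signal subspace, a direct computation in the widely-spaced regime (carried out explicitly for $K=1$ above gives $\|\Pibs_N\a'(\theta_k)\|^2 \sim M^2/12$) yields a positive liminf of $M^{-2}\a'(\theta_k)^*\Pibs_N\a'(\theta_k)$. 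Extending this control uniformly over $\tilde{\theta}_N$ in a shrinking neighborhood of $\theta_k$ (by continuity of $\theta \mapsto \a(\theta),\a'(\theta),\a''(\theta)$) gives $\hat{\eta}_N''(\tilde{\theta}_N) \asymp M^2$ a.s. A parallel argument works for the MUSIC denominator using $\hat{\Pibs}_N$ and the bilinear form limit of Theorem \ref{theorem:spiked_eigv}. Combining numerator and denominator yields $|\hat{\theta}_{k,N}-\theta_k| = o(M)/\Theta(M^2) = o(1/N)$ a.s., and likewise for MUSIC.

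The main obstacle is the MUSIC numerator: unlike G-MUSIC, where $\Pibs_N\a(\theta_k)=0$ kills the leading term outright, the limit of $\hat{\eta}_N^{(t)}$ does \emph{not} vanish at $\theta_k$, and one must exploit the more delicate cancellation coming from $\Re(\a'(\theta)^*\a(\theta)) \equiv 0$ together with the asymptotic structure of the signal-subspace eigenvectors under $\A^*\A \to \I$. A secondary technicality is promoting the pointwise bilinear-form convergences to uniform ones over a shrinking neighborhood of $\theta_k$, needed to evaluate the second derivative at the random intermediate point $\tilde{\theta}_N$.
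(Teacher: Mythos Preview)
Your approach is genuinely different from the paper's. The paper (Appendix~\ref{appendix:proof_consistency_low}) does \emph{not} use the Taylor expansion of the first-order condition to obtain the $o(1/N)$ rate; instead it follows Hannan's classical argument, working directly with cost-function values. Concretely, after establishing the uniform approximation $\eta_N^{(t)}(\theta)\approx 1-|\a(\theta)^*\a(\theta_k)|^2\,\e_k^*\V_N\D\V_N^*\e_k$ on $\Ical_k$, it shows $\limsup_N \eta_N^{(t)}(\hat{\theta}_{k,N}^{(t)})<1$, and then argues by contradiction: if $N(\hat{\theta}_{k,N}^{(t)}-\theta_k)$ were unbounded or converged to some $\beta\neq 0$, the explicit $\sinc$ structure of $|\a(\theta)^*\a(\theta_k)|^2$ would force $\eta_N^{(t)}(\hat{\theta}_{k,N}^{(t)})$ to a value contradicting that bound. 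The Taylor expansion you write down is exactly what the paper uses \emph{later}, in Appendix~\ref{appendix:proof_clt_GMUSIC}, to prove the CLT---but only \emph{after} the $o(1/N)$ rate is already available from Theorem~\ref{theorem:consistency_MUSIC_GMUSIC_low}.

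There is a real gap in your denominator argument, and it is not a ``secondary technicality''. You need $\hat{\eta}_N''(\tilde{\theta}_N)\asymp M^2$ with $\tilde{\theta}_N$ between $\theta_k$ and $\hat{\theta}_{k,N}$, but at this stage you only know $\tilde{\theta}_N-\theta_k=o(1)$. Continuity of $\a,\a',\a''$ does not help, because the relevant Lipschitz constants blow up: $\|\a^{(j)}(\theta)-\a^{(j)}(\theta')\|=O(M^{j+1}|\theta-\theta'|)$, so $|\hat{\eta}_N''(\tilde{\theta}_N)-\hat{\eta}_N''(\theta_k)|=O(M^3|\tilde{\theta}_N-\theta_k|)$, which is $o(M^2)$ only if $|\tilde{\theta}_N-\theta_k|=o(1/M)$---precisely the rate you are trying to prove. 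Nor can you fall back on a uniform-in-$\theta$ lower bound for $M^{-2}\eta_N''(\theta)$ on $\Ical_k$: away from $\theta_k$ one has $\Pibs_N\a(\theta)\approx \a(\theta)$, so the two terms $2\Re(\a''^*\Pibs_N\a)$ and $2\a'^*\Pibs_N\a'$ cancel to leading order and $M^{-2}\eta_N''(\theta)\to 0$. The positivity of the second derivative is a genuinely local phenomenon at $\theta_k$, and exploiting it requires the rate you do not yet have. This is exactly why the paper uses the Hannan-type argument first and reserves the Taylor expansion for the CLT.

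Your analysis of the MUSIC numerator via $\Re(\a'(\theta)^*\a(\theta))\equiv 0$ and the asymptotic structure of $\u_{l,N}$ is correct and pleasant, but it does not rescue the argument: even with $\hat{\eta}_N^{(t)\prime}(\theta_k)=o(M)$ in hand, you still need the denominator control, which is where the circularity bites.
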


The results of Theorem \ref{theorem:consistency_MUSIC_GMUSIC_low} show that both the G-MUSIC and MUSIC methods have the same first order behaviour, i.e. are
$N$-consistent, when the angles $\theta_1,\ldots,\theta_K$ are fixed with respect to $N$. 
In section \ref{section:clt}, it will be further shown that the MUSIC method also has the same asymptotic MSE as the G-MUSIC method as $N \to \infty$.

	\subsection{$N$-consistency for closely spaced DoA}

In this section, we study the consistency of G-MUSIC and MUSIC in a closely spaced DoA scenario, where we let the DoA $\theta_{1,N},\ldots,\theta_{K,N}$
depends on $N$ and converge to the same value at rate $\Ocal\left(\frac{1}{M}\right)$. 
To simplify the presentation, we only consider $K=2$ sources with DoA $\theta_{1,N}$ and $\theta_{2,N}=\theta_{1,N} + \frac{\alpha}{N}$, where $\alpha > 0$, and 
assume asymptotic uncorrelated sources with equal powers, that is $N^{-1} \S_N\S_N^* \to \I$.
In this case, it is easily seen that the two non null signal eigenvalues of $\frac{\A\S_N\S_N^*\A^*}{N}$ converge to
\begin{align}
	\lambda_{1}(\alpha) = 1+\left|\sinc\left(\frac{\alpha c}{2}\right)\right|
	\quad\text{and}\quad
	\lambda_{2}(\alpha) = 1-\left|\sinc\left(\frac{\alpha c}{2}\right)\right|.
	\notag
\end{align}
where $\sinc(x) = \sin(x)/x$ if $x \neq 0$ and $\sinc(0)=1$. 
Therefore, the subspace separation condition (Assumption \ref{assumption:spiked}) holds if and only if $\lambda_2(\alpha) > \sigma^2 \sqrt{c}$.
To summarize, we consider the following assumption.
\begin{assum}
	\label{assumption:close_DoA}
	We assume that $K=2$, 
	\begin{align}
		\frac{\S_N\S_N^*}{N} \xrightarrow[N\to\infty]{} \I,
		\notag
	\end{align}
	and that the DoA $\theta_{1,N},\theta_{2,N}$ depend on $N$ in such a way that
	\begin{align}
		\theta_{2,N} = \theta_{1,N} + \frac{\alpha}{N},
		\notag
	\end{align}
	where $\alpha > 0$ satisfies
	\begin{align}
		\left|\sinc\left(\frac{\alpha c}{2}\right)\right| < 1 - \sigma^2 \sqrt{c}.
		\notag
	\end{align}
\end{assum}
Since the DoA are not fixed with respect to $N$, we define, in the same way as \eqref{def:Doa_estimates_K_fixed}, the G-MUSIC and MUSIC DoA estimates as
\begin{align}
	\hat{\theta}_{k,N} = \argmin_{\theta \in \Ical_{k,N}} \hat{\eta}_N(\theta)
	\quad\text{and}\quad
	\hat{\theta}_{k,N}^{(t)} = \argmin_{\theta \in \Ical_{k,N}} \hat{\eta}_N^{(t)}(\theta)
\end{align}
where $\Ical_{k,N}$ is defined as the compact interval 
\begin{align}
	\Ical_{k,N} = \left[\theta_{k,N}-\frac{\alpha-\epsilon}{2N},\theta_{k,N}+\frac{\alpha -\epsilon}{2N}\right],
	\notag
\end{align}
with $0 < \epsilon < \alpha$.
The $N$-consistency results for G-MUSIC and MUSIC in the closely spaced DoA scenario can be summarized as follows.
\begin{theorem}
	\label{theorem:consistency_close}
	Under Assumption \ref{assumption:close_DoA}, for $k \in \{1,2\}$,
	\begin{align}
		\hat{\theta}_{k,N} = \theta_{k,N} + o\left(\frac{1}{N}\right),
		\label{eq:DoA_consistency_high_resolution}
	\end{align}
	with probability one.
	Moreover, if $0$ and $\alpha$ are not local maxima of the function $\beta \mapsto \kappa^{(t)}(\beta)$ defined by
	\begin{align}
		&\kappa^{(t)}(\beta) = 
		\notag\\		
		&\frac{\left(\lambda_1(\alpha)^2 - \sigma^4 c\right)\left(\sinc(\beta c /2) + \sinc((\beta-\alpha)c/2)\right)^2}
		{2 \lambda_1(\alpha)^2 \left(\lambda_1(\alpha) + \sigma^2 c\right)}
		\notag\\
		&+\frac{\left(\lambda_2(\alpha)^2 - \sigma^4 c\right)\left(\sinc(\beta c /2) - \sinc((\beta-\alpha)c/2)\right)^2}
		{2 \lambda_2(\alpha)^2 \left(\lambda_2(\alpha) + \sigma^2 c\right)},
		\label{eq:kappat}
	\end{align}
	then $N\left(\hat{\theta}_{k,N}^{(t)} - \theta_{k,N}\right)$ does not converge to $0$.
\end{theorem}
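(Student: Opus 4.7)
The strategy is to pass to rescaled local coordinates around each true DoA and identify deterministic limits for both pseudo-spectra. Fix $k \in \{1,2\}$ and set $\beta = N(\theta - \theta_{k,N})$, so that $\theta \in \Ical_{k,N}$ corresponds to $\beta$ in the $N$-independent compact interval $\Jcal = [-(\alpha-\epsilon)/2, (\alpha-\epsilon)/2]$. Writing $\hat{\beta}_{k,N} = N(\hat{\theta}_{k,N} - \theta_{k,N})$ and $\hat{\beta}^{(t)}_{k,N} = N(\hat{\theta}^{(t)}_{k,N} - \theta_{k,N})$, \eqref{eq:DoA_consistency_high_resolution} is equivalent to $\hat{\beta}_{k,N} \xrightarrow{a.s.} 0$ and the non-consistency of MUSIC amounts to $\hat{\beta}^{(t)}_{k,N} \not\to 0$.

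The core computation is a Riemann-sum evaluation of $\a(\theta_{1,N}+\beta/N)^*\a(\theta_{j,N})$ for $j=1,2$, which converges uniformly in $\beta \in \Jcal$ to $e^{-ic\beta/2}\sinc(c\beta/2)$ and $e^{-ic(\beta-\alpha)/2}\sinc(c(\beta-\alpha)/2)$ respectively. Under Assumption \ref{assumption:close_DoA}, $N^{-1}\A\S_N\S_N^*\A^*$ is asymptotically equal to $\A_N\A_N^*$, and the explicit diagonalization of the limiting $2\times 2$ Gram matrix of $\A_N$ produces, uniformly in $\beta \in \Jcal$,
\begin{align}
|\a(\theta_{1,N}+\beta/N)^*\u_{k,N}|^2 \longrightarrow \frac{(\sinc(c\beta/2) \pm \sinc(c(\beta-\alpha)/2))^2}{2\lambda_k(\alpha)}, \notag
\end{align}
the sign being $+$ for $k=1$ and $-$ for $k=2$. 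Combining these asymptotics with \eqref{eq:uniform_consistency} (for G-MUSIC) and with Theorem \ref{theorem:spiked_eigv} upgraded to a uniform-in-$\theta$ form (for MUSIC), one obtains
\begin{align}
\sup_{\beta \in \Jcal}\bigl|\hat{\eta}_N(\theta_{1,N}+\beta/N) - \kappa(\beta)\bigr| &\xrightarrow[N\to\infty]{a.s.} 0, \notag \\
\sup_{\beta \in \Jcal}\bigl|\hat{\eta}^{(t)}_N(\theta_{1,N}+\beta/N) - (1-\kappa^{(t)}(\beta))\bigr| &\xrightarrow[N\to\infty]{a.s.} 0, \notag
\end{align}
where $\kappa^{(t)}$ is given by \eqref{eq:kappat} and, writing $s_0=\sinc(c\beta/2)$, $s_1=\sinc(c(\beta-\alpha)/2)$, $s_\alpha=\sinc(c\alpha/2)$, the G-MUSIC limit simplifies to $\kappa(\beta) = (1-s_0^2-s_1^2-s_\alpha^2+2s_\alpha s_0 s_1)/(1-s_\alpha^2)$. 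The numerator is, up to computation of phases, exactly the Gram determinant of the three Fourier modes $e^{ic\beta' t}$, $\beta'\in\{0,\alpha,\beta\}$, in $L^2([0,1])$; since distinct exponentials are linearly independent in this space and $\alpha \notin \Jcal$, this determinant is strictly positive on $\Jcal\setminus\{0\}$ and vanishes at $\beta=0$. A standard argmin continuity argument on the compact set $\Jcal$ then yields $\hat{\beta}_{k,N} \to 0$, proving \eqref{eq:DoA_consistency_high_resolution} (the case $k=2$ reduces to $k=1$ by translating $\beta \mapsto \beta - \alpha$). For MUSIC, $\hat{\beta}^{(t)}_{1,N}$ is an argmax over $\Jcal$ of a function converging uniformly to $\kappa^{(t)}$; if $\kappa^{(t)}$ has no local maximum at $0$, one picks $\beta_0\in\Jcal$ with $\kappa^{(t)}(\beta_0)>\kappa^{(t)}(0)$ and derives a contradiction with $\hat{\beta}^{(t)}_{1,N}\to 0$ using the uniform convergence above; the case $k=2$ uses instead the assumption that $\alpha$ is not a local maximum of $\kappa^{(t)}$.

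The main technical obstacle is upgrading the pointwise convergence of Theorem \ref{theorem:spiked_eigv} to a uniform statement over the $O(1/N)$ window in $\theta$ that $\Jcal$ corresponds to; for G-MUSIC this is directly provided by \eqref{eq:uniform_consistency}, but for traditional MUSIC one must establish a uniform analogue of Theorem \ref{theorem:spiked_eigv} for $\a(\theta)^*\hat{\u}_{k,N}\hat{\u}_{k,N}^*\a(\theta)$, which can be done by adapting the arguments of \cite{hachem2012large}. A subsidiary but crucial point is the strict positivity of $\kappa$ on $\Jcal\setminus\{0\}$; this is where the specifically continuum structure of the limit problem enters, via the linear independence of distinct Fourier exponentials in $L^2([0,1])$.
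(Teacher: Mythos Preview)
Your proposal is correct and follows essentially the same route as the paper: rescale to the local variable $\beta=N(\theta-\theta_{k,N})$, identify uniform deterministic limits for the two pseudo-spectra on the compact window via Riemann-sum asymptotics of the steering-vector inner products, and conclude by an argmin/argmax argument. The one cosmetic difference is that the paper writes the G-MUSIC limit as $1-\kappa_{\mathrm{proj}}(\beta)$ with $\kappa_{\mathrm{proj}}$ the squared $L^2([0,1])$ projection of $e^{i\beta ct}$ onto $\mathrm{span}\{1,e^{i\alpha ct}\}$, whereas you phrase the same quantity as a normalized $3\times 3$ Gram determinant; both formulations yield the same uniqueness-of-minimum property on $\Jcal$ and lead to the identical contradiction argument for MUSIC.
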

The proof of Theorem \ref{theorem:consistency_close} is deferred to Appendix \ref{appendix:proof_consistency_close}.

Theorem \ref{theorem:consistency_close} shows that the G-MUSIC method remains $N$-consistent when two sources have DoA with a spacing of the order $\Ocal\left(M^{-1}\right)$ while MUSIC may not be able to consistently separate the two DoA if the spacing parameter $\alpha$ is not a local maximum of the function defined in \eqref{eq:kappat} (numerical examples are given in Figure \ref{fig:kappat}). This confirms the superiority of G-MUSIC over MUSIC in closely spaced DoA situations and low sample size situations.

\begin{figure}
	\centering
	\subfigure[$\alpha=0.25\pi/c$]{\includegraphics[scale=0.4]{./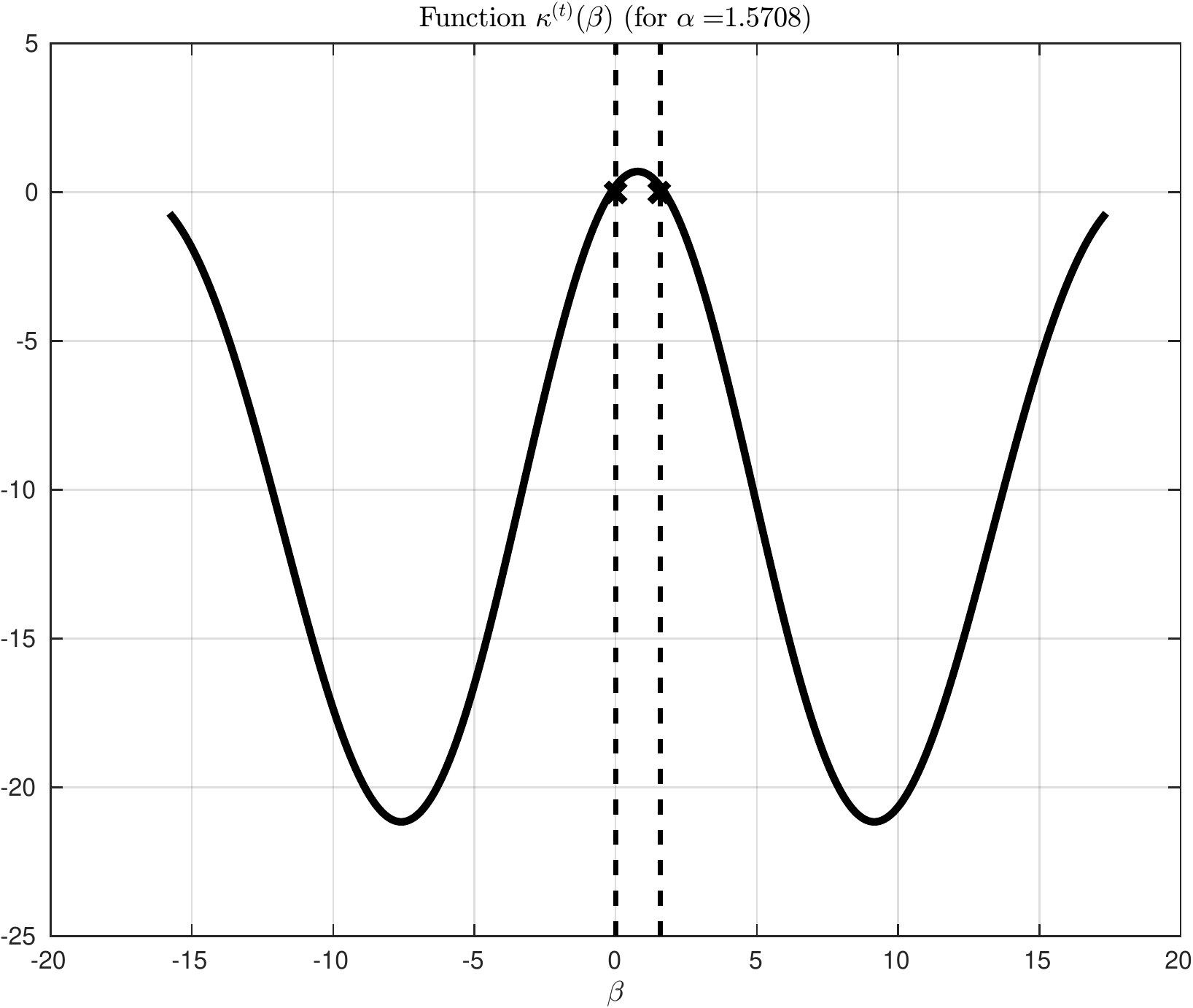}}
	\subfigure[$\alpha=2\pi/c$]{\includegraphics[scale=0.4]{./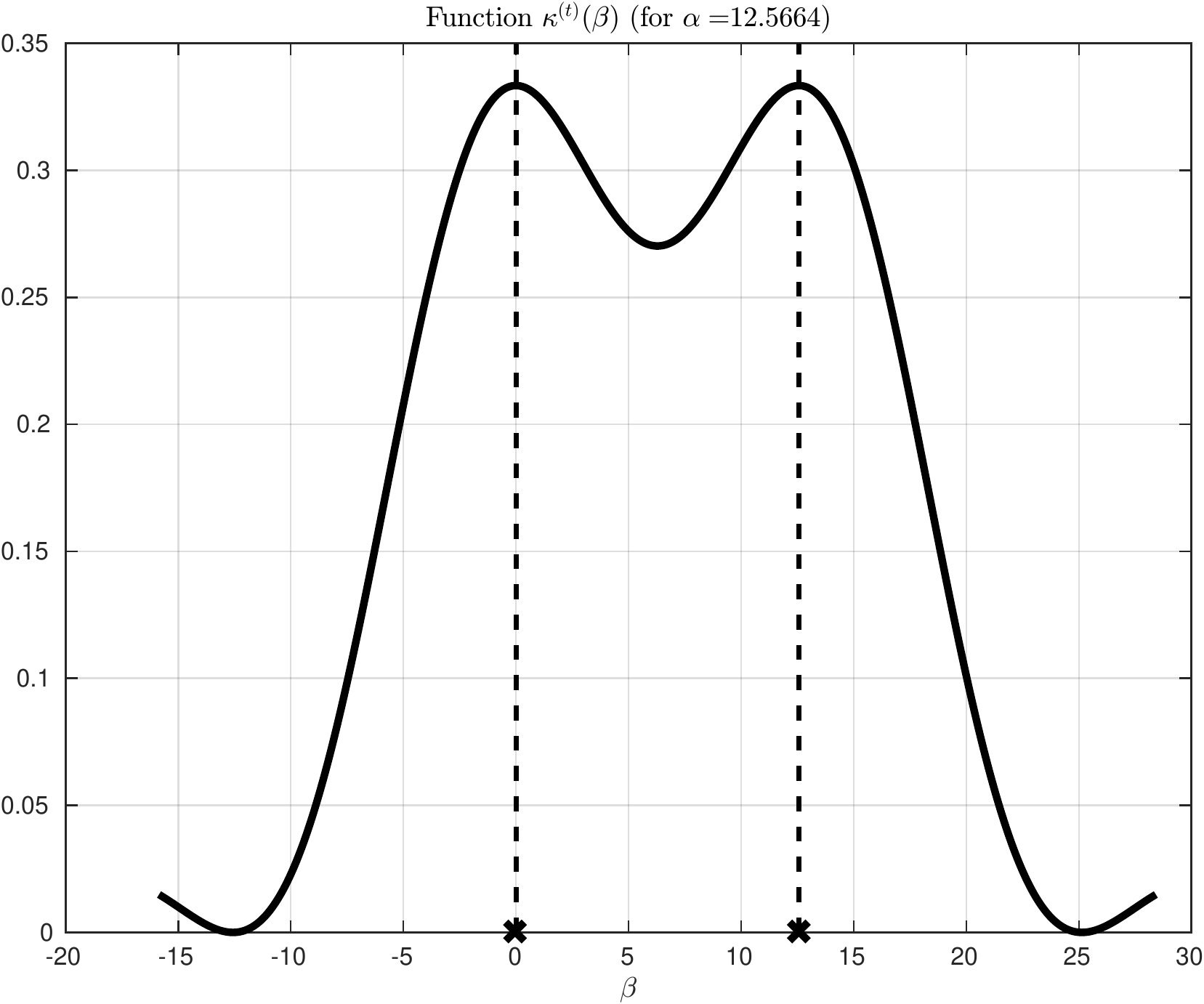}}
	\caption{Function $\beta \mapsto \kappa^{(t)}(\beta)$ for $\sigma=1$, $c=0.5$ and different values of $\alpha$ (the dashed lines represent the location
	of $0$ and $\alpha$)}
	\label{fig:kappat}
\end{figure}

\subsection{Remarks on the spatial periodogram}
\label{section:spatial_periodogram}

Regarding the previous results on the consistency of the MUSIC estimator for widely spaced and closely spaced scenarios, it is natural to ask how
traditional ``low resolution" techniques for DoA estimation behave.

Considering the classical spatial periodogram cost function, that is
\begin{align}
	\hat{\eta}_N^{(p)} (\theta) = \a(\theta)^* \frac{\Y_N\Y_N^*}{N} \a(\theta),
	\notag
\end{align}
we can prove, such as in \cite[Sec. 3.3]{hachem2012large}, that
\begin{align}
	\sup_{\theta \in [-\pi,\pi]} \left|\hat{\eta}_N^{(p)} (\theta) - \eta_N^{(p)}(\theta)\right| 	\xrightarrow[N\to\infty]{a.s.} 0,
	\notag
\end{align}
where $\eta_N^{(p)}(\theta) = \a(\theta)^* \left(\A \frac{\S_N\S_N^*}{N}\A^* + \sigma^2 \I\right) \a(\theta)$.
Moreover, following the steps of the proof of Theorem \ref{theorem:consistency_MUSIC_GMUSIC_low} for the MUSIC estimates, we end up as well with
\begin{align}
	\hat{\theta}^{(p)}_{k,N} = \theta_{k} + o\left(\frac{1}{N}\right),
	\notag
\end{align}
with probability one, where $\hat{\theta}^{(p)}_{k,N} = \argmax_{\theta \in \Ical_k} \hat{\eta}_N^{(p)} (\theta)$. 
Therefore, the spatial periodogram also provides consistent estimate in the widely-spaced DoA scenario, without any requirements on the sources power (i.e. without need of the separation condition $\lambda_{k} > \sigma^{2} \sqrt{c}$, $k=1, \ldots, K$).
This confirms the well-known fact that the use of subspace methods, especially MUSIC, is not necessarily a relevant choice for estimating the DoA of widely spaced sources. Nevertheless, in certain scenarios involving correlated source signals and widely spaced DoA and for which the spatial periodogram may exhibit a non negligible bias at high SNR, the use of subspace methods may still be interesting (see numerical illustrations in Section \ref{section:simu}).

However, in the scenario of closely spaced DoA, one can also prove, following the steps of Theorem \ref{theorem:consistency_close}, that the spatial periodogram
suffers the same drawback as MUSIC, and is not capable of consistently separating two DoA with an angular spacing of the order $\Ocal\left(\frac{1}{M}\right)$.
Simulations are provided in the next section to illustrate these facts.

	\section{Asymptotic Gaussianity of G-MUSIC and MUSIC}
	\label{section:clt}
		
We now apply the results of Theorems \ref{theorem:clt_subspace} and \ref{theorem:clt_subspace_trad} to obtain a Central Limit Theorem for the G-MUSIC and MUSIC DoA estimates. 
The results for G-MUSIC will be valid for both the widely spaced and closely spaced DoA scenarios introduced in the previous section, 
while the CLT for MUSIC will be only valid for the widely spaced DoA scenario, since it is not $N$-consistent for the other situation.

		\subsection{CLT for G-MUSIC and MUSIC}
		\label{section:clt_MUSIC_GMUSIC}
		
The following Theorem, whose proof is given in Appendix \ref{appendix:proof_clt_GMUSIC}, provides the asymptotic Gaussianity of the G-MUSIC DoA estimates, 
under Assumption \ref{assumption:fixed_DoA} or Assumption \ref{assumption:close_DoA}. 
We denote by $\a'(\theta)$ and $\a''(\theta)$ respectively the first and second order derivatives w.r.t. $\theta$ of the function $\theta \rightarrow \a(\theta)$. 
\begin{theorem}
	\label{theorem:clt_GMUSIC}
	Under Assumption \ref{assumption:fixed_DoA} or Assumption \ref{assumption:close_DoA},
	\begin{align}
		N^{3/2} \sqrt{\frac{ \left(\d_{1,N}^* \Pibs_N \d_{1,N}\right)^2 }{\gamma_N}}
		\left(\hat{\theta}_{k,N} - \theta_{k,N}\right) 
		\xrightarrow[N\to\infty]{\Dcal} \Ncal_{\Rbb}(0,1).
		\label{eq:CLT_G_MUSIC}
	\end{align}
	for $k=1,\ldots,K$, where $\gamma_N$ is defined by \eqref{def:Gamma}, with 
	$\d_{1,N}=N^{-1}\a'(\theta_{k,N})$, $\d_{2,N}=\a(\theta_{k,N})$.
\end{theorem}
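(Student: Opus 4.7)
The plan follows the standard M-estimator route: combine a Taylor expansion of the pseudo-spectrum estimator around $\theta_{k,N}$ with the bilinear-form CLT of Theorem \ref{theorem:clt_subspace}. By Theorems \ref{theorem:consistency_MUSIC_GMUSIC_low} and \ref{theorem:consistency_close}, $\hat{\theta}_{k,N}$ lies in the interior of $\Ical_k$ (resp.\ $\Ical_{k,N}$) and satisfies $N|\hat{\theta}_{k,N}-\theta_{k,N}|\to 0$ a.s., so the first-order condition $\hat{\eta}_N'(\hat{\theta}_{k,N})=0$ and a second-order Taylor expansion yield
\begin{align}
\hat{\theta}_{k,N} - \theta_{k,N} = -\frac{\hat{\eta}_N'(\theta_{k,N})}{\hat{\eta}_N''(\tilde{\theta}_{k,N})}
\notag
\end{align}
for some $\tilde{\theta}_{k,N}$ with $N|\tilde{\theta}_{k,N}-\theta_{k,N}|\to 0$.

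For the numerator, set $\d_{1,N}=N^{-1}\a'(\theta_{k,N})$, $\d_{2,N}=\a(\theta_{k,N})$ and denote the G-MUSIC projector by $\tilde{\Pibs}_N = \I - \sum_{\ell=1}^K h(\hat{\lambda}_{\ell,N})^{-1}\hat{\u}_{\ell,N}\hat{\u}_{\ell,N}^*$. Hermiticity of $\tilde{\Pibs}_N$ and the identity $\Pibs_N\a(\theta_{k,N})=0$ give
\begin{align}
\hat{\eta}_N'(\theta_{k,N}) = 2N\,\Re\!\left[\d_{1,N}^*(\tilde{\Pibs}_N-\Pibs_N)\d_{2,N}\right].
\notag
\end{align}
Both sequences $(\d_{j,N})$ are deterministic with bounded norms ($\|\d_{1,N}\|^2\to c^2/3$, $\|\d_{2,N}\|=1$), so Theorem \ref{theorem:clt_subspace} applied to this bilinear form yields $\hat{\eta}_N'(\theta_{k,N}) = 2\sqrt{N\gamma_N}\,Z_N$ with $Z_N \xrightarrow[N\to\infty]{\Dcal}\Ncal_{\Rbb}(0,1)$, provided $\liminf_N\gamma_N>0$, a condition that must be checked separately under each of Assumptions \ref{assumption:fixed_DoA} and \ref{assumption:close_DoA}.

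For the denominator, twice-differentiating the pseudo-spectrum gives
\begin{align}
\hat{\eta}_N''(\theta) = 2\Re\!\left[\a''(\theta)^*\tilde{\Pibs}_N\a(\theta)\right] + 2\a'(\theta)^*\tilde{\Pibs}_N\a'(\theta),
\notag
\end{align}
and I would show $N^{-2}\hat{\eta}_N''(\tilde{\theta}_{k,N})\to 2\d_{1,N}^*\Pibs_N\d_{1,N}$. The second term supplies the leading contribution: the pointwise bilinear-form consistency \eqref{eq:consistency_subspace_estimator}, combined with $\|\a'(\tilde{\theta}_{k,N})-\a'(\theta_{k,N})\|=\Ocal(M^2|\tilde{\theta}_{k,N}-\theta_{k,N}|)=o(M)$, yields $N^{-2}\a'(\tilde{\theta}_{k,N})^*\tilde{\Pibs}_N\a'(\tilde{\theta}_{k,N})=\d_{1,N}^*\Pibs_N\d_{1,N}+o(1)$. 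The first term is $o(1)$ after normalization: $\Pibs_N\a(\theta_{k,N})=0$ together with $\|\a''\|,\|\a'\|=\Ocal(M^2),\Ocal(M)$ and $N|\tilde{\theta}_{k,N}-\theta_{k,N}|\to 0$ imply $\|\Pibs_N\a(\tilde{\theta}_{k,N})\|=o(1)$ via a first-order expansion of $\theta\mapsto\Pibs_N\a(\theta)$ at $\theta_{k,N}$, so that $N^{-2}|\a''(\tilde{\theta}_{k,N})^*\Pibs_N\a(\tilde{\theta}_{k,N})|\leq N^{-2}\|\a''\|\cdot o(1)=o(1)$, and replacing $\Pibs_N$ by $\tilde{\Pibs}_N$ contributes a further $o(1)$ by the same consistency.

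Combining these two facts via Slutsky's theorem gives $N^{3/2}(\hat{\theta}_{k,N}-\theta_{k,N}) = -\sqrt{\gamma_N}\,Z_N/\d_{1,N}^*\Pibs_N\d_{1,N} + o_{\Pbb}(1)$, which rearranges to \eqref{eq:CLT_G_MUSIC}. The main obstacle is Assumption \ref{assumption:close_DoA}: since $\theta_{k,N}$ then moves with $N$, the eigenvectors $\u_{1,N},\u_{2,N}$, the limiting eigenvalues $\lambda_{1}(\alpha),\lambda_{2}(\alpha)$ of $\B_N\B_N^*$, and the quantities $\gamma_N$ and $\d_{1,N}^*\Pibs_N\d_{1,N}$ all depend on $N$. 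Verifying $\liminf_N\gamma_N>0$ and ensuring that the Taylor remainders genuinely remain $o_{\Pbb}(1)$ as the DoAs collide require an explicit computation of the inner products $\d_{j,N}^*\u_{\ell,N}$ via the closed-form signal-subspace eigenvectors of the two-source model, together with the fact that the intermediate point $\tilde{\theta}_{k,N}$ is uniformly close to $\theta_{k,N}$ at rate $o(1/N)$.
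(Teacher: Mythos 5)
Your architecture is exactly the paper's: first-order condition plus Taylor expansion, Theorem \ref{theorem:clt_subspace} applied to $\hat{\eta}'_N(\theta_{k,N})=2N\Re\bigl[\d_{1,N}^*\tilde{\Pibs}_N\d_{2,N}\bigr]$ for the numerator, convergence of $N^{-2}\hat{\eta}_N^{(2)}$ to $2\,\d_{1,N}^*\Pibs_N\d_{1,N}$ for the denominator, and Slutsky. The only structural difference is cosmetic: the paper expands $\hat{\eta}'_N$ to \emph{third} order so that the second derivative is evaluated at the deterministic point $\theta_{k,N}$, and controls the remainder by the uniform bound $N^{-3}\hat{\eta}^{(3)}_N=\Ocal(1)$ borrowed from \cite{hachem2012large}; you keep the mean-value form and must therefore evaluate $\hat{\eta}''_N$ at the random intermediate point $\tilde{\theta}_{k,N}$, which your derivative estimates $\|\a^{(j)}(\theta)\|=\Ocal(M^j)$ and $N|\tilde{\theta}_{k,N}-\theta_{k,N}|\to 0$ do handle (note only that $\tilde{\Pibs}_N-\Pibs_N$ is not small in operator norm, so the replacement of $\Pibs_N$ by $\tilde{\Pibs}_N$ must go through the bilinear-form consistency \eqref{eq:consistency_subspace_estimator} at a deterministic argument, i.e.\ after first shifting $\tilde{\theta}_{k,N}$ back to $\theta_{k,N}$).

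The genuine gap is the step you yourself flag as ``the main obstacle'' and then defer: the verification that $\liminf_N\gamma_N>0$ (hypothesis of Theorem \ref{theorem:clt_subspace}) and that $\liminf_N\d_{1,N}^*\Pibs_N\d_{1,N}>0$ (without which the denominator may degenerate and the normalization in \eqref{eq:CLT_G_MUSIC} is vacuous), in particular under Assumption \ref{assumption:close_DoA}. The explicit eigenvector computation you propose is unnecessary and would be painful; the paper disposes of both points in a few lines. First, since $\a(\theta_{k,N})$ lies in the signal subspace, $\sum_{\ell=1}^K|\a(\theta_{k,N})^*\u_{\ell,N}|^2=1$, and keeping in \eqref{def:Gamma} only the signal--noise cross terms gives
\begin{align}
\gamma_N\;\geq\;\Bigl(\tfrac{\a'(\theta_{k,N})^*}{N}\Pibs_N\tfrac{\a'(\theta_{k,N})}{N}\Bigr)\sum_{\ell=1}^K\frac{\sigma^2(\lambda_\ell+\sigma^2)}{4(\lambda_\ell^2-\sigma^4c)}\bigl|\a(\theta_{k,N})^*\u_{\ell,N}\bigr|^2\;\geq\; D\,\d_{1,N}^*\Pibs_N\d_{1,N},\qquad D>0,
\end{align}
so both positivity statements reduce to $\liminf_N\d_{1,N}^*\Pibs_N\d_{1,N}>0$. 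Second, under Assumption \ref{assumption:close_DoA} this follows from the same $\Lcal^2_{\Cbb}([0,1])$ device as in Lemma \ref{lemma:kappa}: $\d_{1,N}^*\Pibs_N\d_{1,N}$ converges to the squared norm of the orthogonal projection of $y(t)=\irm c t$ onto $\spanbis\{1,\erm^{\irm\alpha c t}\}^{\perp}$, which is strictly positive because $t\mapsto \irm ct$ does not belong to $\spanbis\{1,\erm^{\irm\alpha ct}\}$. (Under Assumption \ref{assumption:fixed_DoA} the limit is simply $c^2/12$.) With these two observations inserted, your proof closes and coincides with the paper's.
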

In particular, by considering the settings of Assumption \ref{assumption:fixed_DoA} and adding the following spatial uncorrelation condition
\begin{align}
	\frac{\S_N\S_N^*}{N} \xrightarrow[N\to\infty]{} 
	\diag\left(\lambda_1,\ldots,\lambda_K\right),
	\label{eq:spatial_uncorrelation}
\end{align}
we obtain, using the usual asymptotic orthogonality between $\a(\theta_k)$ and $\u_{k',N}$ for $k \neq k'$ (see e.g. \cite[Lem. 8]{hachem2012subspace}),
\begin{align}
	\d_{1,N}^* \Pibs_N \d_{1,N}\xrightarrow[N\to\infty]{} \frac{c^2}{12} 
	\text{ and } 
	\gamma_N \xrightarrow[N\to\infty]{} \frac{c^2}{24} \frac{\sigma^2 (\lambda_k+\sigma^2)}{\lambda_k^2 - \sigma^4 c}.
	\notag
\end{align}
Thus, we retrieve the results of \cite{hachem2012subspace} under this particular assumption:
\begin{align}
		N^{3/2} \left(\hat{\theta}_{k,N} - \theta_{k}\right) 
		\xrightarrow[N\to\infty]{\Dcal} \Ncal_{\Rbb}\left(0,\frac{6}{c^2} \frac{\sigma^2 (\lambda_k+\sigma^2)}{\lambda_k^2 - \sigma^4 c}\right),
		\label{eq:clt_GMUSIC_wide}
\end{align}
Therefore, Theorem \ref{theorem:clt_GMUSIC} extends the results of \cite{hachem2012subspace} to more general scenarios of correlated sources and not necessarily widely distributed sources.

Concerning the MUSIC method, we obtain the following result in the widely spaced DoA scenario.
\begin{theorem}
	\label{theorem:clt_MUSIC}
	Let $\eta_N^{(t) (2)} (\theta)$ be the second order derivative of 
	$\theta \mapsto \eta_N^{(t)}(\theta)$ defined in \eqref{eq:music_asymp_eq}.
	Under Assumption \ref{assumption:fixed_DoA}, and if 
	\begin{align}
		\liminf_{N \to \infty}  |\eta_N^{(t)(2)}(\theta_k)| > 0, 
		\notag
	\end{align}
	it holds that
	\begin{align}
		N^{3/2} \sqrt{\frac{ \eta_N^{(t)(2)}(\theta_k)^2}{4 \gamma_N^{(t)}}}\left(\hat{\theta}^{(t)}_{k,N} - \theta_{k} \right) 
		\xrightarrow[N\to\infty]{\Dcal} \Ncal_{\Rbb}\left(0,1\right),
		\label{eq:clt_MUSIC_wide}
	\end{align}
	for $k=1,\ldots,K$, where $\gamma_N^{(t)}$ is defined by \eqref{def:Gamma_t} by setting $\d_{1,N}=N^{-1}\a'(\theta_{k})$ and $\d_{2,N}=\a(\theta_{k})$, 	
\end{theorem}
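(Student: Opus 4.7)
The plan is to apply the classical M-estimator delta method, with Theorem \ref{theorem:clt_subspace_trad} playing the role of the score CLT. By the $N$-consistency established in Theorem \ref{theorem:consistency_MUSIC_GMUSIC_low}, the estimator $\hat{\theta}^{(t)}_{k,N}$ lies almost surely in the interior of $\Ical_k$ for $N$ large enough, which makes the first-order optimality condition $\hat{\eta}_N^{(t)'}(\hat{\theta}^{(t)}_{k,N}) = 0$ available. A second-order Taylor expansion of $\hat{\eta}_N^{(t)'}$ around $\theta_k$ then gives
\[
\hat{\theta}^{(t)}_{k,N} - \theta_k = -\frac{\hat{\eta}_N^{(t)'}(\theta_k)}{\hat{\eta}_N^{(t)''}(\tilde{\theta}_N)},
\]
for some $\tilde{\theta}_N$ between $\hat{\theta}^{(t)}_{k,N}$ and $\theta_k$, which satisfies $|\tilde{\theta}_N - \theta_k| = o(1/N)$ almost surely.

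The main step is to observe that $\hat{\eta}_N^{(t)'}(\theta_k) = 2\Re[\a'(\theta_k)^* \hat{\Pibs}_N \a(\theta_k)] = 2N\,\Re[\d_{1,N}^* \hat{\Pibs}_N \d_{2,N}]$ with $\d_{1,N} = N^{-1}\a'(\theta_k)$ and $\d_{2,N} = \a(\theta_k)$, which is precisely the bilinear form appearing in Theorem \ref{theorem:clt_subspace_trad}. Applying that CLT to this pair yields the asymptotic Gaussianity of $\hat{\eta}_N^{(t)'}(\theta_k) - \eta_N^{(t)'}(\theta_k)$, with variance controlled by $\gamma_N^{(t)}$, where $\eta_N^{(t)'}(\theta_k)$ is the derivative at $\theta_k$ of the deterministic equivalent \eqref{eq:music_asymp_eq}. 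A complementary bias argument is then needed to verify that $\eta_N^{(t)'}(\theta_k)$ is itself negligible relative to the random fluctuation. Under Assumption \ref{assumption:fixed_DoA}, $\theta_k$ is an asymptotic critical point of $\eta_N^{(t)}$, consistent with the proof of Theorem \ref{theorem:consistency_MUSIC_GMUSIC_low}, and a direct expansion using the Dirichlet-kernel structure of the inner products $\a(\theta_k)^* \u_{k',N}$ bounds $\eta_N^{(t)'}(\theta_k)$ by a quantity strictly smaller than the random fluctuation of order $O_\Pbb(\sqrt{N \gamma_N^{(t)}})$.

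It remains to handle the Hessian. Pointwise closeness $\hat{\eta}_N^{(t)''}(\theta_k) - \eta_N^{(t)''}(\theta_k) = o_\Pbb(N^2)$ follows from Theorem \ref{theorem:spiked_eigv} applied to the bilinear forms of $\hat{\Pibs}_N$ against the two rescaled vector pairs $(N^{-1}\a'(\theta_k), N^{-1}\a'(\theta_k))$ and $(N^{-2}\a''(\theta_k), \a(\theta_k))$ that generate $\hat{\eta}_N^{(t)''}(\theta_k)$. A mean-value bound combined with $\|\a'''(\theta)\| = O(N^3)$ and the consistency rate $|\tilde{\theta}_N - \theta_k| = o(1/N)$ transfers this closeness from $\theta_k$ to $\tilde{\theta}_N$. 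Under the assumption $\liminf_N |\eta_N^{(t)''}(\theta_k)| > 0$, interpreted at the $N^2$ leading-order scale, these estimates give $\hat{\eta}_N^{(t)''}(\tilde{\theta}_N)/\eta_N^{(t)''}(\theta_k) \to 1$ in probability, and Slutsky's theorem then produces the asserted CLT. The main technical obstacle is this Hessian step: extending Theorem \ref{theorem:spiked_eigv}, which is formulated for unit-norm vectors, to the bilinear forms against the high-norm derivatives $\a'$ and $\a''$, and ensuring enough local uniformity in $\theta$ to absorb the remainder in $|\tilde{\theta}_N - \theta_k|$, both require careful tracking of the $N$-scalings of the fluctuations of $\hat{\Pibs}_N$.
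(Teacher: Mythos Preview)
Your proposal is correct and follows essentially the same delta-method scheme the paper uses for Theorem~\ref{theorem:clt_GMUSIC} (the paper omits the proof of Theorem~\ref{theorem:clt_MUSIC} precisely because it is the same argument with Theorem~\ref{theorem:clt_subspace_trad} replacing Theorem~\ref{theorem:clt_subspace}). Two minor differences are worth noting. First, the paper expands $\hat{\eta}_N^{(t)'}$ to \emph{second} order, so that the Hessian sits at the fixed point $\theta_k$ and the remainder involves the third derivative at an intermediate point; this avoids having to transfer convergence of the Hessian from $\theta_k$ to $\tilde{\theta}_N$, although your mean-value route with the bound $\|\a^{(3)}(\theta)\|=\Ocal(N^3)$ achieves the same thing. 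Second, and more substantively, you correctly single out a step that is \emph{absent} in the G-MUSIC proof: for G-MUSIC the centering $\eta_N$ in Theorem~\ref{theorem:clt_subspace} satisfies $\eta_N'(\theta_k)=2\Re[\a'(\theta_k)^*\Pibs_N\a(\theta_k)]=0$ exactly, whereas for MUSIC the centering $\eta_N^{(t)}$ from Theorem~\ref{theorem:clt_subspace_trad} does not vanish at $\theta_k$, so one must check that $\eta_N^{(t)'}(\theta_k)=o(\sqrt{N})$. Your Dirichlet-kernel expansion via $\U_N=\A(\A^*\A)^{-1/2}\V_N$ indeed yields $\eta_N^{(t)'}(\theta_k)=\Ocal(1)$ under Assumption~\ref{assumption:fixed_DoA}, which suffices.
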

The proof of Theorem \ref{theorem:clt_MUSIC}, which is based on the CLT of Theorem \ref{theorem:clt_subspace_trad}, is similar to the one of Theorem \ref{theorem:clt_GMUSIC} and is therefore omitted.

Theorem \ref{theorem:clt_MUSIC}, having been derived under Assumption \ref{assumption:fixed_DoA}, allows in particular correlation between source signals. 
Moreover, by assuming asymptotic uncorrelation between sources, i.e. that \eqref{eq:spatial_uncorrelation} holds, we obtain
\begin{align}
	\frac{1}{N^2} \eta_N^{(t)(2)}(\theta_k) \xrightarrow[N\to\infty]{} \frac{c^2 (\lambda_k^2 - \sigma^4 c)}{6 \lambda_k (\lambda_k + \sigma^2 c)},
\end{align}
and
\begin{align}
	\gamma^{(t)}_N \xrightarrow[N\to\infty]{} \frac{c^2 (\lambda_k + \sigma^2)(\lambda_k^2 - \sigma^4 c)}{24 \lambda_k^2 (\lambda_k + \sigma^2 c)^2},
	\notag
\end{align}
which implies
\begin{align}
	N^{3/2} \left(\hat{\theta}_{k,N}^{(t)} - \theta_{k}\right) 
	\xrightarrow[N\to\infty]{\Dcal} \Ncal_{\Rbb}\left(0,\frac{6}{c^2} \frac{\sigma^2 (\lambda_k+\sigma^2)}{\lambda_k^2 - \sigma^4 c}\right).
	\label{eq:clt_MUSIC_wide2}
\end{align}
The striking fact about Theorem \ref{theorem:clt_MUSIC} is that, in the widely spaced scenario, the variance of the MUSIC estimates obtained in \eqref{eq:clt_MUSIC_wide2} coincides with the variance of the G-MUSIC estimates \eqref{eq:clt_GMUSIC_wide} previously derived in \cite{hachem2012subspace}. 
This shows that MUSIC and G-MUSIC present exactly the same asymptotic performance for widely spaced DoA and uncorrelated sources, which reinforces the conclusions given in Section \ref{section:consistency_wide}.

		\subsection{Numerical examples}
		\label{section:simu}		
		
In this section, we provide numerical simulations illustrating the results given in the previous sections.

To illustrate the similarity between the theoretical MSE (formula of Theorem \ref{theorem:clt_GMUSIC}) and its approximation for uncorrelated source signal
and widespace DoA (specific formula of \eqref{eq:clt_GMUSIC_wide}), we plot these two formulas in Figure \ref{figure:mse_wide_uncorr}
and Figure \ref{figure:mse_wide_corr}, together with the empirical MSE of the G-MUSIC estimate $\hat{\theta}_{1,N}$ and the Cramer-Rao bound (CRB). 
The parameters are  $K = 2$, $M=40$, $N=80$, $\mathrm{SNR}=-10 \log (\sigma^2)$.
In Figure \ref{figure:mse_wide_uncorr}, we consider the context of widespace DoA with uncorrelated source signal, by choosing a signal matrix $\S_N$ with standard i.i.d $\Ncal_{\Cbb}(0,1)$ entries, and setting $\theta_1=0$, $\theta_2=5 \times \frac{2 \pi}{M}$. The separation condition $\lambda_K > \sigma^2 \sqrt{c}$ occurs around SNR = 0 dB.
In this situation, we notice that the two MSE formulas match, as discussed in Section \ref{section:clt_MUSIC_GMUSIC}.
\begin{figure}[h!]
	\centering
	\subfigure[Uncorrelated source signals]{\label{figure:mse_wide_uncorr}\includegraphics[scale=0.4]{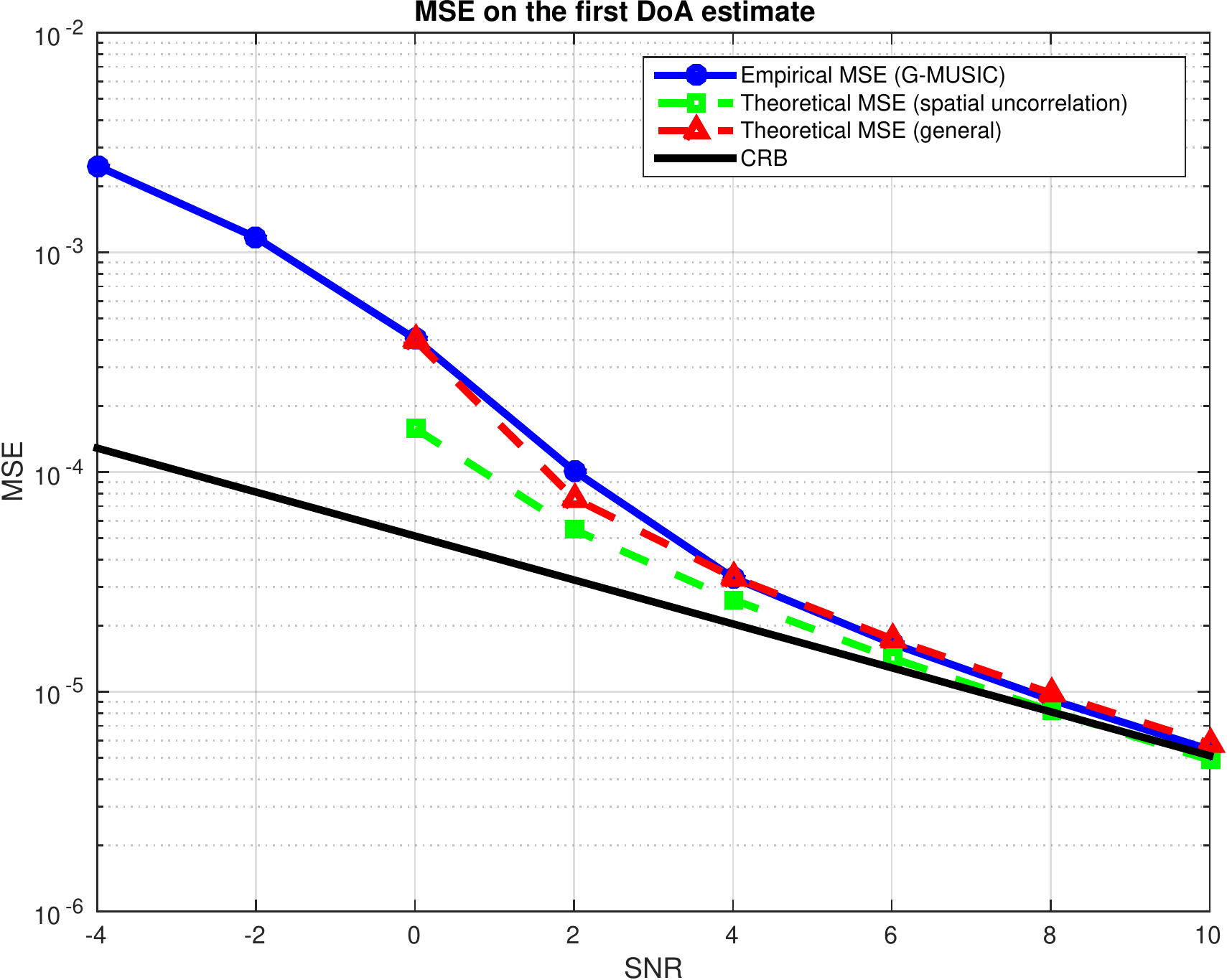}}
	\subfigure[Correlated source signals]{\label{figure:mse_wide_corr}\includegraphics[scale=0.4]{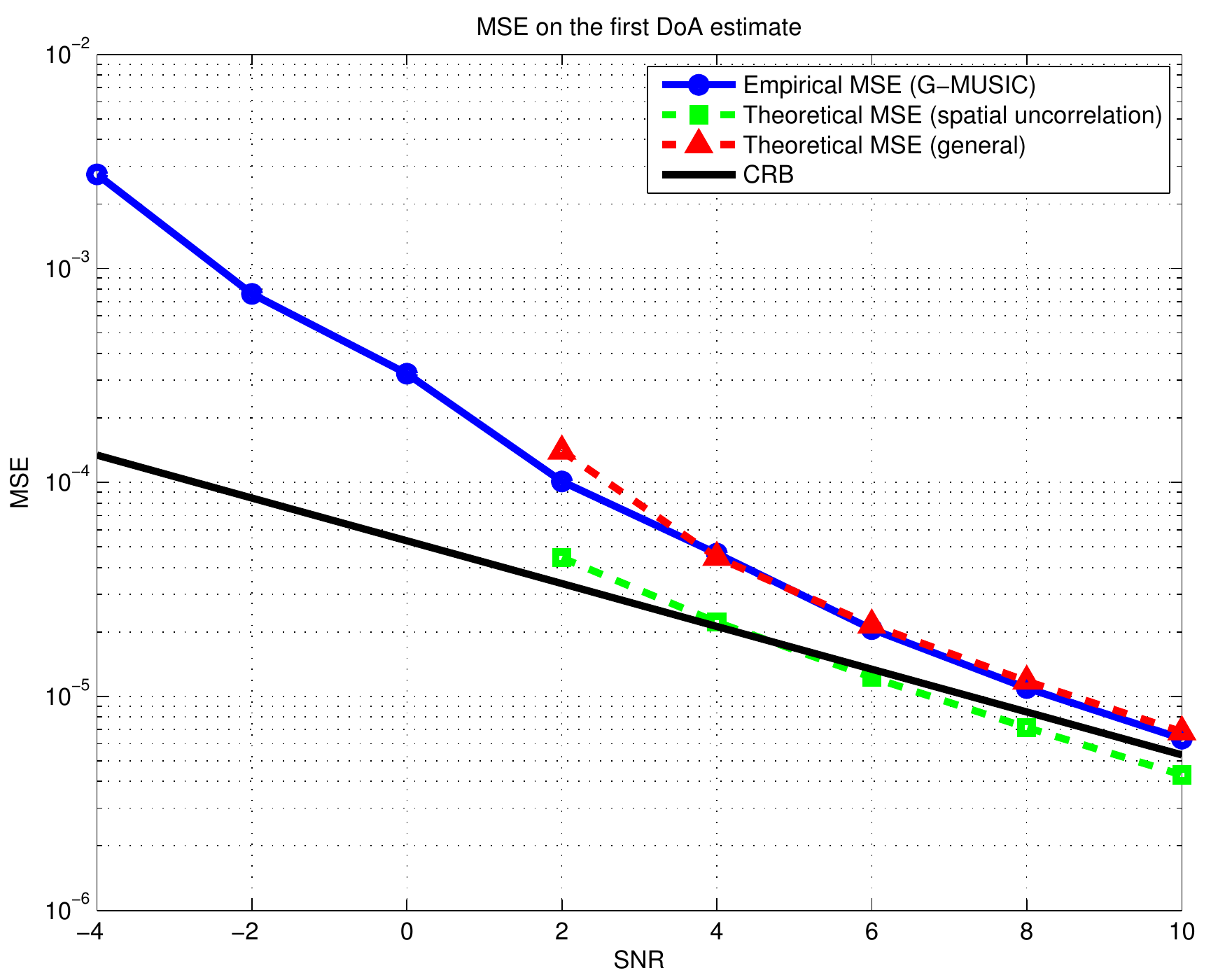}}
	\caption{Empirical MSE of $\hat{\theta}_{1,N}$ for widely spaced DoA versus SNR (dB)}
	\label{figure:mse_wide}
\end{figure}
In Figure \ref{figure:mse_wide_corr}, we consider the context of widespace DoA with significant correlation between source signals, by choosing a matrix $\S_N=\R^{1/2}\X_N$ with $\R=[1,0.4;0.4,1]$ and $\X_N$ having standard i.i.d $\Ncal_{\Cbb}(0,1)$ entries. 
The separation condition occurs around SNR = 2 dB.
We notice that the MSE formula of Theorem \ref{theorem:clt_GMUSIC} is relatively accurate while a discrepancy may occur 
for the formula \eqref{eq:clt_GMUSIC_wide}, since the spatial uncorrelation is not fulfilled in that case.

In Figure \ref{figure:mse_MUSIC_GMUSIC_wide_uncorr}, we consider the context of widespace DoA and uncorrelated source signals, and compare the performance of G-MUSIC, MUSIC and DoA estimation with spatial periodogram, in terms of MSE on the first DoA estimate.
The empirical MSE of $\hat{\theta}_{1,N}$ together with its theoretical MSE given in Theorem \ref{theorem:clt_GMUSIC}, as well as the empirical MSE of $\hat{\theta}_{1,N}^{(t)}$ and $\hat{\theta}^{(p)}_{1,N}$ are plotted. The parameters are $M=40$, $N=80$, and $\theta_1=0$, $\theta_2=5 \times \frac{2 \pi}{M}$. The signal matrix $\S_N$ has standard i.i.d $\Ncal_{\Cbb}(0,1)$ entries, and the separation condition occurs around SNR = 0 dB.
\begin{figure}[h!]
	\centering
	\includegraphics[scale=0.4]{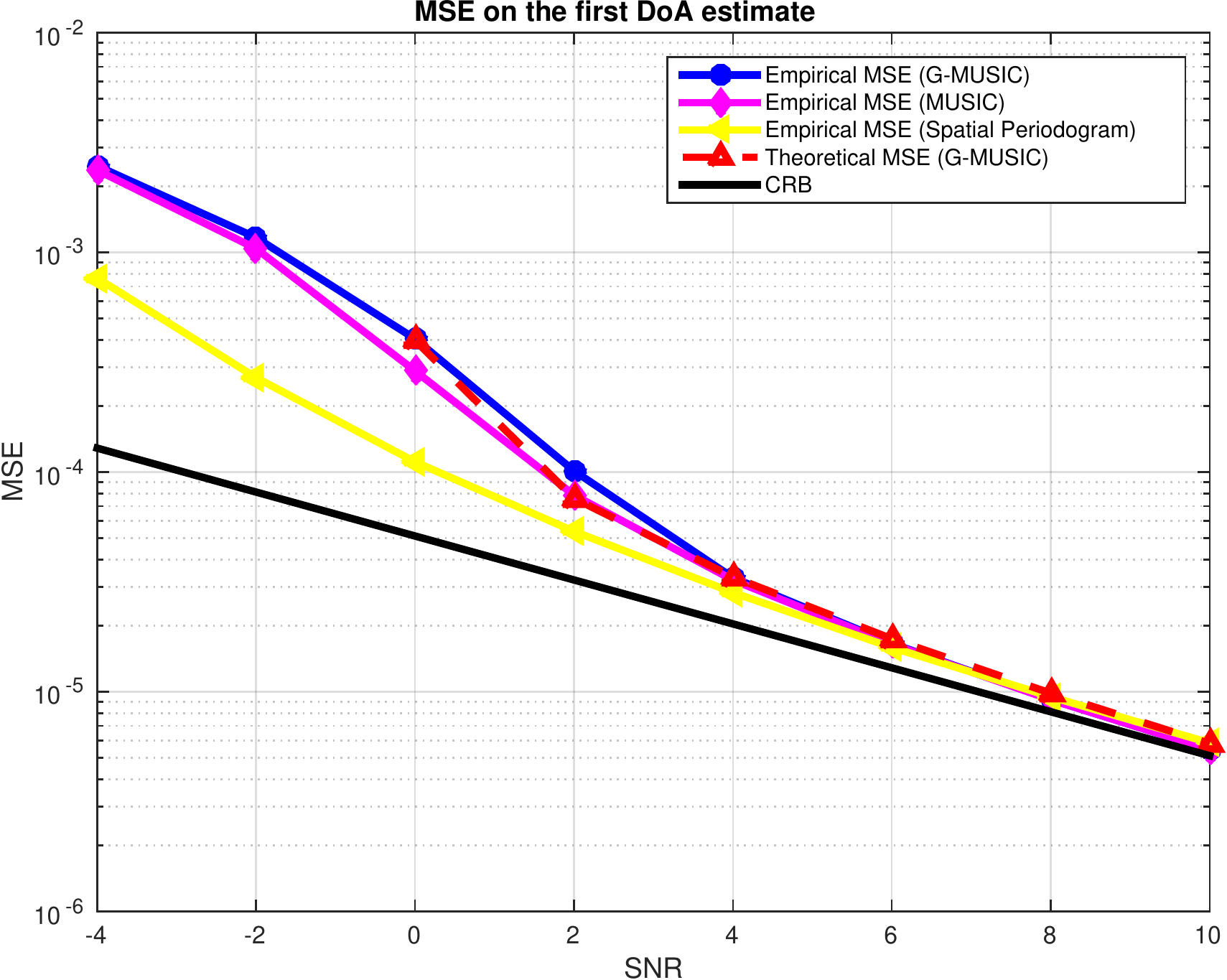}
	\caption
	{
		Empirical MSE of $\hat{\theta}_{1,N}$, $\hat{\theta}^{(t)}_{1,N}$ and $\hat{\theta}^{(p)}_{1,N}$ for widely spaced DoA and uncorrelated source signals,
		versus SNR (dB)
	}
	\label{figure:mse_MUSIC_GMUSIC_wide_uncorr}
\end{figure}
We notice in Figure \ref{figure:mse_MUSIC_GMUSIC_wide_uncorr} that the performance of G-MUSIC, MUSIC as well as the DoA estimate from the spatial periodogram coincide, since the source DoA are widely spaced (five times the beamwidth $\frac{2 \pi}{M}$). We also notice that the threshold effect of the spatial periodogram is less significant, since it is not constrained by the subspace separation condition (see Section \ref{section:spatial_periodogram}).

In Figure \ref{figure:mse_MUSIC_GMUSIC_wide_corr}, we consider the same simulation as for Figure \ref{figure:mse_MUSIC_GMUSIC_wide_uncorr}, except that we add
significant correlation between sources, by taking $\S_N=\R^{1/2}\X_N$ with $\R=[1,0.4;0.4,1]$ and $\X_N$ having standard i.i.d $\Ncal_{\Cbb}(0,1)$ entries. 
\begin{figure}[h!]
	\centering
	\includegraphics[scale=0.4]{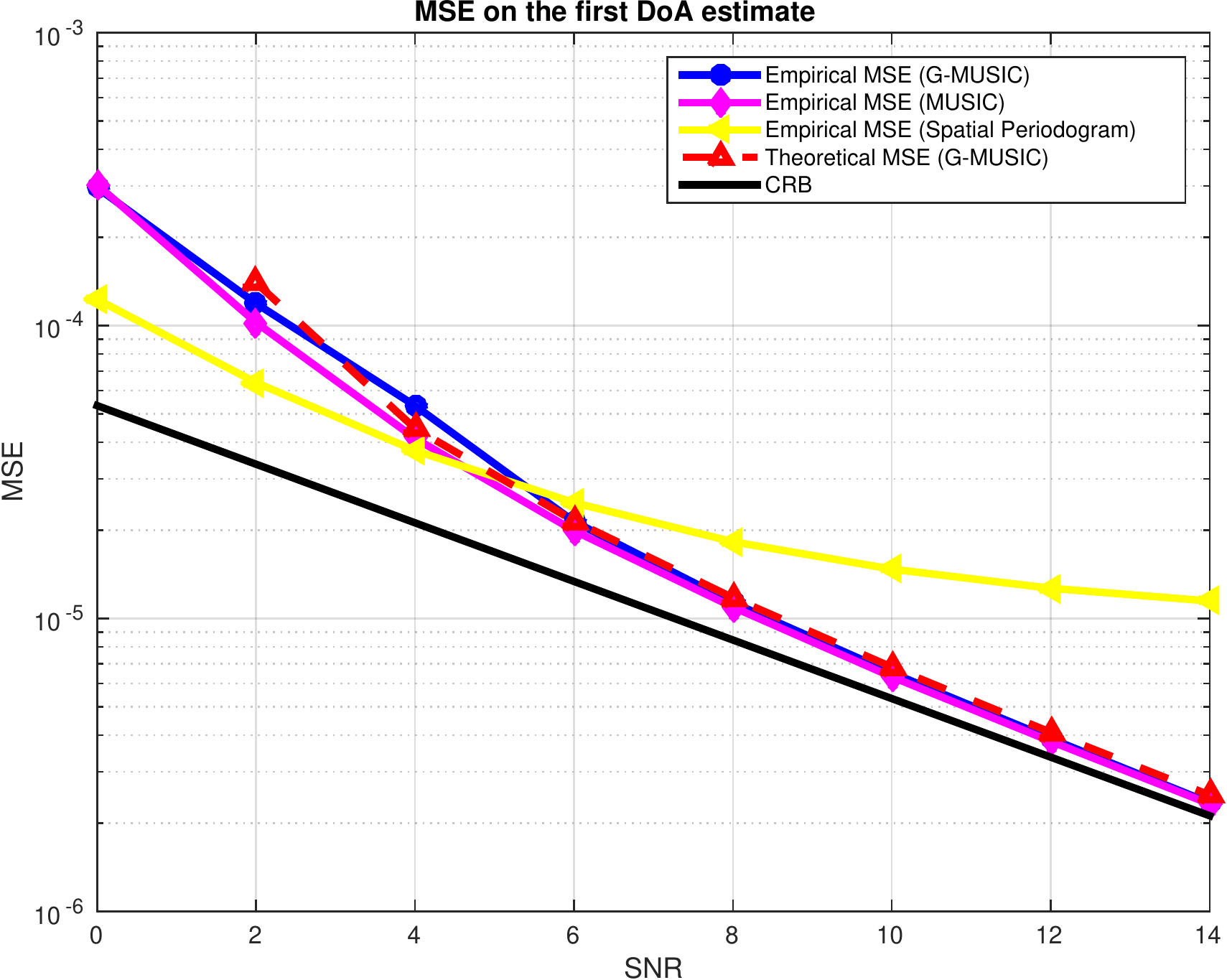}
	\caption
	{
		Empirical MSE of $\hat{\theta}_{1,N}$, $\hat{\theta}^{(t)}_{1,N}$ and $\hat{\theta}^{(p)}_{1,N}$ for widely spaced DoA and correlated source signals,
		versus SNR (dB)
	}
	\label{figure:mse_MUSIC_GMUSIC_wide_corr}
\end{figure}
Again, we notice that both G-MUSIC and MUSIC perform well, since the DoA are widely spaced. Concerning the spatial periodogram method, we notice that a strong bias occurs at high SNR, which corresponds to the well-known effect of source correlation on spatial beamforming techniques (see \cite{reddy1987performance}).

To illustrate the asymptotic Gaussianity of the G-MUSIC and MUSIC estimates predicted in Theorems \ref{theorem:clt_GMUSIC} and \ref{theorem:clt_MUSIC}, we plot
in Figure \ref{figure:hist} the histograms of $\hat{\theta}_{2,N}$ and $\hat{\theta}_{2,N}^{(t)}$ (5000 draws), with the parameters used in Figure   \ref{figure:mse_MUSIC_GMUSIC_wide_corr} (widely spaced DoA and correlated source signals, SNR=$6$ dB).

\begin{figure}
	\centering
	\subfigure[G-MUSIC]{\includegraphics[scale=0.4]{./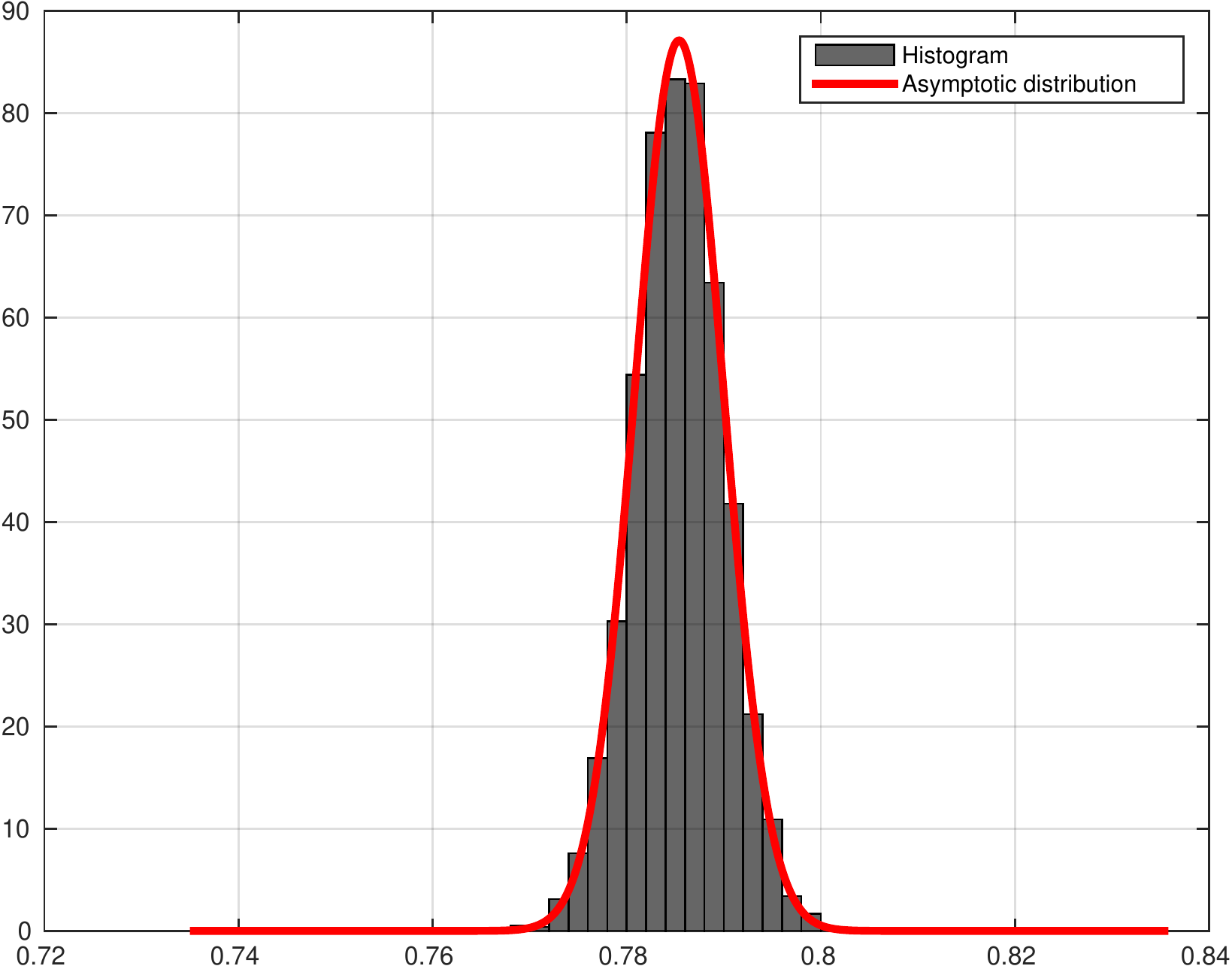}}
	\subfigure[MUSIC]{\includegraphics[scale=0.4]{./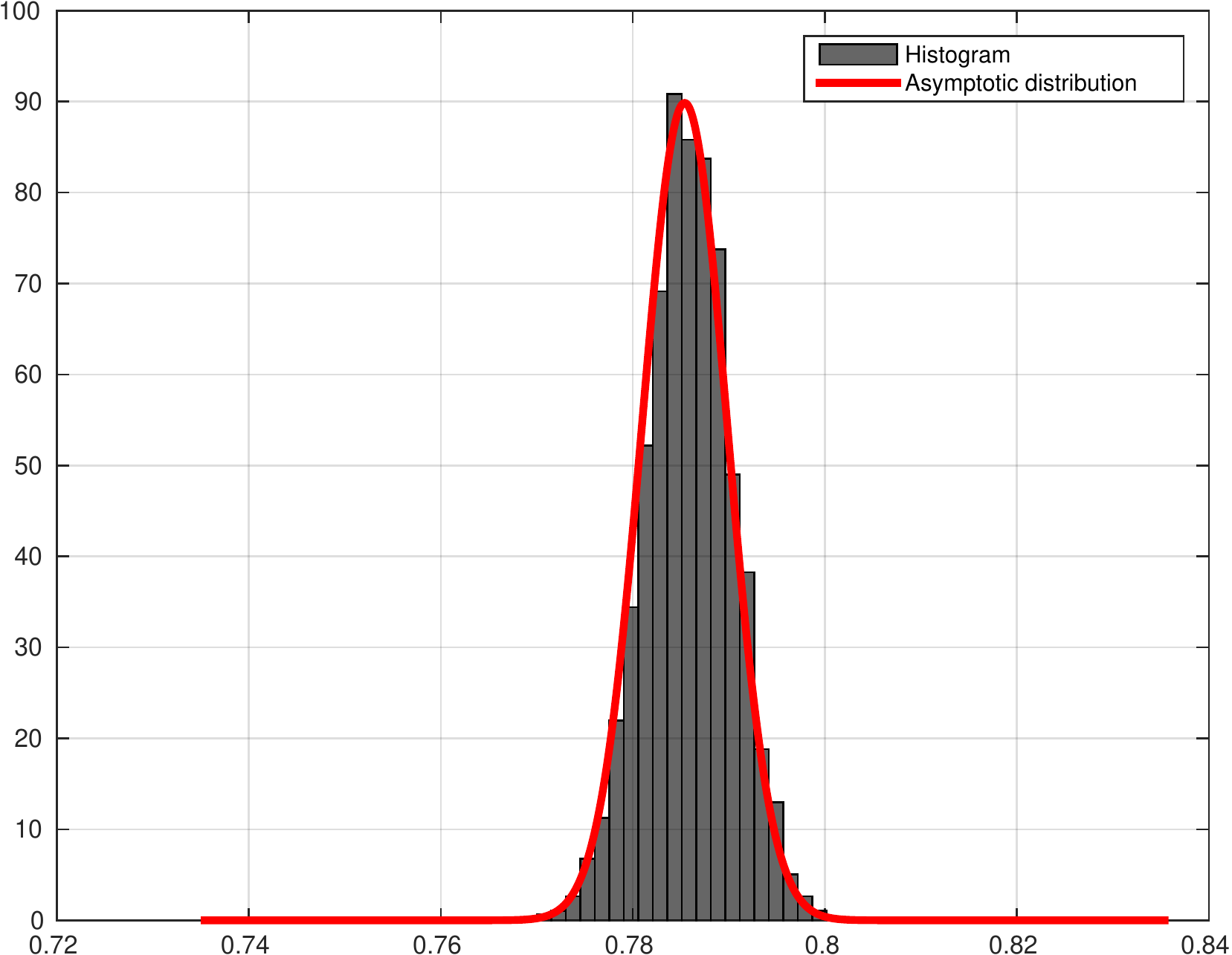}}
	\caption{Histograms of $\hat{\theta}_{2,N}$ and $\hat{\theta}_{2,N}^{(t)}$ compared to their asymptotic Gaussian distribution}
	\label{figure:hist}
\end{figure}

Figure \ref{figure:mse_MUSIC_GMUSIC_close} illustrates the closely spaced DoA scenario, and the parameters are the same as in Figure \ref{figure:mse_MUSIC_GMUSIC_wide_uncorr}, except for the DoA fixed to $\theta_1=0$, $\theta_2= 0.25 \times \frac{2\pi}{M}$.
The separation condition is fulfilled for all SNR. 
\begin{figure}
	\centering
	\includegraphics[scale=0.4]{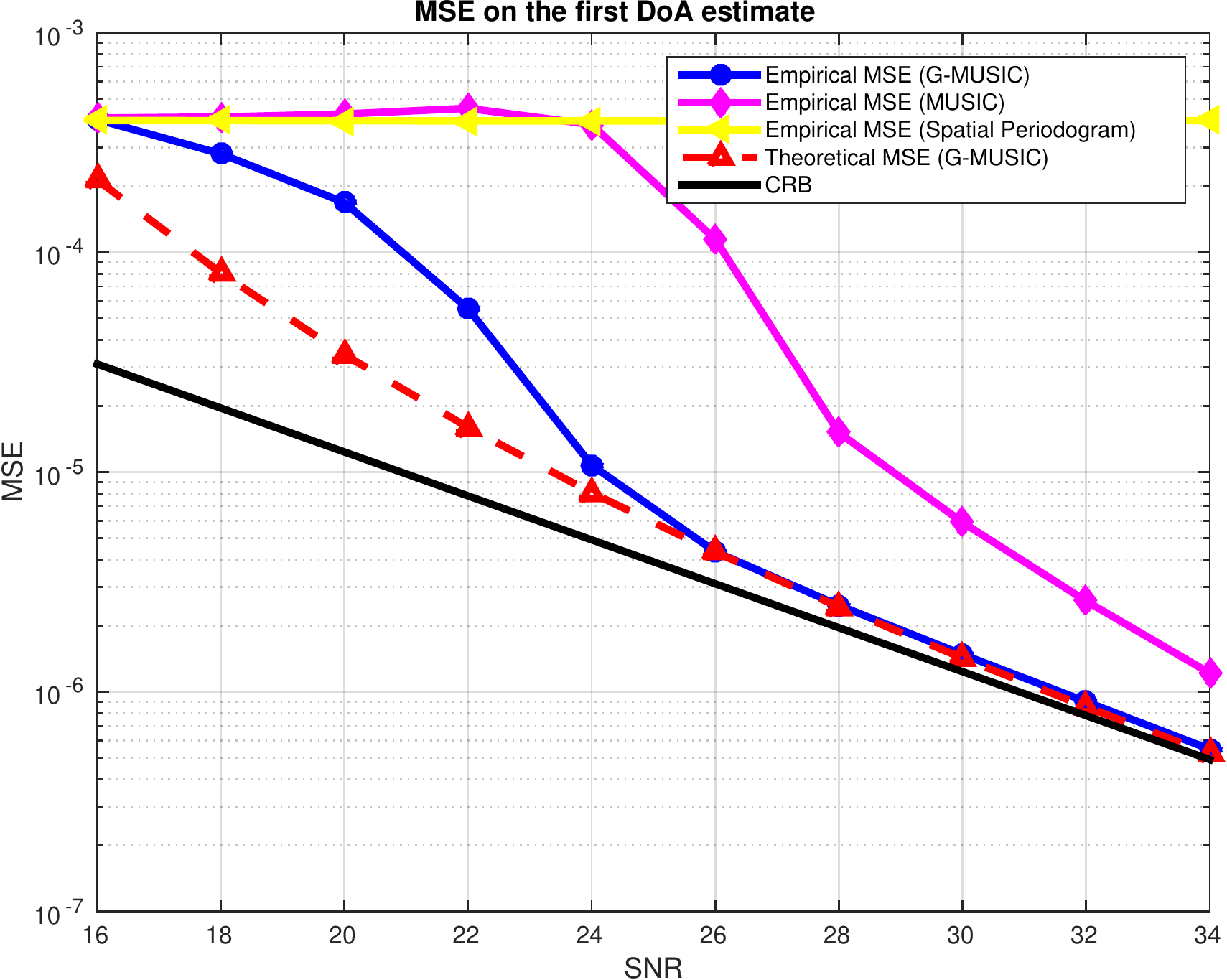}
	\caption{Empirical MSE of $\hat{\theta}_{1,N}$, $\hat{\theta}^{(t)}_{1,N}$ and $\hat{\theta}^{(p)}_{1,N}$ for closely spaced DoA versus SNR (dB)}
	\label{figure:mse_MUSIC_GMUSIC_close}
\end{figure}
One can observe that a strong difference occurs between the performances of the G-MUSIC and MUSIC methods, e.g. a difference of 4 dB between the threshold points 
of G-MUSIC and MUSIC can be measured, which illustrates the result of Theorem \ref{theorem:consistency_close}. Moreover, we notice the poor performance
of the spatial periodogram DoA estimate, which suffers from the well-known resolution loss, since the DoA spacing is lower than a beamwidth.

Similarly, in Figure \ref{figure:mse_MUSIC_GMUSIC_close_undersampled}, we keep the same parameters as for Figure \ref{figure:mse_MUSIC_GMUSIC_close}
except that $M=40$ and $N=20$. Thus, we consider an ``undersampled" scenario in which $N > M$.
\begin{figure}
	\centering
	\includegraphics[scale=0.4]{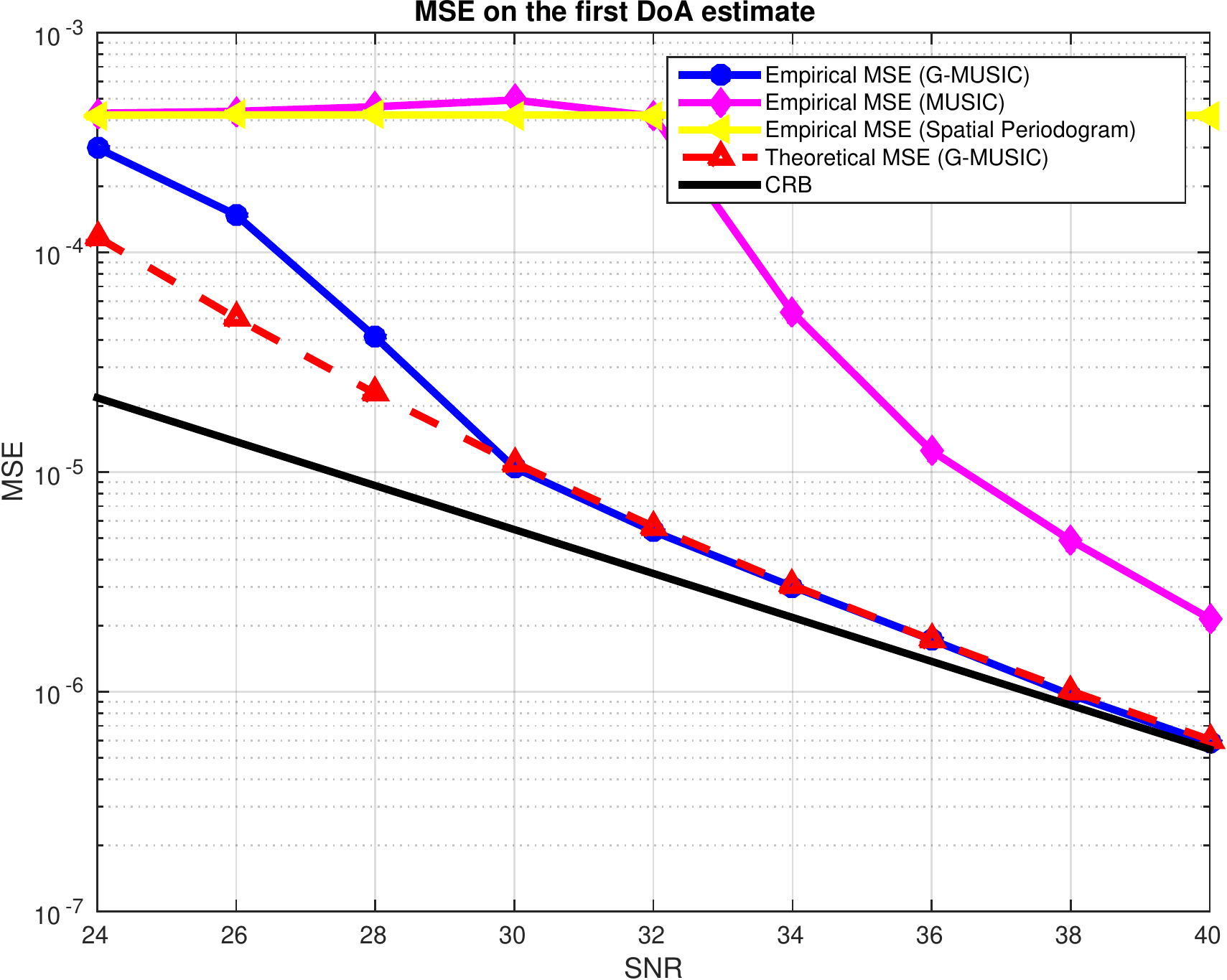}
	\caption
	{
		Empirical MSE of $\hat{\theta}_{1,N}$, $\hat{\theta}^{(t)}_{1,N}$ and $\hat{\theta}^{(p)}_{1,N}$ for closely spaced DoA (undersampled), 
		versus SNR (dB)
	}
	\label{figure:mse_MUSIC_GMUSIC_close_undersampled}
\end{figure}
In that case, G-MUSIC still outperforms the MUSIC estimates, with about 6 dB between the threshold points.

In Figure \ref{figure:mse_MUSIC_wide}, we provide the empirical MSE of MUSIC together with the theoretical MSE given in Theorem \ref{theorem:clt_MUSIC}.
The parameters are $M=40$, $N=80$, $\theta_1=0$, $\theta_2 = 5 \times \frac{2 \pi}{M}$, and correlated source signals with correlation matrix $\R=[1,0.4;0.4,1]$ and the separation condition occurs near $2$ dB.
\begin{figure}
	\centering
	\includegraphics[scale=0.4]{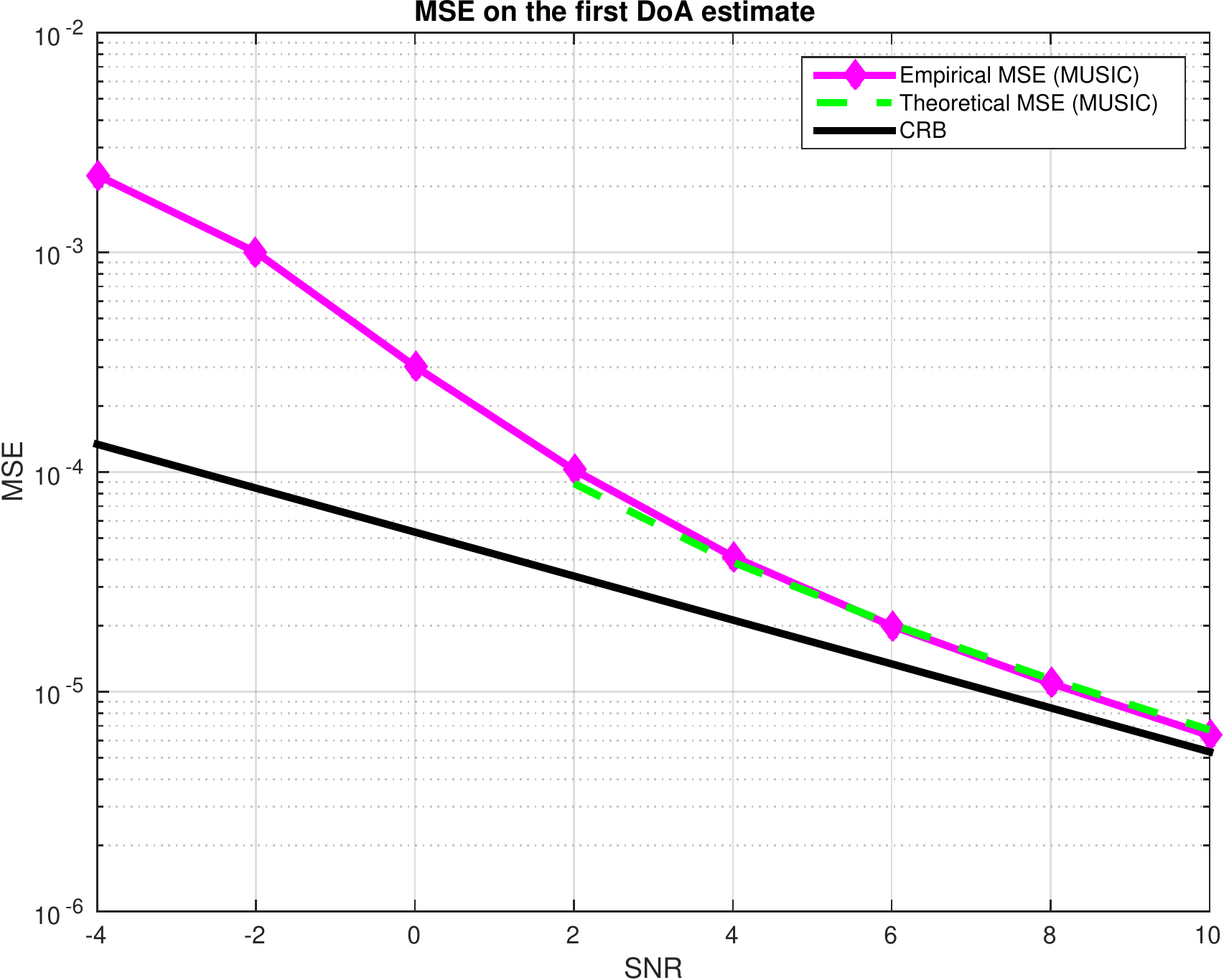}
	\caption{Empirical MSE of $\hat{\theta}_{1,N}^{(t)}$ for widely spaced DoA and correlated sources, versus SNR (dB)}
	\label{figure:mse_MUSIC_wide}
\end{figure}
One can observe the accuracy of the theoretical MSE predicted in Theorem \ref{theorem:clt_MUSIC}.	

Finally, we provide in Figures \ref{figure:mse_GMUSIC_cond_uncond_wide} and \ref{figure:mse_GMUSIC_cond_uncond_close} a comparison between the conditional and unconditional G-MUSIC estimates, using respectively the same scenarios as for Figure \ref{figure:mse_MUSIC_GMUSIC_wide_uncorr} and \ref{figure:mse_MUSIC_GMUSIC_close}.
The unconditional G-MUSIC estimator is computed with the formula of \cite{mestre2008modified}.
We observe that the two estimators exhibit the same empirical MSE as soon as the separation condition is fulfilled (around SNR = 2 dB for Figure \ref{figure:mse_GMUSIC_cond_uncond_wide} and verified for all SNR in Figure \ref{figure:mse_GMUSIC_cond_uncond_close}), which illustrates the remarks in Section \ref{section:connection} on the connections between both estimators.
	
\begin{figure}
	\centering
	\includegraphics[scale=0.4]{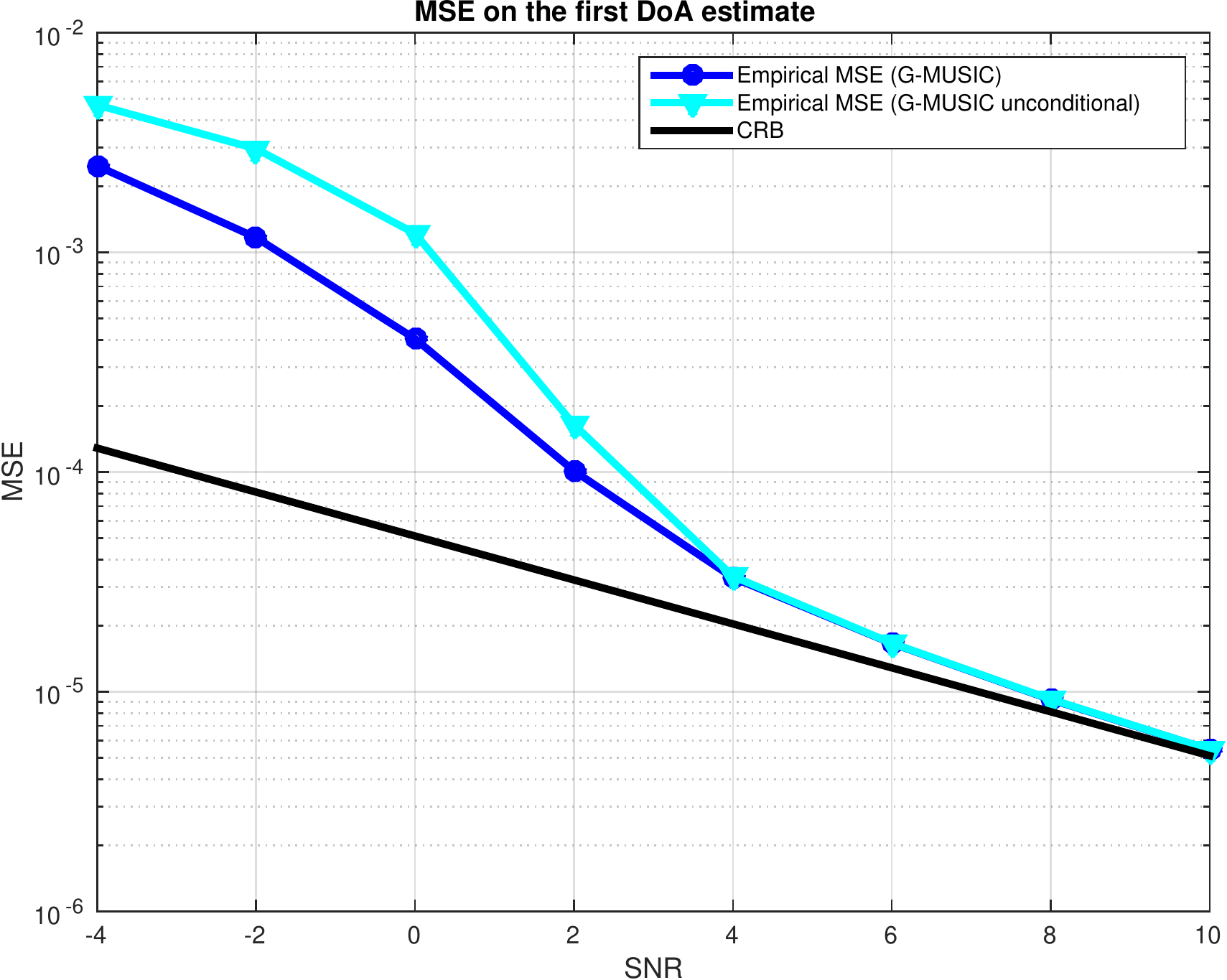}
	\caption{Empirical MSE of $\hat{\theta}_{1,N}$ (conditional and unconditional G-MUSIC), for widely spaced DoA and uncorrelated source, versus SNR (dB)}
	\label{figure:mse_GMUSIC_cond_uncond_wide}
\end{figure}
	
\begin{figure}
	\centering
	\includegraphics[scale=0.4]{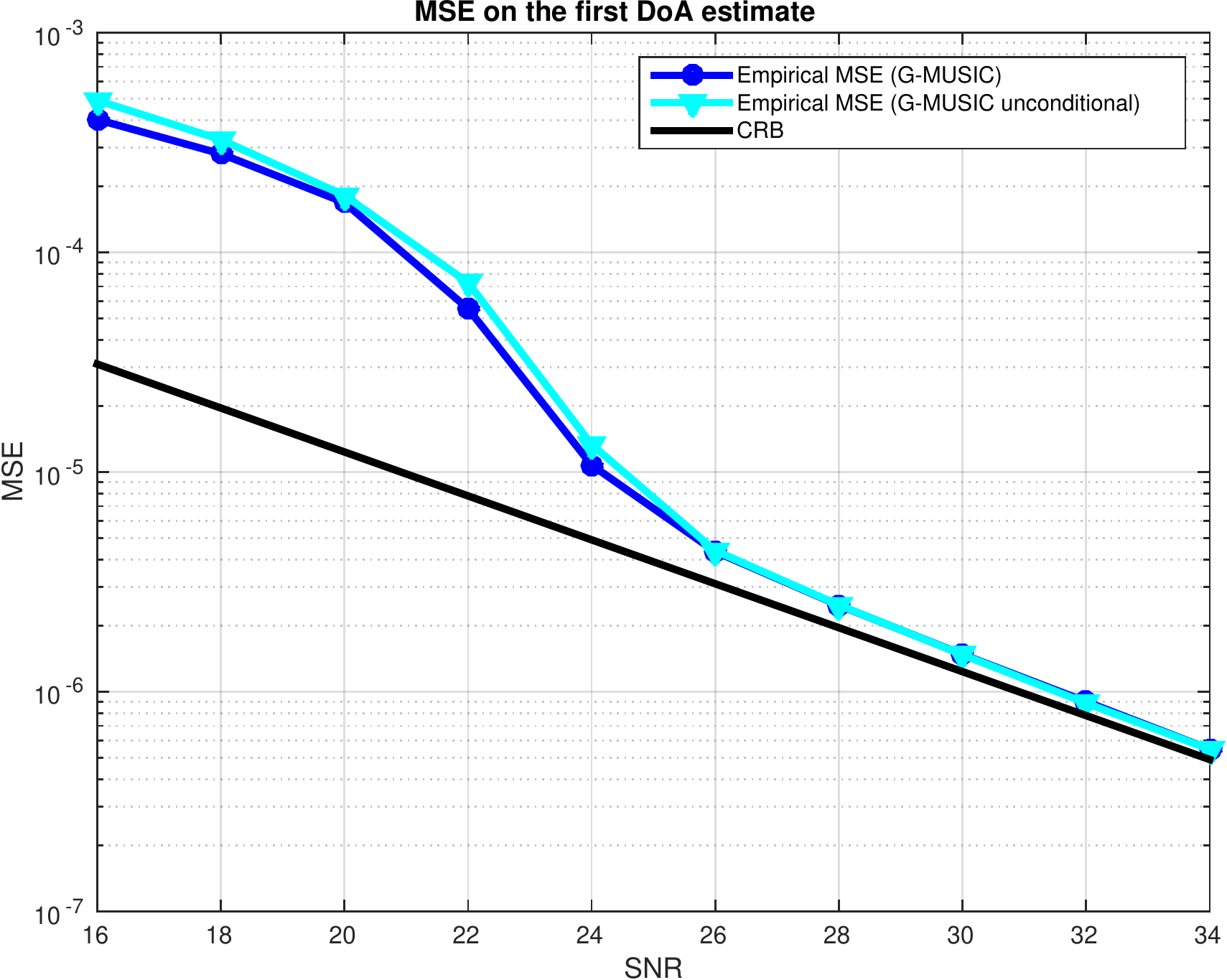}
	\caption{Empirical MSE of $\hat{\theta}_{1,N}$ (conditional and unconditional G-MUSIC), for closely spaced DoA and uncorrelated sources, versus SNR (dB)}
	\label{figure:mse_GMUSIC_cond_uncond_close}
\end{figure}

	\section{Conclusion}
	
In this paper, we have adressed a statistical comparison of the performance of the G-MUSIC and MUSIC method for DoA estimation, in an asymptotic regime where 
the number of sensors $M$ and the number of samples $N$ both converge to infinity at the same rate. Two scenarios were considered.
In a first scenario where the source DoA are widely spaced (i.e. fixed with respect to $M$,$N$), we have proved that both MUSIC and G-MUSIC exhibit the same asymptotic performance 
in terms of consistency and asymptotic Gaussianity, 
In a second scenario where the source DoA are closely spaced (i.e. with an angular separation of the order of a beamwidth $\Ocal(M^{-1})$), we have proved that G-MUSIC is still $N$-consistent, while MUSIC is no more able to separate the DoA. The asymptotic Gaussianity of G-MUSIC and the identification of its asymptotic MSE provided in this paper hold under general conditions,
including correlation between sources, and extend previous existing results which were only valid for asymptotically uncorrelated source signals.

\appendices

\section{Comparison between the unconditional subspace estimator of \cite{mestre2008modified} with the estimator (\ref{eq:noise_subspace_estimator}).}
\label{appendix:connections-mestre}

In this section, we establish (\ref{eq:cv-spike-unconditional}) when the source signals
are deterministic signals satisfying Assumption \ref{assumption:spiked}. For this, 
we first recall that the unconditional estimator $\hat{\eta}_{N,u}$ proposed in 
(\ref{eq:noise_subspace_estimator}) can be written as
\begin{align}	
	\hat{\eta}_{N,u} 
	= 
	\frac{1}{2 i \pi} \int_{\partial \Rcal} \d_{1,N}\left( \Sigmabs_N \Sigmabs_N^{*} - z {\bf I} \right)^{-1} \d_{2,N} \ \hat{g}_N(z) \drm z,
	\label{eq:expre-eta-unconditional}
\end{align}
where $\hat{g}_N(z)$ is defined by
\begin{align}
	\hat{g}_N(z) = \frac{(1 - c_N) + c z^{2} \hat{m}_N^{'}(z)}{(1 - c_N) - c_N z \hat{m}_N(z)},
	\notag
\end{align}
and where $\partial \Rcal$ is a contour enclosing the interval $[\sigma^{2}(1 - \sqrt{c})^{2} - \epsilon, \sigma^{2}(1 + \sqrt{c})^{2} + \epsilon]$, $\epsilon$ being chosen in such a way that $\sigma^{2}(1 + \sqrt{c})^{2} + \epsilon < \lambda_K$, and  
where we recall that $\hat{m}_N(z) = \frac{1}{M} \mathrm{tr} \left( \Sigmabs_N \Sigmabs_N^{*} - z {\bf I} \right)^{-1}$ ($\hat{m}_N^{'}(z)$ represents the derivative of $\hat{m}_N(z)$ w.r.t. $z$). 
Using condition (\ref{eq:condition-cN}), it is easily seen that $\hat{m}_N(z) = m(z) + o_{\Pbb}\left(\frac{1}{\sqrt{N}}\right)$, that $\hat{m}^{'}_N(z) = m^{'}(z) + o_{\Pbb}\left(\frac{1}{\sqrt{N}}\right)$, and using an additional argument such as in \cite[Sec. 4.1]{arxiv_clt}, one can show
\begin{align}
	&\hat{\eta}_{N,c} 
	 =
	 \notag\\  
	&\frac{1}{2 i \pi} \int_{\partial \Rcal} \d_{1,N}^* \left( \Sigmabs_N \Sigmabs_N^{*} - z {\bf I} \right)^{-1} \d_{2,N} 
	\ g(z) \drm z 
	+ o_{\Pbb}\left(\frac{1}{\sqrt{N}}\right),
	\notag
\end{align}
where 
\begin{align}
	g(z) = 	\frac{(1 - c) + c z^{2} m^{'}(z)}{(1 - c) - c z m(z)}.
  	\notag
\end{align}
It is established in \cite{arxiv_clt} that 
\begin{align}
	&\hat{\eta}_N = 
	\notag\\	
	&\quad\frac{1}{2 i \pi} \int_{\partial \Rcal} 
	\d_{1,N}^* \left( \Sigmabs_N \Sigmabs_N^{*} - z {\bf I} \right)^{-1} \d_{2,N} \frac{w^{'}(z)}{1 + \sigma^{2} c m(z)}  \drm z
	\notag\\	
	&\quad + o_{\Pbb}\left(\frac{1}{\sqrt{N}}\right).
	\notag
\end{align}
The conclusion follows from the identity
\begin{align}
	g(z) = \frac{w^{'}(z)}{1 + \sigma^{2} c m(z)}.
	\notag
\end{align}
which can be checked easily.

\section{Proof of Theorem \ref{theorem:consistency_MUSIC_GMUSIC_low}}
\label{appendix:proof_consistency_low}

The consistency of G-MUSIC is already established in \cite{hachem2012large}, and we prove hereafter the consistency of MUSIC.

From Theorem \ref{theorem:spiked_eigv}, we have for all $\theta \in [-\pi,\pi]$,
\begin{align}
	\hat{\eta}_N^{(t)}(\theta) = 	\eta_N^{(t)}(\theta) + o(1),
	\label{eq:conv_music}
\end{align}
with probability one, where 
\begin{align}
	\eta_N^{(t)}(\theta) = 1 - \a(\theta)^*\U_N \D \U_N^* \a(\theta),
	\notag
\end{align}
with $\U_N = [\u_{1,N},\ldots,\u_{K,N}]$ and $\D = \diag(d_1,\ldots,d_K)$ with
\begin{align}
	d_{k} = \frac{\lambda_{k}^2-\sigma^4 c}{\lambda_{k} \left(\lambda_{k}+\sigma^2 c\right)}.
	\notag
\end{align}
It is easily seen that $d_1 > d_2 > \ldots > d_K$. 
Applying verbatim the steps of \cite[Sec. 3.3]{hachem2012large}, \eqref{eq:conv_music} can be strengthened to
\begin{align}
	\sup_{\theta \in [-\pi,\pi]} \left|\hat{\eta}_N^{(t)}(\theta) - \eta_N^{(t)}(\theta)\right| \xrightarrow[N\to\infty]{a.s.} 0.
	\label{eq:uniform_conv_MUSIC_loc}
\end{align}
Using the fact that $\U_N$ and $\A$ share the same image, we have
\begin{align}
	\U_N = \A \left(\A^*\A\right)^{-1/2} \V_N,
	\notag
\end{align}
where $\V_N$ is a $K \times K$ unitary matrix given by $\V_N = (\A^*\A)^{-1/2}\A^*\U_N$. 
Since $\theta_1,\ldots,\theta_K$ are fixed with respect to $N$, we also have $\A^*\A \to \I_K$ as $N \to \infty$.
It is clear that if $l \neq k$, then it holds that 
$$
\sup_{\theta \in \Ical_k} |\a(\theta)^*\a(\theta_l)| \rightarrow 0
$$ 
From this, we obtain immediately that 
$$
\sup_{\theta \in \Ical_k} \| \a(\theta)^* \U_N  - \a(\theta)^* \a(\theta_k) \e_k^* \V_N \| \rightarrow 0
$$
and that, for all $k=1,\ldots,K$,
\begin{align}
	\sup_{\theta \in \Ical_k} 
	\left|
		\eta_N^{(t)}(\theta) 
		- \left(1- \left|\a(\theta)^*\a(\theta_k)\right|^2 \e_k^* \V_N\D\V_N^* \e_k\right)
	\right|
	\xrightarrow[N \to \infty]{} 0.
	\label{eq:formula_asymp_MUSIC_loc}
\end{align}
Moreover, it holds that 
\begin{align}
	\sup_{\theta \not\in \bigcup_k \Ical_k} 
	\eta_N^{(t)}(\theta) \xrightarrow[N\to\infty]{} 1.
	\label{eq:formula_asymp_MUSIC_loc2}
\end{align}
We claim that
\begin{align}
	\hat{\theta}_{k,N}^{(t)} \xrightarrow[N \to \infty]{a.s.} \theta_k.
	\label{eq:simple_consistency_MUSIC}
\end{align} 
To verify this, we first remark that (\ref{eq:uniform_conv_MUSIC_loc}) and
(\ref{eq:formula_asymp_MUSIC_loc}) 
used at point $\theta = \hat{\theta}_{k,N}^{(t)}$ lead to
$$
\hat{\eta}_N^{(t)}(\hat{\theta}_{k,N}^{(t)}) -  \left(1- \left|\a(\hat{\theta}_{k,N}^{(t)})^*\a(\theta_k)\right|^2 \e_k^* \V_N\D\V_N^* \e_k\right) \rightarrow 0
$$
almost surely. As function of $\theta$, $\left|\a(\theta)^*\a(\theta_k)\right|^2$ has a unique global maximum at $\theta_k$ and that $\e_k^* \V_N\D\V_N^* \e_k$ is lower bounded by $d_K > 0$ so we deduce that (\ref{eq:simple_consistency_MUSIC}) holds. Otherwise, one could extract 
from sequence $(\hat{\theta}_{k,N}^{(t)})_{N \geq 1}$ a 
subsequence $\hat{\theta}_{k,\phi(N)}^{(t)}$ converging towards a point $\theta_* \neq \theta_k$ almost surely. 
This would imply that 
$$
\hat{\eta}_{\phi(N)}^{(t)}(\hat{\theta}_{k,\phi(N)}^{(t)}) - \hat{\eta}_{\phi(N)}^{(t)}(\theta_*) \rightarrow 0
$$
and that 
$$
\hat{\eta}_{\phi(N)}^{(t)}(\hat{\theta}_{k,\phi(N)}^{(t)}) -  \left(1- \left|\a(\theta_*)^*\a(\theta_k)\right|^2 \e_k^* \V_N\D\V_N^* \e_k\right) \rightarrow 0
$$
However, (\ref{eq:uniform_conv_MUSIC_loc}) and
(\ref{eq:formula_asymp_MUSIC_loc}) used at point $\theta = \theta_k$ imply that
$$
\hat{\eta}_N^{(t)}(\theta_k) -  \left(1-  \e_k^* \V_N\D\V_N^* \e_k\right) \rightarrow 0.
$$
Therefore, for $\epsilon > 0$ small enough, it holds that 
$$
\hat{\eta}_{\phi(N)}^{(t)}(\theta_k) < \hat{\eta}_{\phi(N)}^{(t)}(\hat{\theta}_{k,\phi(N)}^{(t)}) - \epsilon
$$
for each $N$ large enough, a contradiction.

We now improve \eqref{eq:simple_consistency_MUSIC} by showing that
\begin{align}
	N\left(\hat{\theta}_{k,N}^{(t)} - \theta_k\right) \xrightarrow[N \to \infty]{a.s.} 0,
	\label{eq:strong_consistency_MUSIC}
\end{align}
and for that purpose we follow the approach of \cite{hannan1973estimation} (also used in \cite[Sec. 4]{hachem2012large}).
By definition, we have
\begin{align}
	\eta^{(t)}_N(\hat{\theta}_{k,N}^{(t)})
	&\leq 
	\left|\eta^{(t)}_N(\hat{\theta}_{k,N}^{(t)}) - \hat{\eta}^{(t)}_N(\hat{\theta}_{k,N}^{(t)})\right| + \hat{\eta}^{(t)}_N(\hat{\theta}_{k,N}^{(t)})
	\notag\\
	&\leq
	\sup_{\theta\in[-\pi,\pi]}\left|\eta^{(t)}_N(\theta) - \hat{\eta}^{(t)}_N(\theta)\right| + \hat{\eta}^{(t)}_N(\theta_k),
	\notag
\end{align}
and from \eqref{eq:uniform_conv_MUSIC_loc} and (\ref{eq:formula_asymp_MUSIC_loc}) used at point $\theta = \theta_k$, we obtain
\begin{align}
	\limsup_{N \to \infty} \eta^{(t)}_N(\hat{\theta}_{k,N}^{(t)})
	&\leq 
	\limsup_{N \to \infty} \hat{\eta}_N^{(t)}(\theta_k)
	\notag\\
	&= 1 - \liminf_{N \to \infty} \e_k^* \V_N\D\V_N^* \e_k
	\notag\\
	&< 1,
	\label{eq:MUSIC_loc_bound}
\end{align}
where the last inequality comes from the fact that $\e_k^* \V_N\D\V_N^* \e_k \geq d_K > 0$. 
Assume that the sequence $N\left(\hat{\theta}_{k,N}^{(t)} - \theta_k\right)$ is not bounded. Then we can extract a subsequence 
$\varphi(N)\left(\hat{\theta}_{k,\varphi(N)}^{(t)} - \theta_k\right)$ such that 
\begin{align}
	\varphi(N)\left|\hat{\theta}_{k,\varphi(N)}^{(t)} - \theta_k\right| \xrightarrow[N \to \infty]{} \infty.
	\notag
\end{align}
This implies that $\a(\hat{\theta}_{k,\phi(N)}^{(t)})^*\a(\theta_k) \rightarrow 0$ and that, by 
(\ref{eq:formula_asymp_MUSIC_loc}),  
$\eta_{\varphi(N)}^{(t)}(\hat{\theta}_{k,\varphi(N)}^{(t)}) \to 1$, a contradiction with \eqref{eq:MUSIC_loc_bound}.
Since $N\left(\hat{\theta}_{k,N}^{(t)} - \theta_k\right)$ is bounded, we can extract a subsequence such that
\begin{align}
	\varphi(N)\left|\hat{\theta}_{k,\varphi(N)}^{(t)} - \theta_k\right| 
	\xrightarrow[N \to \infty]{} \beta,
	\notag
\end{align}
with $\beta$ assumed to lie in $[-\pi,\pi]$ without loss of generality. 
If $\beta \neq 0$, then \eqref{eq:formula_asymp_MUSIC_loc} gives
\begin{align}
	\eta_{\varphi(N)}^{(t)}(\hat{\theta}_{k,\varphi(N)}^{(t)}) 
	=
	1 - \e_k^* \V_N\D\V_N^* \e_k \sinc\left(\beta c/2\right) + o(1) 
	\notag
\end{align}
with probability one.
Since, in that case,
\begin{align}
	\limsup_{N \to \infty} \eta^{(t)}_N(\hat{\theta}_{k,N}^{(t)})
	> 1 - \liminf_{N\to\infty} \e_k^* \V_N\D\V_N^* \e_k,
	\notag
\end{align}
this contradicts \eqref{eq:MUSIC_loc_bound} again.

Therefore all converging subsequences of the bounded sequence $N\left(\hat{\theta}_{k,N}^{(t)} - \theta_k\right)$ have the same limit, which is $0$, 
and thus the whole sequence converges itself to $0$, which finally shows \eqref{eq:strong_consistency_MUSIC}.

\section{Proof of Theorem \ref{theorem:consistency_close}}
\label{appendix:proof_consistency_close}

Recall from \eqref{eq:uniform_consistency} that we have $\sup_{\theta}\left|\hat{\eta}_N(\theta) - \eta_N(\theta)\right|\to_N 0$ with probability one, with 
\begin{align}
	\eta_N(\theta)=\a(\theta)^*\Pibs_N\a(\theta) = 1 - \a(\theta)^*\A\left(\A^*\A\right)^{-1}\A^*\a(\theta).
	\notag
\end{align}
From Assumption \ref{assumption:close_DoA}, we have $\sup_{\theta}\left|\eta_N(\theta) - \tilde{\eta}_N(\theta)\right|\to_N 0$ where
\begin{align}
	\tilde{\eta}_N(\theta) = 
	1 - \frac{1}{1-\sinc\left(\frac{\alpha c}{2}\right)^2}\a(\theta)^* \A \T \A^* \a(\theta),
	\label{eq:eta_tilde}
\end{align}
where
\begin{align}
	\T = 
	\begin{bmatrix}
		1 & -\erm^{\irm \alpha c/2} \sinc\left(\frac{\alpha c}{2}\right)
		\\
		-\erm^{-\irm \alpha c/2} \sinc\left(\frac{\alpha c}{2}\right) & 1
	\end{bmatrix}
	.
	\notag
\end{align}
Note that
\begin{align}
	\limsup_{N\to\infty} \left|\hat{\eta}_N(\hat{\theta}_{1,N})\right| \leq \limsup_{N\to\infty} \left|\hat{\eta}_N(\theta_{1,N})\right| = 0.
	\label{eq:MUSIC_loc_bound2}
\end{align}
We next rely on the following lemma.
\begin{lemma}
	\label{lemma:kappa}
	If $(\psi_N)$ is a sequence of $[-\pi,\pi]$ such that $N\left|\psi_N - \theta_{1,N}\right| \to \infty$, then
	\begin{align}
		\eta_N(\psi_N) \xrightarrow[N\to\infty]{} 1.
		\notag
	\end{align}
	Moreover, for any compact $\Kcal \subset \Rbb$,	
	\begin{align}
		\sup_{\beta \in \Kcal} 
		\left|\eta_N\left(\theta_{1,N} + \frac{\beta}{N}\right) - \left(1-\kappa(\beta)\right)\right|
		\xrightarrow[N\to\infty]{} 0,
		\notag
	\end{align}
	where 
	\begin{align}
		\begin{split}
			\kappa(\beta)  = 
			\frac{1}{1 - \sinc\left(\alpha c /2\right)^2}
			\Biggl(
				\sinc\left(\beta c / 2 \right)^2 
				+ \sinc\left((\beta-\alpha)c/2\right)^2 
				\notag\\
				- 2 \sinc\left(\alpha c /2\right) \sinc\left(\beta c /2\right) \sinc\left((\beta-\alpha) c /2\right)
			\Biggr)
			\notag
		\end{split}
	\end{align}
	is such that $\kappa(\beta) \leq 1$ with equality if and only if $\beta=0$ or $\beta=\alpha$.
\end{lemma}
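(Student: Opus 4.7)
The plan is to use the uniform approximation $\sup_{\theta}|\eta_N(\theta) - \tilde\eta_N(\theta)| \to 0$ established from Assumption \ref{assumption:close_DoA}, which reduces everything to the explicit quadratic form $\tilde\eta_N$ in \eqref{eq:eta_tilde}. The behaviour of $\A^{*}\a(\theta)$ is governed entirely by the closed-form Dirichlet kernel identity
\[
\a(\theta)^{*}\a(\theta') = \frac{\erm^{\irm(M-1)(\theta'-\theta)/2}\sin(M(\theta'-\theta)/2)}{M\sin((\theta'-\theta)/2)}.
\]

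For the first assertion, the hypothesis $N|\psi_N - \theta_{1,N}| \to \infty$ combined with $\theta_{2,N} - \theta_{1,N} = \alpha/N$ yields $N|\psi_N - \theta_{j,N}| \to \infty$ for $j \in \{1,2\}$; since $M/N \to c > 0$, also $M|\psi_N - \theta_{j,N}| \to \infty$. Combining the bound $|\a(\psi_N)^{*}\a(\theta_{j,N})| \leq 1/(M|\sin((\psi_N - \theta_{j,N})/2)|)$ with $|\sin(y/2)| \geq |y|/\pi$ for $|y|\leq \pi$ (and the $2\pi$-periodicity of $\a$ to handle $|y|$ close to $2\pi$) gives $\A^{*}\a(\psi_N) \to 0$, and hence $\tilde\eta_N(\psi_N) \to 1$. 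For the uniform convergence on a compact $\Kcal$, I would substitute $\theta = \theta_{1,N} + \beta/N$ and pass to the limit in the Dirichlet formula using $(M-1)/(2N) \to c/2$ and $M\sin(y/(2N)) \to cy/2$, to obtain, uniformly in $\beta \in \Kcal$,
\[
\a(\theta_{1,N})^{*}\a(\theta) \to \erm^{\irm c\beta/2}\sinc(c\beta/2),
\]
\[
\a(\theta_{2,N})^{*}\a(\theta) \to \erm^{-\irm c(\alpha-\beta)/2}\sinc(c(\alpha-\beta)/2).
\]
Inserting this limit of $\A^{*}\a(\theta)$ into the quadratic form $\a(\theta)^{*}\A\T\A^{*}\a(\theta)$ and observing that the complex exponentials cancel pairwise against the $\erm^{\pm\irm c\alpha/2}$ factors inside $\T$, one obtains the real quantity $\sinc(c\beta/2)^{2} + \sinc(c(\alpha-\beta)/2)^{2} - 2\sinc(c\alpha/2)\sinc(c\beta/2)\sinc(c(\alpha-\beta)/2)$; dividing by $1 - \sinc(c\alpha/2)^{2}$ then yields the stated formula $\tilde\eta_N(\theta_{1,N}+\beta/N) \to 1 - \kappa(\beta)$.

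The most delicate part is showing $\kappa(\beta) \leq 1$ with equality iff $\beta \in \{0,\alpha\}$. The inequality follows because the $3\times 3$ Hermitian matrix $\M(\beta)$ obtained as the limit of the Gram matrix of $\{\a(\theta_{1,N}+\beta/N), \a(\theta_{1,N}), \a(\theta_{2,N})\}$ is positive semidefinite (as a limit of PSD matrices), and a Schur-complement calculation gives $\det \M(\beta) = (1 - \sinc(c\alpha/2)^{2})(1 - \kappa(\beta)) \geq 0$. Rather than attacking the transcendental equation $\det \M(\beta) = 0$ in three sinc functions directly, I identify $\M(\beta)$ with the Gram matrix in $L^{2}([0,1])$ of the three exponentials $t \mapsto \erm^{\irm tc\beta}$, $t \mapsto 1$, $t \mapsto \erm^{\irm tc\alpha}$, by matching the limit of every entry through $\int_{0}^{1}\erm^{\irm tc(\xi_{2}-\xi_{1})}\drm t = \erm^{\irm c(\xi_{2}-\xi_{1})/2}\sinc(c(\xi_{2}-\xi_{1})/2)$. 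Since exponentials with distinct real frequencies are linearly independent in $L^{2}([0,1])$ and $c>0$, $\det \M(\beta) = 0$ forces two of the frequencies $c\beta$, $0$, $c\alpha$ to coincide, giving $\beta \in \{0,\alpha\}$; the converse is immediate because at $\beta \in \{0,\alpha\}$ the limit vector $\A^{*}\a(\theta)$ coincides with a column of the limit of $\A^{*}\A$. This $L^{2}$ Gram-matrix interpretation is the main conceptual ingredient of the proof and bypasses the otherwise awkward algebra of sinc identities.
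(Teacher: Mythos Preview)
Your proposal is correct and follows essentially the same route as the paper. In particular, the paper also disposes of the two convergence statements by appealing directly to the explicit form \eqref{eq:eta_tilde}, and its treatment of the inequality $\kappa(\beta)\leq 1$ with the equality characterization rests on exactly the same $L^{2}_{\Cbb}([0,1])$ identification you propose: it sets $x_1(t)=1$, $x_2(t)=\erm^{\irm\alpha c t}$, $y(t)=\erm^{\irm\beta c t}$ and observes that $\kappa(\beta)$ equals the squared norm of the orthogonal projection of the unit vector $y$ onto $\mathrm{span}\{x_1,x_2\}$, so that $\kappa(\beta)=1$ iff $y\in\mathrm{span}\{x_1,x_2\}$, i.e.\ $\beta\in\{0,\alpha\}$. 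Your Schur-complement/determinant phrasing on the $3\times 3$ Gram matrix is just the algebraic counterpart of this projection argument (since $\det\M(\beta)/(1-\sinc(c\alpha/2)^2)=1-\kappa(\beta)$), so the two arguments are equivalent.
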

\begin{proof}
	The two convergences can be easily obtained from \eqref{eq:eta_tilde}. 
	It thus remains to establish that $\kappa(\beta) \leq 1$ 
	with equality if and only if $\beta = 0$ or $\beta = \alpha$.
	Consider the Hilbert space $\Lcal_{\Cbb}^2\left([0,1]\right)$ endowed with the usual
	scalar product $<z_1,z_2>=\int_{0}^1 z_1(t)z_2(t)^* \drm t$, and let
	$x_1,x_2,y \in \Lcal_{\Cbb}^2\left([0,1]\right)$ defined by
	\begin{align}
		x_1(t) = 1, \quad x_2(t) = \erm^{\irm \alpha c t} 
		\quad\text{and}\quad 
		y(t) = \erm^{\irm \beta c t}.
		\notag
	\end{align}
	Straightforward computations show that $\kappa(\beta)$ coincides with the squared norm
	of the orthogonal projection of $y$ onto $\mathrm{span}\{x_1,x_2\}$. 
	Since $y$ is unit-norm, it is clear that $\kappa(\beta) \leq 1$, 
	and the equality holds if and only if $y \in \mathrm{span}\{x_1,x_2\}$, which is
	obviously the case 	if and only if $\beta=0$ or $\beta=\alpha$.
\end{proof}
From Lemma \ref{lemma:kappa}, the function $\kappa$ admits a global maximum, equal to $1$, at the unique points $0$ and $\alpha$ and
\begin{align}
		\sup_{\beta \in [-\frac{\alpha}{2}, \frac{3\alpha}{2}]} 
		\left|\hat{\eta}_N\left(\theta_{1,N} + \frac{\beta}{N}\right) - \left(1-\kappa(\beta)\right)\right|
		\xrightarrow[N\to\infty]{} 0.
		\notag
\end{align}
Thus, 
\begin{align}
	\hat{\eta}_N\left(\hat{\theta}_{1,N}\right) = 1 - \kappa\left(N(\hat{\theta}_{1,N} - \theta_{1,N})\right) + o(1),
	\notag
\end{align}
and since $\hat{\eta}_N\left(\hat{\theta}_{1,N}\right) \to 0$ a.s., we deduce that $N(\hat{\theta}_{1,N} - \theta_{1,N}) \to 0$ a.s.
We obtain similarly the same results for $\hat{\theta}_{2,N}$.

We now consider the consistency of the traditional MUSIC estimates.
From \eqref{eq:uniform_conv_MUSIC_loc},
\begin{align}
	\sup_{\theta \in [-\pi,\pi]} \left|\hat{\eta}_N^{(t)}(\theta) - \eta_N^{(t)}(\theta)\right| \xrightarrow[N \to \infty]{a.s.} 0,
	\notag
\end{align}
where $\eta_N^{(t)}(\theta) = 1- \a(\theta)^* \U_N \D \U_N^* \a(\theta)$.
From Assumption \ref{assumption:close_DoA}, $N^{-1}\S_N\S_N^{*} \to \I$, 
and using the fact that
\begin{align}
	\A^*\A \xrightarrow[N \to \infty]{} 
	\begin{bmatrix}
		1 & \erm^{\irm \alpha c /2} \sinc\left(\alpha c /2\right)
		\\
		\erm^{- \irm \alpha c /2} \sinc\left(\alpha c /2\right) & 1
	\end{bmatrix},
	\notag 
\end{align}
together with a singular value decomposition of $\A$, straightforward computations yield 
\begin{align}
	\sup_{\theta} \left|\eta_N^{(t)}(\theta) - \tilde{\eta}_N^{(t)}(\theta)\right| \xrightarrow[N \to \infty]{} 0, 
	\notag
\end{align}	
where $\tilde{\eta}_N^{(t)}(\theta) = 1 - \a(\theta)^* \A\tilde{\V} \tilde{\D} \tilde{\V}^* \A^* \a(\theta)$
and where $\tilde{\V}$ is $2 \times 2$ unitary matrix given by
\begin{align}
	\tilde{\V} =
	\frac{1}{\sqrt{2}}
	\begin{bmatrix}
		1 & \erm^{\irm \alpha c/2}
		\\
		-\erm^{-\irm \alpha c/2} & 1
	\end{bmatrix},
	\notag
\end{align}
and $\tilde{\D}$ is a $2 \times 2$ diagonal matrix defined by
\begin{align}
	\tilde{\D} =
	\begin{bmatrix}
		\frac{d_{1}(\alpha)}{1 - \sinc(\alpha c/2)} & 0
		\\
		0 & \frac{d_{2}(\alpha)}{1 + \sinc(\alpha c/2)}
	\end{bmatrix}
	\notag
\end{align}
with
\begin{align}
	d_1(\alpha) = 
	\frac{\left(1 - |\sinc(\alpha c /2)|\right)^2 - \sigma^4 c }
	{\left(1 - |\sinc(\alpha c/2)|\right) \left(1-|\sinc(\alpha c /2)| + \sigma^2 c\right)}
	\notag\\
	d_2(\alpha) = 
	\frac{\left(1 + |\sinc(\alpha c /2)|\right)^2 - \sigma^4 c }
	{\left(1 + |\sinc(\alpha c/2)|\right) \left(1 + |\sinc(\alpha c /2)| 
	+ \sigma^2 c\right)}.
	\notag
\end{align}
We now use the following result, whose proof is similar to the one of Lemma \ref{lemma:kappa}.
\begin{lemma}
	\label{lemma:kappat}
	If $(\psi_N)$ is a sequence of $[-\pi,\pi]$ such that $N\left|\psi_N - \theta_{1,N}\right| \to \infty$, then
	\begin{align}
		\eta_N^{(t)}(\psi_N) \xrightarrow[N\to\infty]{} 1.
		\notag
	\end{align}
	Moreover, for any compact $\Kcal \subset \Rbb$,	
	\begin{align}
		\sup_{\beta \in \Kcal} 
		\left|\eta_N^{(t)}\left(\theta_1 + \frac{\beta}{N}\right) - \left(1-\kappa^{(t)}(\beta)\right)\right|
		\xrightarrow[N\to\infty]{} 0,
		\notag
	\end{align}
	where 
	\begin{align}
		&\kappa^{(t)}(\beta) = 
		\notag\\
		&\left(\sinc(\beta c /2) - \sinc\left((\beta-\alpha)c/2\right)\right)^2 \frac{d_1(\alpha)}{2\left(1-|\sinc(\alpha c / 2)|\right)}
		\notag\\
		&+\left(\sinc(\beta c /2) + \sinc\left((\beta-\alpha)c/2\right)\right)^2 \frac{d_2(\alpha)}{2\left(1+|\sinc(\alpha c / 2)|\right)}.
		\notag
	\end{align}
\end{lemma}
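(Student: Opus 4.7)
The plan is to leverage the uniform approximation $\sup_{\theta}|\eta_N^{(t)}(\theta)-\tilde{\eta}_N^{(t)}(\theta)|\to 0$ already established in the body of the proof of Theorem \ref{theorem:consistency_close}, and then exploit the clean algebraic structure $\tilde{\eta}_N^{(t)}(\theta)=1-\|\tilde{\D}^{1/2}\tilde{\V}^*\A^*\a(\theta)\|^2$. Because $\tilde{\V}$ and $\tilde{\D}$ depend only on $\alpha$ and $\sigma^2,c$, all the $N$-dependence is localized in the $2$-vector $\A^*\a(\theta)$, whose components are $\frac{1}{M}\sum_{m=0}^{M-1}e^{im(\theta-\theta_{k,N})}$ for $k=1,2$. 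The proof mirrors the one of Lemma \ref{lemma:kappa} but with the extra diagonal weight $\tilde{\D}$ in place of the identity.

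For the uniform convergence on a compact $\Kcal$, I would substitute $\theta=\theta_{1,N}+\beta/N$ and use the closed-form geometric-sum identity
\begin{align}
\a(\theta_{i,N})^*\a\!\left(\theta_{1,N}+\tfrac{\beta}{N}\right)=\tfrac{1}{M}\tfrac{e^{iMx/N}-1}{e^{ix/N}-1},\qquad x=\beta-(i-1)\alpha,\notag
\end{align}
and a standard Riemann-sum/Taylor estimate to obtain the uniform limit $e^{icx/2}\sinc(cx/2)$ in $\beta\in\Kcal$. Premultiplying the resulting limiting $2$-vector by $\tilde{\V}^*=\tfrac{1}{\sqrt 2}\bigl(\begin{smallmatrix}1 & -e^{i\alpha c/2}\\ e^{-i\alpha c/2} & 1\end{smallmatrix}\bigr)$ produces, after cancellation of the phases $e^{i\alpha c/2}$, the two entries $\tfrac{e^{ic\beta/2}}{\sqrt 2}[\sinc(c\beta/2)-\sinc(c(\beta-\alpha)/2)]$ and $\tfrac{e^{ic(\beta-\alpha)/2}}{\sqrt 2}[\sinc(c\beta/2)+\sinc(c(\beta-\alpha)/2)]$. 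Squaring these, weighting by the diagonal entries $d_1(\alpha)/(1-|\sinc(\alpha c/2)|)$ and $d_2(\alpha)/(1+|\sinc(\alpha c/2)|)$ of $\tilde{\D}$ and summing yields exactly $\kappa^{(t)}(\beta)$, which gives the second convergence.

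For the first statement, I would show that $\|\A^*\a(\psi_N)\|\to 0$ when $N|\psi_N-\theta_{1,N}|\to\infty$; since $\theta_{2,N}-\theta_{1,N}=\alpha/N$, we also have $N|\psi_N-\theta_{2,N}|\to\infty$. Using the Dirichlet-kernel bound
\begin{align}
|\a(\theta_{k,N})^*\a(\psi_N)|=\tfrac{1}{M}\left|\tfrac{\sin(M(\psi_N-\theta_{k,N})/2)}{\sin((\psi_N-\theta_{k,N})/2)}\right|,\notag
\end{align}
I split according to whether $\psi_N-\theta_{k,N}$ stays bounded away from $0$ modulo $2\pi$ (in which case the denominator is bounded below and the expression is $O(1/M)$) or tends to $0$ (in which case $|\sin((\psi_N-\theta_{k,N})/2)|\asymp|\psi_N-\theta_{k,N}|/2$, giving a bound of order $1/(M|\psi_N-\theta_{k,N}|)=O(1/(N|\psi_N-\theta_{k,N}|))\to 0$). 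Hence $\A^*\a(\psi_N)\to 0$, and therefore $\tilde{\eta}_N^{(t)}(\psi_N)\to 1$, which combined with the uniform approximation yields $\eta_N^{(t)}(\psi_N)\to 1$.

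The main obstacle is purely bookkeeping: checking that the phase factors $e^{\pm i\alpha c/2}$ in $\tilde{\V}$ correctly absorb those in $\A^*\a(\theta_{1,N}+\beta/N)$ so that the limiting weights are real and match the form of $\kappa^{(t)}(\beta)$, and confirming that the Riemann-sum convergence is indeed uniform on compact $\Kcal$—both points are routine but must be executed carefully. No Hilbert-space projection identity (as used in Lemma \ref{lemma:kappa}) is needed here since $\kappa^{(t)}$ is not a projection norm but a weighted sum; correspondingly, the lemma makes no extremality claim and only produces the explicit asymptotic.
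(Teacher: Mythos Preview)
Your proposal is correct and is essentially the same approach the paper has in mind: the paper gives no independent proof of this lemma, merely stating that ``the proof is similar to the one of Lemma~\ref{lemma:kappa}'', and in Lemma~\ref{lemma:kappa} the two convergences are declared to follow directly from the explicit form of $\tilde{\eta}_N$. Your write-up fills in exactly those details---reducing to $\tilde{\eta}_N^{(t)}$ via the uniform approximation established just before the lemma, computing the limit of $\A^*\a(\theta_{1,N}+\beta/N)$ via the Riemann-sum/Dirichlet-kernel argument, and then tracking the phases through $\tilde{\V}^*$ and $\tilde{\D}$---and your observation that no projection/extremality argument is needed here (since the lemma asserts none) is exactly right.
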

Function $\kappa^{(t)}$ does not admit in general a local maximum at $0$ or $\alpha$. In effect, 
it is easy to find values of $\alpha$ for which $\kappa^{(t)}(0)$ and $\kappa^{(t)}(\alpha)$ are not local maxima 
of function $\kappa^{(t)}$. 
For example, if $\alpha = \frac{\pi}{c}$, we easily check that $\kappa^{(t) '}(\beta) \neq 0$
for $\beta=0$ and $\beta=\alpha$. 

From Lemma \ref{lemma:kappat}, we have with probability one,
\begin{align}
	\hat{\eta}_N^{(t)}\left(\hat{\theta}_{1,N}^{(t)}\right) = 1 - \kappa^{(t)}\left(N\left(\hat{\theta}_{1,N}^{(t)}-\theta_{1,N}\right)\right) + o(1).
	\notag
\end{align}
Assume $N\left(\hat{\theta}_{1,N}^{(t)}-\theta_{1,N}\right) \to 0$. Then $\hat{\eta}_N^{(t)}\left(\hat{\theta}_{1,N}^{(t)}\right) \to 1 - \kappa^{(t)}(0)$.
If $0$ and $\alpha$ are not local maxima of $\kappa^{(t)}$, let $\beta \in [-\frac{\alpha - \epsilon}{2},\frac{\alpha - \epsilon}{2}]$ such that $\kappa^{(t)}(0) < \kappa^{(t)}(\beta)$,
and $(\psi_N)$ a sequence such that $N\left(\psi_N - \theta_{1,N}\right) \to \beta$.
Then 
\begin{align}
	\limsup_{N \to \infty} \hat{\eta}_N^{(t)}\left(\hat{\theta}_{1,N}^{(t)}\right)
	&\leq
	\limsup_{N \to \infty} \hat{\eta}_N^{(t)}\left(\psi_N\right)
	\notag\\
	&= 1 - \kappa^{(t)}(\beta) 
	\notag\\
	&< 1 - \kappa^{(t)}(0),
	\notag
\end{align}
which is a contradiction.

	\section{Proof of Theorem \ref{theorem:clt_GMUSIC}}
	\label{appendix:proof_clt_GMUSIC}

To prove Theorem \ref{theorem:clt_GMUSIC}, we will use the classical $\Delta$-method, as in e.g. Hachem et al. \cite{hachem2012subspace}.
	
We consider the settings of Assumption \ref{assumption:fixed_DoA} or 
Assumption \ref{assumption:close_DoA}, and make appear the dependence in $N$ for the DoA 
in both scenarios, which we denote by $\theta_{1,N},\ldots,\theta_{K,N}$. 
Let $k=1,\ldots,K$. Using Theorem \ref{theorem:consistency_MUSIC_GMUSIC_low} under 
Assumption \ref{assumption:fixed_DoA} (respectively Theorem \ref{theorem:consistency_close}
under Assumption \ref{assumption:close_DoA}), as well as a Taylor expansion around $\theta_{k,N}$, we obtain
\begin{align}
	\hat{\eta}'_N\left(\hat{\theta}_{k,N}\right) &= 
	\hat{\eta}'_N\left(\theta_{k,N}\right) 
	+ \left(\hat{\theta}_{k,N} - \theta_{k,N}\right)
	\hat{\eta}^{(2)}_N\left(\theta_{k,N}\right) 
	\notag\\
	&\qquad
	+ \frac{\left(\hat{\theta}_{k,N} - \theta_{k,N}\right)^2}{2}
	\hat{\eta}^{(3)}_N\left(\tilde{\theta}_{k,N}\right),
	\notag
\end{align}
where $\tilde{\theta}_{k,N} \in \left(\min\left\{\hat{\theta}_{k,N},\theta_{k,N}\right\}, \max\left\{\hat{\theta}_{k,N},\theta_{k,N}\right\} \right)$.
Since by definition, $\hat{\eta}'_N\left(\hat{\theta}_{k,N}\right)=0$, we obtain
\begin{align}
	\hat{\theta}_{k,N} - \theta_{k,N} = 
	-\frac{\hat{\eta}'_N\left(\theta_{k,N}\right)}
	{	
		\hat{\eta}^{(2)}_N\left(\theta_{k,N}\right) 
		+ \frac{\hat{\theta}_{k,N} - \theta_{k,N}}{2}
		\hat{\eta}^{(3)}_N\left(\tilde{\theta}_{k,N}\right)
	}.
	\notag
\end{align}
As the $j$-th derivative $\a^{(j)}(\theta)$ satisfies $\sup_{\theta} \left\|\a^{(j)}(\theta)\right\| \sim M^{j}$, we deduce from \cite{hachem2012large} that 
\footnote
{
	The boundedness \eqref{eq:third_derivative} can be obtained using the techniques
	developed in the proof of \cite[Th. 3.1]{hachem2012large} (see also equation (1.3)
	in the introduction part of this reference).
}
\begin{align}
	 \frac{1}{N^3} \hat{\eta}^{(3)}_N\left(\tilde{\theta}_{k,N}\right) = \Ocal(1) 
	 \label{eq:third_derivative}
\end{align}	 
with probability one. Theorem \ref{theorem:consistency_MUSIC_GMUSIC_low} implies
\begin{align}
	\left(\hat{\theta}_{k,N} - \theta_{k,N}\right) \frac{\hat{\eta}^{(3)}_N \left(\tilde{\theta}_{k,N}\right)}{N^2} \xrightarrow[N\to\infty]{a.s.} 0,
	\notag
\end{align}
and we obtain 
\begin{align}
	N^{3/2} \left(\hat{\theta}_{k,N} - \theta_{k,N}\right) = 
	-\frac{\frac{1}{\sqrt{N}}\hat{\eta}'_N\left(\theta_{k,N}\right)}
	{	
		\frac{1}{N^2}\hat{\eta}^{(2)}_N\left(\theta_{k,N}\right) + o_{\Pbb}(1)
	}.
	\label{eq:clt_angles_temp}
\end{align}
By using \eqref{eq:consistency_subspace_estimator} and the fact that $\Pibs_N \a(\theta_k)=\mathbf{0}$, we can write
\begin{align}
		\frac{1}{N^2}\hat{\eta}^{(2)}_N\left(\theta_{k,N}\right) 
		&= 
		2\frac{\a'(\theta_{k,N})^*}{N} \Pibs_N \frac{\a'(\theta_{k,N})}{N} + o_{\Pbb}(1)
		\notag
\end{align}
Under Assumption \ref{assumption:fixed_DoA}, the basic convergences $\A^*\A \to \I$, as $N\to\infty$, as well as
\begin{align}
	\left\|\frac{1}{N} \a'(\theta_{k,N})\right\|^2  
	\xrightarrow[N\to\infty]{} \frac{c^2}{3}
	\notag
\end{align}
and
\begin{align}
	\left|\frac{1}{N} \a'(\theta_{k,N})^* \a(\theta_{\ell,N}) \right| 
	\xrightarrow[N\to\infty]{} \frac{c^2}{4} \delta_{k,\ell}
	\notag
\end{align}
prove that 
\begin{align}
	\frac{\a'(\theta_{k,N})^*}{N} \Pibs_N \frac{\a'(\theta_{k,N})}{N} 
	\xrightarrow[N\to\infty]{} \frac{c^2}{12} > 0.
	\notag
\end{align}
Under Assumption \ref{assumption:close_DoA}, we use the arguments in the proof of Lemma \ref{lemma:kappa}. Indeed, let $x_1,x_2,y \in \Lcal^2_{\Cbb}\left([0,1]\right)$ defined by
\begin{align}
	x_1(t) = 1, x_2(t) = \erm^{\irm \alpha c t} 
	\text{ and }
	y(t) = \irm c t.
	\notag
\end{align}
Then, we observe that  
$\frac{\a'(\theta_{k,N})^*}{N} \Pibs_N \frac{\a'(\theta_{k,N})}{N}$
converges to the squared norm of the orthogonal projection of $y$ 
onto $\mathrm{span}\{x_1,x_2\}^{\perp}$. Thus, we deduce again that 
\begin{align}
	\liminf_{N \to \infty}
	\frac{\a'(\theta_{k,N})^*}{N} \Pibs_N \frac{\a'(\theta_{k,N})}{N} 
	> 0.
	\notag
\end{align}
Consider now the quantity $\gamma_N$ introduced in \eqref{def:Gamma}, where we set
\begin{align}
	\d_{1,N} = \frac{\a'(\theta_{k,N})}{N} 
	\text{ and }
	\d_{2,N} = \a(\theta_{k,N}).
	\notag
\end{align}
Obviously,
\begin{align}
	\gamma_N 
	&\geq 
	\frac{\a'(\theta_{k,N})^*}{N} \Pibs_N \frac{\a'(\theta_{k,N})}{N}
	\sum_{\ell=1}^K 
	\frac{\sigma^2(\lambda_{\ell}+\sigma^2)}{4 (\lambda_{\ell}^2-\sigma^4 c)}
	\left|\a(\theta_{k,N})^*\u_{\ell,N}\right|^2
	\notag \\
	&\geq
	D \frac{\a'(\theta_{k,N})^*}{N} \Pibs_N \frac{\a'(\theta_{k,N})}{N},
	\notag
\end{align}
where 
$D = 
\min 
\left\{
	\frac{\sigma^2(\lambda_{\ell}+\sigma^2)}{4 (\lambda_{\ell}^2-\sigma^4 c)} :
	\ell=1,\ldots,K
\right\} > 0$. Therefore, under Assumption \ref{assumption:fixed_DoA} 
or Assumption \ref{assumption:close_DoA}, we obtain
\begin{align}
	\liminf_{N \to \infty} \gamma_N > 0.
	\notag
\end{align}
Since 
\begin{align}
	&\hat{\eta}'_N\left(\theta_{k,N}\right)=
	\notag\\
	& 2 N \Re\left(\frac{\a'(\theta_{k,N})^*}{N} \left(\I - \sum_{k=1}^K \frac{1}{h\left(\hat{\lambda}_{k,N} \right)} \hat{\u}_{k,N} \hat{\u}_{k,N}^*\right)  \a(\theta_{k,N})\right),
	\notag
\end{align}
Theorem \ref{theorem:clt_subspace} applied with $\d_{1,N} = \frac{\a'(\theta_{k,N})}{N}$, $\d_{2,N} =  \a(\theta_{k,N})$ gives
\begin{align}
	\frac{\hat{\eta}'_N\left(\theta_{k,N}\right)}{2\sqrt{N}\sqrt{\gamma_N}} \xrightarrow[N\to\infty]{\Dcal} \Ncal_{\Rbb}(0,1),
	\notag
\end{align}
Gathering this convergence with \eqref{eq:clt_angles_temp}, we eventually obtain \eqref{eq:CLT_G_MUSIC}.

\bibliographystyle{plain}
\bibliography{tsp2015} 

\end{document}